\title{Generalized Lyapunov conditions for $k$-contraction: analysis and feedback design}
\author{Andreu Cecilia, Samuele Zoboli,  Daniele Astolfi, Ulysse Serres
 and Vincent Andrieu
\thanks{ $^{1}$ A. Cecilia is with 
Universitat Polit\`ecnica de Catalunya, Avinguda Diagonal, 647, 08028 Barcelona, Spain.
 ({\tt \small  
andreu.cecilia@upc.edu}).
}
\thanks{$^2$  S. Zoboli 
 is with 
Universit\'e Paul Sabatier Toulouse 3, LAAS-CNRS, Toulouse, France 
 ({\tt \small  
name.surname@laas.fr}).
}
\thanks{$^3$  D. Astolfi, U. Serres and V. Andrieu are with 
Universit\'e Claude Bernard Lyon 1, CNRS, LAGEPP UMR 5007, 43 boulevard du 11 novembre 1918, F-69100, Villeurbanne, France 
 ({\tt \small  
name.surname@univ-lyon1.fr}).
}
\thanks{The research leading to these results is partially funded by the Spanish Ministry of Universities funded by the European Union - NextGenerationEU (2022UPC-MSC-93823) and ANR via
projects DELICIO (ANR-18-CE40-0010) and ALLIGATOR (ANR-22-CE48-0009-01).}
}
\definecolor{MyGreen}{RGB}{50,140,80}
\theoremstyle{plain}
\newtheorem{theorem}{Theorem}
\newtheorem{proposition}{Proposition}
\newtheorem{lemma}{Lemma}
\newtheorem{remark}{Remark}
\newtheorem{definition}{Definition}
\newtheorem{assumption*}{Assumption}
\let\theoremstyle\@undefined                       
\def\NN{{\mathbb N}}    
\def\RR{{\mathbb R}}    
\def\CC{{\mathbb C}}    
\renewcommand{\le}{\leqslant}
\renewcommand{\leq}{\leqslant}
\renewcommand{\geq}{\geqslant}
\newcommand{\defeq}{\vcentcolon=}
\newcommand{\eqdef}{=\vcentcolon}
\newcommand{\cA}{\mathcal A}
\newcommand{\cH}{\mathcal H}
\newcommand{\cI}{\mathcal I}
\newcommand{\cS}{\mathcal S}
\newcommand{\cV}{\mathcal V}
\newcommand{\transf}{\mathbf{T}}
\newcommand{\transfh}{\mathbf{T_h}}
\newcommand{\transfv}{\mathbf{T_v}}
\newcommand{\difflow}{\tfrac{\partial \psi}{\partial x}}
\newcommand{\dddt}{{\dfrac{{\rm d}}{{\rm d}t}}}
\newcommand{\diffvar}{v}
\newcommand{\diffvarzero}{v_{0}}
\newcommand{\dr}{{{\rm d}r}}
\DeclareMathOperator{\trace}{tr}
\DeclareMathOperator{\card}{card}
\renewcommand{\Re}{\mathfrak{Re}}
\newcommand{\vol}{V^k}
\newcommand{\cine}{\mathrm{In}}
\newcommand{\Imag}{\mathrm{Im}}
\newcommand{\vertmatnonlin}[2]{\Psi(#1,#2)}
\newcommand\smallmat[1]{\left[\begin{smallmatrix} #1 \end{smallmatrix} \right]}
\newcommand{\pderiv}[2]{\frac{\partial #1}{\partial #2}}
\begin{document}

\maketitle

\begin{abstract}
Recently, the concept of $\mathit{k}$-contraction has been introduced as a promising generalization of contraction for dynamical systems. However, the study of $\mathit{k}$-contraction properties has faced significant challenges due to the reliance on complex mathematical objects called matrix compounds. As a result,  related control design methodologies have yet to appear in the literature. In this paper, we overcome existing limitations and propose new sufficient conditions for $\mathit{k}$-contraction which do not require matrix compounds computation.
Notably, these conditions are also necessary in the linear time-invariant framework. Leveraging on these findings, we propose a feedback design methodology for both the linear and the nonlinear scenarios which can be used to enforce $\mathit{k}$-contractivity properties on the closed-loop dynamics. 
\end{abstract}

\begin{IEEEkeywords}
Contraction analysis, Nonlinear systems,  Linear matrix inequalities, Inertia Theorems, Compound matrices.
\end{IEEEkeywords}

\maketitle
\thispagestyle{empty}
\pagestyle{empty}


\section{Introduction}
Contraction theory is an emerging topic that has been used in numerous applications, such as observer design \cite{Sanfelice2012}, multi-agent system synchronization \cite{Russo2009,aminzare2014synchronization,giaccagli2023} and controller design \cite{manchester2017control,giaccagli2022sufficient,giaccagli2023lmi}. Nonetheless, many systems cannot present classical contractivity properties, e.g. multi-stable systems or  systems that admit a (non-trivial)
periodic solution. This fact motivated the study of suitable generalizations, such as horizontal contraction \cite[Section VII]{Forni2014} and transversal exponential stability \cite{andrieu2020characterizations}.
Motivated by the results of Muldowney \cite{Muldowney1990},
the recent work \cite{Wu2022} presented the notion of $k$-contraction as the generalization to $k$-dimensional objects of the standard contraction concept for distances.
As such, $k$-contraction includes classical contraction as the special case $k=1$. For $k>1$, this property can be used to analyze the asymptotic behavior of systems that are not contractive in the classical sense. For example, for $2$-contractive time-invariant systems, every bounded solution converges to an equilibrium point (which may not be the same for every solution). 

Existing sufficient conditions for $k$-contraction are given in terms of a particular matrix compound of the Jacobian of the vector field dynamics \cite{Muldowney1990,Wu2022,angeli2023small}. Although these conditions are adequate for system analysis, their application for feedback design is limited. First, matrix compounds rapidly explode in dimension for low values of $k$ and systems of large dimension. This fact drastically increases the computational complexity of potential feedback design algorithms. Second, the use of matrix compounds
hinders the derivation of a tractable matrix inequality problem for feedback design. 
Consequently, a $k$-contractive design methodology has yet to be developed. 

Considering these limitations, this work presents alternative design-oriented conditions for $k$-contraction that do not rely on matrix compounds, but rather on matrix inequalities on the given system dynamics. In particular, we build upon the generalized Lyapunov matrix inequalities studied for instance in \cite{SMITH1986679,Forni2019}. 
By exploiting these novel conditions, we devise a feedback design methodology in both the linear and the nonlinear framework.
In the linear time-invariant framework, our design is based on a new generalization of the notion of stabilizability, which is, then, generalized to the   nonlinear framework.

The remainder of this document is organized as follows. In Section~\ref{sec:preliminaries}, we provide a refined definition of $k$-contraction which strongly focuses on its geometrical interpretation. 
Then, we recover the notion of infinitesimal $k$-contraction, which has been used in \cite{Wu2022b}, and link it to the proposed definition of $k$-contraction. 
Subsequently, we recall matrix compound-based sufficient conditions for $k$-contraction and  discuss their limitations. 
Section~\ref{sec:lin} focuses on linear systems. First, we derive necessary and sufficient conditions for $k$-contraction that do not require matrix compounds, but rather generalized Lyapunov matrix inequalities. Then, based on these results, we propose the notion of $k$-order stabilizability together with a new $k$-contractive feedback design. 
In doing so, we also collect and recall in a unified theorem a series of results 
on inertia theorems (see, e.g. \cite[Lemma 1, Section 3]{smith1979poincare} and \cite[Theorem 2.5]{stykel2002stability})
and we provide new inertia results
on algebraic inequalities of the form $WA^\top+AW-BB^\top\prec 0$ not requiring any controllability assumption (see for instance \cite{wimmer_1976}).
Section~\ref{sec:nonlin} focuses on extending these results to nonlinear systems. Similarly,  we first provide sufficient conditions for $k$-contraction in nonlinear systems and, then, propose a design methodology for $k$-contraction. In section~\ref{sec:asymptotic}, we discuss 
asymptotic behaviors of $k$-contractive dynamics. 
The results are validated in some numerical simulations in Section~\ref{sec:Illustrations}. All the proofs are postponed in Sections~\ref{sec:proof_L}-\ref{sec:proofs_NL}  and in the Appendix.
Finally, some conclusions and future perspectives are drawn in Section~\ref{sec:conclusions}.

\emph{Contribution:} This work builds on the initial results published in the conference paper \cite{zoboli2023lmi}. The main differences between \cite{zoboli2023lmi} and the current paper are summarized in the following points:
\begin{itemize}
\item We refine the necessary and sufficient conditions for $k$-contraction in linear systems presented in \cite{zoboli2023lmi} and provide a general result for state matrices with arbitrary eigenvalues.
\item We present a state-feedback design methodology for linear systems. In doing so, we introduce the notion of $k$-order stabilizability and an associated generalized Lyapunov test. 
\item We extend the proposed design to nonlinear dynamics.
\item We discuss the asymptotic behavior of $k$-contractive systems, with a specific focus on $k\in\{1,2,3,n\}$.
\item We provide multiple numerical examples of $k$-contraction analysis and design, with a particular focus on the multi-stability properties of a grid-connected synchronverter
\cite{natarajan2018almost,Lorenzetti2022}.
\end{itemize}

\emph{Notation:}  $\RR_{\geq0}\defeq[0,\infty)$ and
$\NN \defeq \{0,1,2,\ldots\}$. $|\cdot|$ denotes the standard Euclidean norm.
Given $x\in \RR^n$, $y\in \RR^m$, we set 
$(x,y)\defeq (x^\top , y^\top)^\top$.
We denote $\binom{n}{k} \defeq  \tfrac{n!}{k!(n-k)!}$ as the binomial coefficient, with $n!$
 denoting the factorial of $n\in \NN$.
The inertia of a matrix $P$
is defined by the triplet of integers
$
\cine(P) \defeq (\pi_{-}(P),  \pi_{0}(P), \pi_{+}(P)),
$
where $\pi_{-}({P})$, $\pi_{+}(P)$ and $\pi_{0}(P)$ denote the numbers of eigenvalues of $P$ with negative,
positive and zero real part, respectively, counting multiplicities. 
The cardinality of a set is denoted as $\card(\cdot)$. $A\succ0$ (resp. $A\succeq 0$) denotes $A$ being a positive definite (resp. positive semidefinite) matrix.

\section{Preliminaries on $k$-contraction}\label{sec:preliminaries}

\subsection{Definition of $\mathit{k}$-contraction}

In this work, we consider nonlinear systems of the form
\begin{equation}\label{eqn:original_system}
    \dot{x} = f(x), \qquad x\in \RR^n
\end{equation}
where $f$ is sufficiently smooth.
The flow of $f$ 
is denoted by  $\psi^t$, and $\psi^t(x_0)$ is the trajectory of \eqref{eqn:original_system} at time $t$. By definition, $\psi^0(x_0)=x_0$.
We now formally define the property of $k$-contraction studied in this article. Our 
definition strongly focuses on a geometrical interpretation and it is related to the notion presented in the works 
\cite{Muldowney1990,Wu2022}. Moreover,  when considering objects of dimension $1$ ($k=1$), it
matches the definition of contraction presented in \cite{Andrieu2016,andrieu2020characterizations}.  

In \cite{Andrieu2016,andrieu2020characterizations}, 
$1$-contraction expresses the fact that the length of any parametrized and sufficiently smooth curve exponentially decreases with time. 
To extend such a notion to any positive integer $k\in [1, n]$, we consider a set of sufficiently smooth functions  $\cI_k$ defined on  $[0,1]^k$, namely
\begin{equation}\label{eqn:parametriz_set}
    \!\! \cI_k \defeq \left\{\Phi:[0,1]^k\to\RR^n \mid \text{ $\Phi$ is a 
    smooth immersion} \right\}\!.
\end{equation}
Let $P\in \RR^{n\times n}$ be a positive definite symmetric matrix. 
For each $\Phi$ in $\cI_k$, we define the volume  $\vol(\Phi)$ of $\Phi$ as
\begin{equation}\label{eqn:k_length}
    \vol(\Phi) \defeq \bigintsss_{[0,1]^k} \sqrt{\det\left\{\frac{\partial \Phi}{\partial r}(r)^\top P\frac{\partial \Phi}{\partial r}(r)\right\}}\, \dr\,.
\end{equation} 
Note that, since $f$ in \eqref{eqn:original_system} is sufficiently smooth,  for each $t$ in $\RR_{\geq0}$ 
the corresponding flow $\psi^t$ is also sufficiently smooth. Consequently, for each $\Phi$ in $\cI_k$, $\psi^t\circ\Phi$ belongs to $\cI_k$.

\begin{remark}
When $\Phi$ is injective  and $P$ is the identity matrix, 
the volume $\vol$ defined in 
\eqref{eqn:k_length}  
coincides with the standard Euclidean $k$-volume of the submanifold $\Phi([0,1]^k)\subsetneq\RR^n$.
Note that  $1$-volumes are lengths, $2$-volumes are areas and $3$-volumes are standard volumes.
\end{remark}

\begin{remark}\label{rem:Riemannian_volume}
    The volume definition \eqref{eqn:k_length} can be generalized to the Riemannian 
    framework by substituting the Euclidean metric $P$ with a symmetric positive definite 2-tensor $P:\RR^n\to\RR^{n\times n}$, see~\cite[Lemma 3.2]{Lee_Riemann}. However, in this paper, we will focus on the Euclidean scenario in order to obtain more tractable conditions.  
\end{remark}

From now on, 
we let $k$ be a fixed integer  
between 1 and $n$. We now define $k$-contraction 
for nonlinear systems of the form \eqref{eqn:original_system}.

\begin{definition}[$\boldsymbol k$-contraction]\label{def:k_contraction} 
System \eqref{eqn:original_system}
is said to be $k$-contractive on a forward invariant set $\cS\subseteq \RR^n$ if there exist  real numbers $a,b>0$ such that, 
for every $\Phi \in \cI_k$ satisfying $\Imag(\Phi) \subseteq \cS$, the following holds 
\begin{equation}\label{eqn:k_contraction_continuous}
    \vol(\psi^t\circ \Phi) \leq b\, e^{-a t}\, \vol(\Phi),
    \qquad \forall t\in \RR_{\geq0}.
\end{equation} 
\end{definition}

In other words, a system is $k$-contractive if, for any parametrized $k$-dimensional submanifold of $\RR^n$ from which trajectories are complete, its volume  exponentially shrinks along the system dynamics. 
An intuitive representation of the required volume convergence condition is presented in Fig.~\ref{fig:k_contraction_scheme}.
When $k=1$, \eqref{eqn:k_contraction_continuous} implies that the length of any sufficiently smooth curve exponentially decreases, matching the definition in \cite{andrieu2020characterizations}. Moreover, this definition includes the ones in \cite{Muldowney1990}, and \cite[Section 3.2]{Wu2022}. 
We highlight the following property, which is directly obtained by using the definition of the volume $\vol$.

\begin{lemma}\label{lem:k+1_contract}
    If
     system  \eqref{eqn:original_system}
    is $k$-contractive for an integer $1\le k<n$, then it is also {$(k+1)$-contractive}.
\end{lemma}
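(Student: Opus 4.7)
The plan is to reduce the global $k$-contraction inequality of Definition~\ref{def:k_contraction} to a pointwise statement on the flow Jacobian $D\psi^t$, to upgrade this pointwise $k$-bound to an analogous $(k+1)$-bound via an elementary singular-value inequality, and finally to re-integrate to recover $(k+1)$-contraction globally.

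First, I would extract a pointwise Jacobian bound from \eqref{eqn:k_contraction_continuous}. For $x_0$ in the interior of $\cS$ and any $L\in\RR^{n\times k}$, consider the family of affine immersions $\Phi_\epsilon(r) \defeq x_0 + \epsilon L r$ on $[0,1]^k$, which lies in $\cI_k$ with image in $\cS$ for $\epsilon>0$ small enough. A direct computation gives $V^k(\Phi_\epsilon) = \epsilon^k \sqrt{\det(L^\top P L)}$; expanding $D\psi^t(\Phi_\epsilon(r)) = D\psi^t(x_0) + O(\epsilon)$ under the integral sign and applying dominated convergence yields
\begin{equation*}
\lim_{\epsilon\to 0}\frac{V^k(\psi^t\circ\Phi_\epsilon)}{\epsilon^k} \;=\; \sqrt{\det\bigl((D\psi^t(x_0) L)^\top P\, D\psi^t(x_0) L\bigr)}.
\end{equation*}
Applying \eqref{eqn:k_contraction_continuous} to $\Phi_\epsilon$, dividing by $\epsilon^k$ and passing to the limit (then extending continuously in $x_0$), we obtain, for every $L\in\RR^{n\times k}$ and every $x_0\in\cS$,
\begin{equation*}
\sqrt{\det\bigl((D\psi^t(x_0) L)^\top P\, D\psi^t(x_0) L\bigr)} \;\le\; b\, e^{-at}\, \sqrt{\det(L^\top P L)}.
\end{equation*}

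Next, I would perform an elementary singular-value calculation. After the change of variable $y = P^{1/2}x$, which turns the $P$-weighted setting into the Euclidean one, let $\sigma_1 \geq \cdots \geq \sigma_n \geq 0$ denote the standard singular values of $\widetilde T \defeq P^{1/2} D\psi^t(x_0) P^{-1/2}$. Maximizing the pointwise $k$-bound over $L$ with $L^\top P L = I_k$ identifies it with $\sigma_1\cdots\sigma_k \le b\,e^{-at}$. Since the smallest of $k$ ordered positive reals is dominated by their geometric mean, $\sigma_{k+1} \le \sigma_k \le (\sigma_1\cdots\sigma_k)^{1/k} \le b^{1/k}\, e^{-at/k}$, whence
\begin{equation*}
\sigma_1\cdots\sigma_{k+1} \;\le\; b^{(k+1)/k}\, e^{-(k+1)a\,t/k}.
\end{equation*}
Unwinding the same equivalence yields the pointwise $(k+1)$-bound $\sqrt{\det((D\psi^t M)^\top P\, D\psi^t M)} \le b'\,e^{-a't}\,\sqrt{\det(M^\top P M)}$ for every $M\in\RR^{n\times(k+1)}$, with $a' \defeq (k+1)a/k$ and $b' \defeq b^{(k+1)/k}$. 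Integrating this pointwise bound with $M = \partial\Phi/\partial r(r)$ against an arbitrary $\Phi\in\cI_{k+1}$ with $\Imag(\Phi)\subseteq\cS$ then gives $V^{k+1}(\psi^t\circ\Phi) \le b'\,e^{-a't}\, V^{k+1}(\Phi)$, which is precisely $(k+1)$-contraction with rate $a'$ and constant $b'$.

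The main delicate step is the pointwise extraction in the first paragraph, where passing from the integral-based global inequality to a pointwise statement on $D\psi^t$ requires justifying the $\epsilon\to 0$ limit inside the integral; this is standard given the smoothness of $f$ (hence of $\psi^t$) and compactness of $[0,1]^k$. The remaining manipulations are routine linear algebra and a single application of the $k$-contraction hypothesis.
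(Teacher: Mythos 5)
Your argument is correct in substance, and it is worth noting that the paper offers no written proof of this lemma: it is asserted to be ``directly obtained by using the definition of the volume $V^k$.'' What you supply is the natural infinitesimal version of that claim. Its core step --- identifying the optimal pointwise constant with the product $\sigma_1\cdots\sigma_k$ of the top singular values of $P^{1/2}\,\tfrac{\partial \psi}{\partial x}^t(x_0)\,P^{-1/2}$ and then using $\sigma_{k+1}\le\sigma_k\le(\sigma_1\cdots\sigma_k)^{1/k}$ --- is precisely the singular-value analogue of the spectral argument the paper uses in the linear setting (Lemma~\ref{lem:spectral-k-properties}, where $\sum_{i=1}^{k}\Re(\lambda_i)<0$ forces $\Re(\lambda_{k+1})\le\Re(\lambda_k)<0$ and hence $\sum_{i=1}^{k+1}\Re(\lambda_i)<0$), so it is fully consistent with the paper's machinery and even yields the quantitative bonus $a'=a(k+1)/k>a$. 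The one step that deserves an explicit caveat is the pointwise extraction: the test immersions $\Phi_\epsilon(r)=x_0+\epsilon L r$ must have image in $\cS$, which confines $x_0$ to the interior of $\cS$ and, after the continuity argument, to the closure of that interior. If the forward-invariant set $\cS$ is not the closure of its interior (e.g.\ a lower-dimensional invariant set), the pointwise bound is only available for matrices $L$ whose columns are tangent to $\cS$, and the argument must be localized accordingly; this is harmless for the sets actually used in the paper but should be stated. Everything else --- the uniform-convergence limit as $\epsilon\to0$, the reduction to Euclidean singular values via $y=P^{1/2}x$, and the final re-integration of the pointwise $(k+1)$-bound against $\tfrac{\partial\Phi}{\partial r}$ for $\Phi\in\cI_{k+1}$ --- checks out.
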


\begin{figure}[t]
	\begin{center}
		\includegraphics[width=0.8\linewidth]{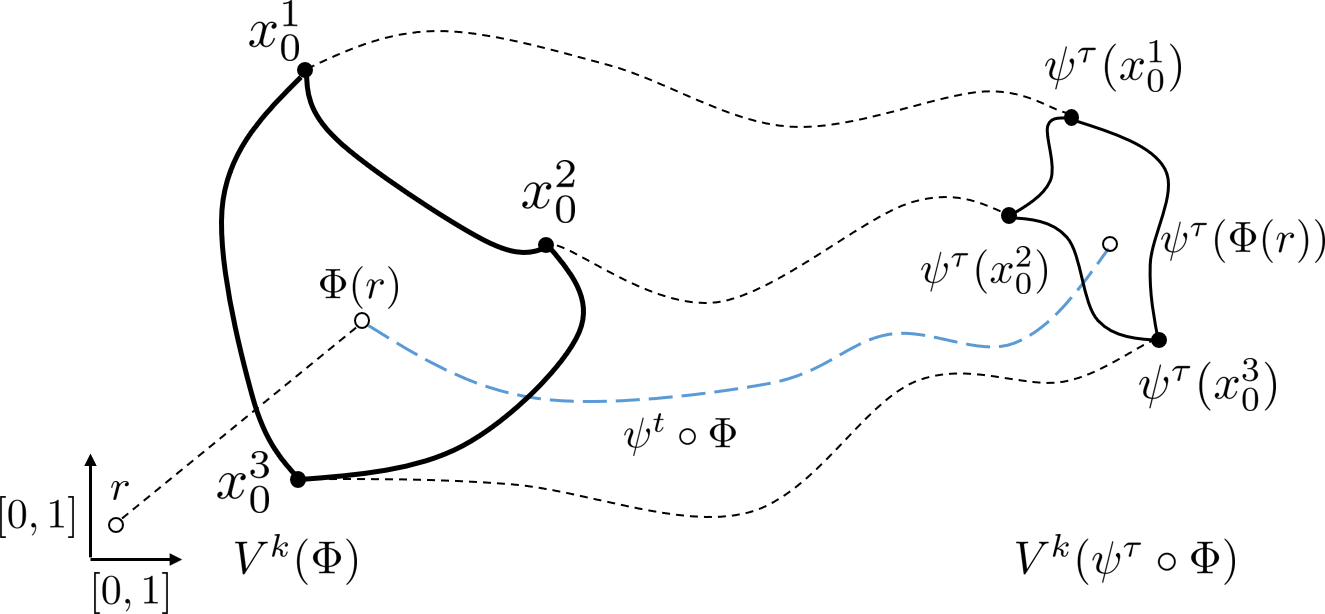}
		\caption{ Flow 
  of a $2$-contractive system. The initial submanifold of initial conditions is described by $\Phi$.  The volume of this submanifold decreases exponentially along the trajectories of the system.
  } \label{fig:k_contraction_scheme}
	\end{center}\vspace{-2em}
\end{figure}

We remark that Definition \ref{def:k_contraction} is invariant under 
globally Lipschitz diffeomorphism on $\cS$, due to the presence of the overshoot term $b$.

\begin{lemma}\label{lem:coordinate_invariance}
Assume that the system \eqref{eqn:original_system} is $k$-contractive on a forward invariant set $\cS\subseteq \RR^n$ for some positive constants $a,b>0$. Moreover, let
$\varphi:\cS\to\cS$ be a diffeomorphism
which satisfies
for some positive constants $\bar c, \underline c>0$
\begin{equation}\label{eqn:unif_diff}
\underline c I \preceq\frac{\partial \varphi}{\partial x}(x)^\top  \frac{\partial \varphi}{\partial x}(x)\preceq  \bar c I, \quad \forall x\in \cS.
\end{equation}
Then, there exists a positive constant $\bar b>0$ such that for every $\Phi \in \cI_k$ satisfying $\Imag(\Phi) \subseteq \cS$, the following holds 
$$
\vol(\varphi\circ \psi^t\circ \Phi)
     \leq \bar b e^{-a t}\, \vol(\varphi\circ\Phi),
$$
\end{lemma}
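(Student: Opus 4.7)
The plan is to establish a two-sided pointwise-to-integral comparison of the form $\alpha^{k/2}\vol(\Psi)\le\vol(\varphi\circ\Psi)\le\beta^{k/2}\vol(\Psi)$ valid for every $\Psi\in\cI_k$ with $\Imag(\Psi)\subseteq\cS$, where $\alpha,\beta>0$ depend only on the constants $\underline c,\bar c$ in \eqref{eqn:unif_diff} and on the extremal eigenvalues of $P$. Once such a comparison is available, the conclusion is immediate: one applies its upper bound with $\Psi=\psi^t\circ\Phi$ (whose image stays in $\cS$ by forward invariance), invokes the $k$-contractivity estimate \eqref{eqn:k_contraction_continuous}, and finally uses the lower bound with $\Psi=\Phi$ to replace $\vol(\Phi)$ by $\vol(\varphi\circ\Phi)$. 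This chain of inequalities produces exactly the desired decay with $\bar b\defeq(\beta/\alpha)^{k/2}\,b$.

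To derive the two-sided comparison, I would set $J(x)\defeq\frac{\partial\varphi}{\partial x}(x)$ and use the chain rule $\frac{\partial(\varphi\circ\Psi)}{\partial r}(r)=J(\Psi(r))\,\frac{\partial\Psi}{\partial r}(r)$ inside the definition \eqref{eqn:k_length} of $\vol$. A direct eigenvalue estimate based on \eqref{eqn:unif_diff} together with $P\succ0$ yields the $n\times n$ inequality $\alpha\,P\preceq J(x)^\top P\,J(x)\preceq\beta\,P$ for all $x\in\cS$, for instance with $\alpha=\underline c\,\lambda_{\min}(P)/\lambda_{\max}(P)$ and $\beta=\bar c\,\lambda_{\max}(P)/\lambda_{\min}(P)$. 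Conjugation by the $n\times k$ matrix $\partial\Psi/\partial r$ preserves the semidefinite ordering, so the same bounds hold at the $k\times k$ level between the matrices $(\partial\Psi/\partial r)^\top J^\top PJ(\partial\Psi/\partial r)$ and $(\partial\Psi/\partial r)^\top P(\partial\Psi/\partial r)$. Invoking monotonicity of the determinant on $\psemidef{k}$, taking square roots, and integrating over $[0,1]^k$ delivers the announced pointwise-to-integral bounds.

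One also has to check that $\varphi\circ\Psi$ actually belongs to $\cI_k$, so that \eqref{eqn:k_contraction_continuous} and the volume definition make sense: the fact that $\Psi$ is an immersion means $\partial\Psi/\partial r$ has full column rank $k$, and \eqref{eqn:unif_diff} forces $J(x)$ to be invertible on $\cS$ (since $\det J(x)^2=\det(J(x)^\top J(x))\ge\underline c^{\,n}>0$), hence $J(\Psi(r))\,\partial\Psi/\partial r$ retains full rank $k$. The main subtlety of the proof is precisely this congruence step: passing from an $n\times n$ semidefinite inequality on $J^\top PJ$ versus $P$ to a $k\times k$ inequality on which determinant monotonicity can be applied, via a rectangular matrix $\partial\Psi/\partial r$. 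The rest of the argument is a routine chain-rule manipulation and bookkeeping of the constants.
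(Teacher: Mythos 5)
Your proposal is correct and follows essentially the same route as the paper's proof: both establish a two-sided comparison $\sqrt{\underline\sigma}\,\vol(\Psi)\le\vol(\varphi\circ\Psi)\le\sqrt{\bar\sigma}\,\vol(\Psi)$ via the chain rule inside \eqref{eqn:k_length} and then chain it with the $k$-contraction estimate, yielding $\bar b=\sqrt{\bar\sigma/\underline\sigma}\,b$. The only difference is that you spell out the congruence and determinant-monotonicity steps (and explicit constants $\alpha,\beta$ in terms of $\underline c,\bar c,\lambda_{\min}(P),\lambda_{\max}(P)$) that the paper leaves implicit.
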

The proof is postponed to Appendix~\ref{sec:proof_coordinate_inv}.

\subsection{Infinitesimal $\mathit{k}$-contraction}\label{subsec:inifnit_k_contr}
Inspired by classical works on contraction theory \cite{LOHMILLER1998683}, we now provide a result linking the exponential stability properties of the variational system of \eqref{eqn:original_system} to the $k$-contraction property proposed in Definition~\ref{def:k_contraction}. 
We start by recalling the dynamics of the variational system, which describe the evolution of an infinitesimal displacement along the trajectories of the system.
Specifically, the variational system of \eqref{eqn:original_system} 
along the trajectory $\psi^t(x_0)$ is
\begin{equation}\label{eqn:variational_system} 
    \dot{\diffvar} = \dfrac{\partial f}{\partial x}(\psi^t(x_0)) \ \diffvar, \quad  \diffvar\in\RR^n.
\end{equation}
Then, $\difflow^t (x_0) \diffvarzero$ is a trajectory of \eqref{eqn:variational_system} at time $t$ initialized at $\diffvarzero$ at $t=0$. From linearity, we have that $\difflow^t (x_0)$ is the state transition matrix of \eqref{eqn:variational_system}. 
Thus, $ \difflow^t(x_0)\diffvarzero$ depicts the infinitesimal displacement with respect to the solution $\psi^t(x_0)$ induced by the initial condition $ x_0+\diffvarzero$.

We recall that the trajectory $\psi^t(x_0)$ is locally exponentially stable, that is, the trajectory generated from any initial condition close enough to $x_0$ will exponentially converge to $\psi^t(x_0)$, if and only if the variational system \eqref{eqn:variational_system} is exponentially stable \cite[Theorem 3.13]{Khalil2002}. In classical contraction theory \cite{LOHMILLER1998683}, this property is generalized by considering simultaneously all the trajectories in a set. That is, system \eqref{eqn:original_system} is contracting in a forward invariant set $\cS\subseteq\RR^n$ if the variational system \eqref{eqn:variational_system} is exponentially stable for all $x_0\in\cS$. Then, contraction on $\cS$ implies that every solution in $\cS$ converges to the same trajectory \cite{LOHMILLER1998683}, or equivalently, the distance between any pair of trajectories shrinks to zero. In a sense, contraction exemplifies how the local linearization along trajectories can be used to derive global incremental properties of the original system.
In this section, we generalize this idea by considering $k$-contraction properties on the variational system \eqref{eqn:variational_system}.

Pick any $x_0\in\RR^n$ and consider any $k$ initial conditions $\diffvarzero^1,\dots,\diffvarzero^k$ of the variational system \eqref{eqn:variational_system}. We define the following matrix
\begin{equation}\label{eqn:inf_initial_cond}
    \vertmatnonlin{t}{x_0} \defeq \Big[\begin{array}{ccc}\difflow^t (x_0)\diffvarzero^1 & \cdots & \difflow^t(x_0)\diffvarzero^k \end{array}\Big]\in \RR^{n\times k}.
\end{equation}
Note that  $\vertmatnonlin{0}{x_0} = \frac{\partial \Phi_{\rm loc}}{\partial r}(r)$, where $\Phi_{\rm loc}$ is an immersion parameterized by the variable $r\in[0,1]^k$ and whose image is an infinitesimal $k$-order parallelotope with vertices at $x_0$ and $\diffvarzero^i + x_0$, namely, 
$$
\Phi_{\rm loc}(r) = \sum_{i=1}^k r_i(\diffvarzero^i + x_0) + \left(1 -\sum_{i=1}^k r_i\right) x_0,
$$
with $r_i\in[0, 1]$ for $i\in\{1,\dots,k\}$ being the $i$-th component of $r$. For $k=1$,  $\Phi_{\rm loc}(r)$ defines a straight line between $x_0$ and $x_0+v_0^1$. 
The volume of the infinitesimal parallelotope can be computed by means of the multiplicative compound,
  which is defined as follows.

  \begin{figure}[t]
	\begin{center}
		\includegraphics[width=0.9\linewidth]{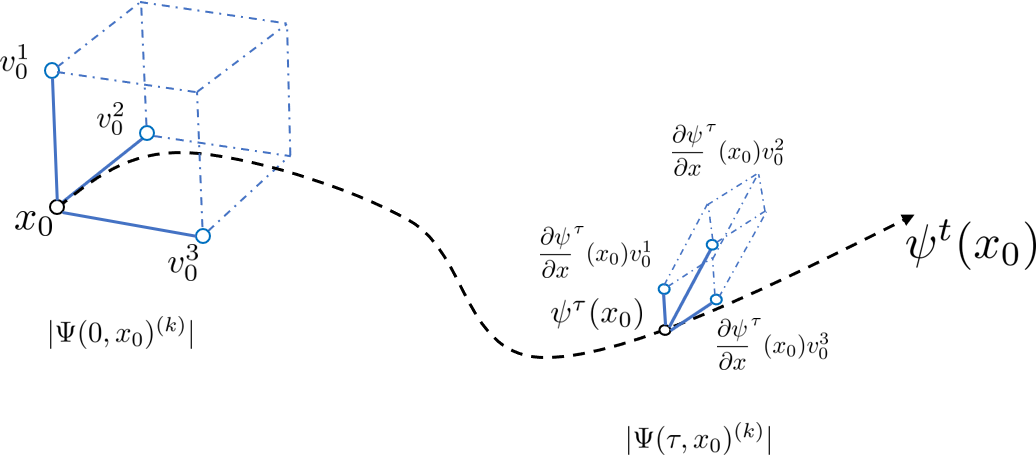}
		\caption{Flow  
  of an   infinitesimally   
  $3$-contractive system.
  } \label{fig:inf_k_contraction}
	\end{center}\vspace{-2em}
\end{figure}

  \begin{definition}[Multiplicative Compound \cite{bar2023compound}]\label{def:compound_definition}
    Consider a matrix $Q\in\RR^{n\times m}$ and select an integer $k\in[1,\min\{n,m\}]$. Moreover, define a minor of order $k$ of the matrix $Q$ as the determinant of some $k\times k$ submatrix of $Q$. The $k$-th multiplicative compound of $Q$, denoted as $Q^{(k)}$, is the  $\binom{n}{k}\times \binom{m}{k}$ matrix including all the minors of order $k$ of $Q$ in a lexicographic order. 
\end{definition}

As an example, consider a $3\times3$ matrix $Q$ with entries $q_{i j}$ for $i,j=1,\dots,3$. The $2^{nd}$ multiplicative compound $Q^{(2)}$ is
$$
Q^{(2)} = \begin{bmatrix} 
\det\left(\begin{smallmatrix}q_{11} & q_{12}\\q_{21} & q_{22}\end{smallmatrix}\right) & \det\left(\begin{smallmatrix}q_{11} & q_{13}\\q_{21} & q_{23}\end{smallmatrix}\right) & \det\left(\begin{smallmatrix}q_{12} & q_{13}\\q_{22} & q_{23}\end{smallmatrix}\right) \\
\det\left(\begin{smallmatrix}q_{11} & q_{12}\\q_{31} & q_{32}\end{smallmatrix}\right) & \det\left(\begin{smallmatrix}q_{11} & q_{13}\\q_{31} & q_{33}\end{smallmatrix}\right) & \det\left(\begin{smallmatrix}q_{12} & q_{13}\\q_{32} & q_{33}\end{smallmatrix}\right) \\
\det\left(\begin{smallmatrix}q_{21} & q_{22}\\q_{31} & q_{32}\end{smallmatrix}\right) & \det\left(\begin{smallmatrix}q_{21} & q_{23}\\q_{31} & q_{33}\end{smallmatrix}\right) & \det\left(\begin{smallmatrix}q_{22} & q_{23}\\q_{32} & q_{33}\end{smallmatrix}\right)
\end{bmatrix}.
$$
Note that for a $Q\in \RR^{n\times n}$, $Q^{(n)}=\det(Q)$ and $Q^{(1)}=Q$.

Thanks to the previous definition, 
by considering $P=I$, we can compute 
the volume of $\Phi_{\rm loc}$ and $\difflow^t\circ \Phi_{\rm loc}$ as follows \cite{Wu2022}
  $$
\begin{aligned}
        \vol(\Phi_{\rm loc}) &= |\vertmatnonlin{0}{x_0}^{(k)}|, \quad
        \vol(\difflow^t\circ \Phi_{\rm loc}) = |\vertmatnonlin{t}{x_0}^{(k)}|.
\end{aligned}
  $$
 The second equality is a consequence of the linearity of the dynamics of $\vertmatnonlin{t}{x_0}^{(k)}$ and we postpone further details at the beginning of Appendix~\ref{subsec:proof_infin_contr}. Given these notions,
  we have the next definition.
  
\begin{definition}[Infinitesimal $\boldsymbol k$-contraction]\label{def:inf_k_contraction}
System \eqref{eqn:original_system}
is said to be infinitesimally $k$-contractive on a forward invariant set $\cS\subseteq \RR^n$ if there exist  real numbers $a,b>0$ such that
    \begin{equation}\label{eqn:compound_stability}
        \left|\vertmatnonlin{t}{x_0}^{(k)}\right| \leq b e^{-a t}\left|\vertmatnonlin{0}{x_0}^{(k)}\right|, 
    \end{equation}
    for all $(t,x_0)\in\RR_{\geq0}\times\cS$ and any $\vertmatnonlin{0}{x_0}$ defined as in \eqref{eqn:inf_initial_cond}.
 \end{definition}

 Roughly speaking, the bound in \eqref{eqn:compound_stability} implies that the volume of any infinitesimal parallelotope connected to the trajectory $\psi^t(x_0)$ and generated by the vectors $ \difflow^t (x_0)\diffvarzero^1, \cdots, \difflow^t(x_0)\diffvarzero^k$ exponentially shrinks to zero. 
 A depiction of this property is presented in Fig.~\ref{fig:inf_k_contraction}. 
Notice that, for the case $k=1$, Definition~\ref{def:inf_k_contraction} boils down to \eqref{eqn:variational_system} being exponentially stable for all $x_0\in\cS$, which is a sufficient condition for the classical notion of contraction \cite{LOHMILLER1998683}. In the next proposition, we link the notion of infinitesimal $k$-contraction to $k$-contraction.

\begin{theorem}\label{thm:inf_k_contr}
Suppose  system \eqref{eqn:original_system} is 
infinitesimally $k$-contractive on a 
forward invariant set $\cS$. 
Then, it is also 
    {$k$-contractive} on $\cS$ according to Definition~\ref{def:k_contraction}.
\end{theorem}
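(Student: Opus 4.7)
The plan is to rewrite $\vol(\psi^t\circ\Phi)$ as an integral of the Euclidean norm of a $k$th compound matrix and then apply the pointwise estimate \eqref{eqn:compound_stability} under the integral sign, using the fact that Definition~\ref{def:inf_k_contraction} provides constants $a,b$ that are uniform in $x_0\in\cS$ and in the choice of initial displacements.

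Concretely, I would first factor $P=L^\top L$ with $L$ invertible (e.g., via Cholesky). The Cauchy--Binet identity then gives, for any $M\in\RR^{n\times k}$,
\begin{equation*}
\det(M^\top P M) = \det\bigl((LM)^\top (LM)\bigr) = \bigl|(LM)^{(k)}\bigr|^2,
\end{equation*}
where $(LM)^{(k)}\in\RR^{\binom{n}{k}}$ is the compound viewed as a column vector. This turns $\vol(\Phi) = \int_{[0,1]^k} |(L\,\tfrac{\partial \Phi}{\partial r}(r))^{(k)}|\,\dr$. Using the chain rule $\tfrac{\partial (\psi^t\circ\Phi)}{\partial r}(r) = \difflow^t(\Phi(r))\,\tfrac{\partial \Phi}{\partial r}(r)$ together with the compound multiplicativity $(AB)^{(k)}=A^{(k)}B^{(k)}$ yields
\begin{equation*}
\vol(\psi^t\circ\Phi) = \int_{[0,1]^k} \bigl|L^{(k)}\bigl(\difflow^t(\Phi(r))\bigr)^{(k)}\bigl(\tfrac{\partial \Phi}{\partial r}(r)\bigr)^{(k)}\bigr|\,\dr.
\end{equation*}
At each $r\in[0,1]^k$, I would identify the $k$ columns of $\tfrac{\partial \Phi}{\partial r}(r)$ with the displacements $\diffvarzero^1,\dots,\diffvarzero^k$ of \eqref{eqn:inf_initial_cond} and the base point with $x_0=\Phi(r)\in\cS$, so that $\bigl(\difflow^t(\Phi(r))\bigr)^{(k)}\bigl(\tfrac{\partial \Phi}{\partial r}(r)\bigr)^{(k)} = \vertmatnonlin{t}{\Phi(r)}^{(k)}$. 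Definition~\ref{def:inf_k_contraction} then provides the pointwise bound $|\vertmatnonlin{t}{\Phi(r)}^{(k)}|\leq b\,e^{-at}\,|(\tfrac{\partial \Phi}{\partial r}(r))^{(k)}|$. Combining this with the operator bound $|L^{(k)} u|\leq \|L^{(k)}\|\,|u|$ and then rewriting $(\tfrac{\partial \Phi}{\partial r}(r))^{(k)} = (L^{-1})^{(k)}(L\,\tfrac{\partial \Phi}{\partial r}(r))^{(k)}$ converts the bound into one proportional to $\vol(\Phi)$, giving \eqref{eqn:k_contraction_continuous} with $\bar b = b\,\|L^{(k)}\|\,\|(L^{-1})^{(k)}\|$.

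The main obstacle is notational rather than substantive: the compound $M^{(k)}$ of the $n\times k$ matrix $M=\tfrac{\partial \Phi}{\partial r}(r)$ is a column in $\RR^{\binom{n}{k}}$, so Cauchy--Binet must be applied carefully, distinguishing the square $n\times n$ case (for $\difflow^t$ and $L$) from the tall $n\times k$ case. Uniformity of the estimate in $r$ is immediate from Definition~\ref{def:inf_k_contraction}, and integrability of the bound follows from continuity of $\tfrac{\partial \Phi}{\partial r}$ on the compact cube $[0,1]^k$ together with forward invariance of $\cS$, which ensures that the infinitesimal bound can indeed be invoked at every base point $\Phi(r)$.
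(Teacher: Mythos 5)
Your proposal is correct and follows essentially the same route as the paper's proof: rewrite $\vol(\psi^t\circ\Phi)$ via Cauchy--Binet as an integral of the norm of a $k$th multiplicative compound, identify the columns of $\tfrac{\partial \Phi}{\partial r}(r)$ with the displacements in \eqref{eqn:inf_initial_cond} at base point $x_0=\Phi(r)$, and apply the uniform pointwise bound \eqref{eqn:compound_stability} under the integral. The only cosmetic differences are that the paper derives the linear compound dynamics \eqref{eqn:Gamma_compund_dyn} where you invoke the transition-matrix identity and multiplicativity directly, and that you carry a general metric $P=L^\top L$ through to the constant $\bar b = b\,\|L^{(k)}\|\,\|(L^{-1})^{(k)}\|$ where the paper simply sets $P=I$ in the last step.
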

The proof is postponed to  
Appendix~\ref{subsec:proof_infin_contr}.
We highlight that 
in \cite{Andrieu2016} it has also been shown that
for $k=1$, that is, standard contraction, 
the converse is also true, namely
contraction of the system
implies infinitesimally contraction. For $k>1$
the problem is currently open.

\subsection{Sufficient conditions based on additive matrix compounds}\label{sec:matrix_compounds}
Sufficient conditions for $k$-contraction were originally given in the seminal work by Muldowney  \cite{Muldowney1990} and were recently re-proposed  in the works \cite{Wu2022,Angeli2022}. These conditions strongly depend on the use of the additive matrix compound, which is defined below.

\begin{definition}[Additive Compound \cite{bar2023compound}]
    Consider a matrix $Q\in\RR^{n\times n}$ and select an integer $k\in[1,n]$. The $k$-th additive compound of $Q$ is the $\binom{n}{k}\times \binom{n}{k}$ matrix defined as 
$$
Q^{[k]} \defeq 
\left.\frac{\rm d}{{\rm d}\epsilon} \right|_{\epsilon=0} 
(I+\epsilon Q)^{(k)}.
$$ 
\end{definition}

The additive compound can be explicitly computed
in terms of the entries of $Q$.
For example, for $Q\in\RR^{n\times n}$ we have $Q^{[n]}=\trace(Q)$ and $Q^{[1]}=Q$. More details on this operation can be found in \cite{fiedler2008special}.

Bearing this definition in mind, we now reframe the sufficient condition for $k$-contraction presented in \cite{Muldowney1990,Wu2022} in the framework of this paper, 
namely, we view them through the lenses of
Definition~\ref{def:k_contraction}.

\begin{theorem}\label{thm:demidovich}
Let $\cS\subseteq\RR^n$  be a forward invariant set and suppose  
there exist a real number  $\eta>0$  and a symmetric positive definite matrix 
$Q\in\RR^{\binom{n}{k}\times \binom{n}{k}}$
such that
\begin{equation}\label{eqn:compound_LMI}
    Q\left(\tfrac{\partial f}{\partial x}(x)^{[k]}\right)
    + \left(\tfrac{\partial f}{\partial x}(x)^{[k]}\right)^{\!\!\top} Q
    \preceq -\eta I, \quad \forall x\in \cS.
\end{equation}
Then, system \eqref{eqn:original_system} is infinitesimally $k$-contractive on $\cS$ (therefore, $k$-contractive on $\cS$ according to Definition~\ref{def:k_contraction}). 
\end{theorem}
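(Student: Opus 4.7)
By Theorem~\ref{thm:inf_k_contr}, it is enough to establish infinitesimal $k$-contraction on $\cS$, i.e., to exhibit positive constants $a,b$ satisfying \eqref{eqn:compound_stability}. The plan is therefore to first derive a linear time-varying dynamics governing the $\binom{n}{k}$-dimensional vector $w(t) \defeq \vertmatnonlin{t}{x_0}^{(k)}$, whose coefficient matrix turns out to be precisely the $k$-th additive compound of the Jacobian along the flow, and then to exploit the matrix $Q$ provided by \eqref{eqn:compound_LMI} to build a quadratic Lyapunov function for $w$.

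The key compound identity I would establish is the following. Write $M(t) \defeq \difflow^t(x_0)$ and $A(t) \defeq \tfrac{\partial f}{\partial x}(\psi^t(x_0))$, so that the state transition matrix satisfies $\dot M(t) = A(t) M(t)$, and hence the first-order expansion $M(t+\delta) = (I+\delta A(t))M(t) + o(\delta)$. Applying the Cauchy--Binet formula to this product yields $M(t+\delta)^{(k)} = (I+\delta A(t))^{(k)} M(t)^{(k)} + o(\delta)$, and using the very definition of the additive compound to write $(I+\delta A(t))^{(k)} = I + \delta A(t)^{[k]} + o(\delta)$, one obtains $\tfrac{d}{dt} M(t)^{(k)} = A(t)^{[k]} M(t)^{(k)}$. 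Since $\vertmatnonlin{t}{x_0} = M(t)\,\vertmatnonlin{0}{x_0}$ and another application of Cauchy--Binet gives $w(t) = M(t)^{(k)}\,\vertmatnonlin{0}{x_0}^{(k)}$, it follows that $w$ itself obeys the linear dynamics
$$
\dot w(t) = A(t)^{[k]}\, w(t).
$$

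I would then consider the candidate Lyapunov function $V(w) \defeq w^\top Q w$. Differentiating $V(w(t))$ along the above dynamics and invoking \eqref{eqn:compound_LMI} at $x = \psi^t(x_0) \in \cS$ immediately gives $\dot V \leq -\eta |w|^2$. Combining this with the standard sandwich $\lambda_{\min}(Q)|w|^2 \leq V(w) \leq \lambda_{\max}(Q)|w|^2$ and a comparison argument of Gr\"onwall type yields
$$
|w(t)| \leq \sqrt{\tfrac{\lambda_{\max}(Q)}{\lambda_{\min}(Q)}}\, e^{-\tfrac{\eta}{2\lambda_{\max}(Q)}\, t}\, |w(0)|, \qquad \forall t \in \RR_{\geq0},
$$
which is precisely \eqref{eqn:compound_stability} with explicit constants. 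Infinitesimal $k$-contraction on $\cS$ is thereby established, and the $k$-contractive conclusion in the sense of Definition~\ref{def:k_contraction} then follows directly from Theorem~\ref{thm:inf_k_contr}.

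The only delicate point, and the one I would flag as the main obstacle worth verifying with care, is the compound identity $\tfrac{d}{dt} M^{(k)} = A^{[k]} M^{(k)}$: it bridges the purely algebraic object $A^{[k]}$ with the infinitesimal behaviour of the variational flow, and it hinges on both Cauchy--Binet and the exact form of the paper's definition of the additive compound. Forward invariance of $\cS$ is also essential, since it guarantees that \eqref{eqn:compound_LMI} can be applied pointwise along the entire trajectory $\psi^t(x_0)$ for all $t\geq 0$, so that the Lyapunov inequality $\dot V \leq -\eta|w|^2$ holds uniformly in time.
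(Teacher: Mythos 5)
Your proposal is correct and follows essentially the same route as the paper: the paper likewise derives the linear compound dynamics $\tfrac{d}{dt}\big(\difflow^t(x_0)\big)^{(k)}=\tfrac{\partial f}{\partial x}(\psi^t(x_0))^{[k]}\difflow^t(x_0)^{(k)}$ (citing \cite[Section 2.5]{Wu2022}), deduces uniform exponential decay of the compound from \eqref{eqn:compound_LMI} (citing \cite[Proposition 5]{Wu2022}, which is exactly your quadratic Lyapunov argument with $Q$), and then uses Cauchy--Binet and Theorem~\ref{thm:inf_k_contr} to conclude. The only difference is that you spell out explicitly the two steps the paper delegates to citations, including the explicit constants $b=\sqrt{\lambda_{\max}(Q)/\lambda_{\min}(Q)}$ and $a=\eta/(2\lambda_{\max}(Q))$, which is a welcome but not substantively different elaboration.
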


The proof is postponed to 
Appendix~\ref{subsec:proof_demid}. The extension of Theorem~\ref{thm:demidovich}  to time-varying systems can be found in \cite{zoboli2023lmi}.

\begin{remark}
Inequality \eqref{eqn:compound_LMI} is equivalent to the condition in \cite[Theorem 9]{Wu2022} using the logarithmic norm induced by the weighted $\ell_2$ norm (e.g. \cite[Equation 2.56]{bullo2022contraction}). However, 
in our statement, 
the set 
$\cS$ is allowed to be non-convex.
Furthermore, when $k=1$, 
we  recover the well-known Demidovich conditions
(see \cite{davydov2022non})
and
 the proof in 
 \cite{andrieu2020characterizations}
 for contraction of lengths
 in the context of Euclidean metrics. 
 Nonetheless, it is worth noting that,
 similarly to the case of standard 
 contraction (see, e.g. \cite[Section 3]{bullo2022contraction}),
 also   $k$-contraction can  be studied through different logarithmic norms 
 \cite[Section 3.1]{Wu2022}.
\end{remark}

\begin{remark}
    Theorem~\ref{thm:demidovich} can be generalized to the case of Riemannian volumes, see Remark~\ref{rem:Riemannian_volume}.  
    However, we omit these results to ease the reading of the document. It should be remarked that such generalization also expands on point $IV$ in \cite[Proposition 2.5]{SIMPSONPORCO201474}, 
    since we consider volume objects of dimension lower than $n$.
\end{remark}

\subsection{Limitations of matrix compound-based conditions}\label{sec:limitations}
Although Theorem~\ref{thm:demidovich} provides a  
 suitable 
condition for system analysis, we claim that the presence of matrix compounds  
hinders 
the process of 
devising 
$k$-contractive feedback designs.
Indeed, consider a linear control system of the form
\begin{equation}\label{eqn:linear_system_nonautonomous}
\dot x = Ax + Bu,
\quad x\in\RR^n, \quad u\in\RR^m,
\end{equation}
where $u$ is the control input.
Assume we want to design a state-feedback controller of the form $u = -Kx$,
with $K\in\RR^{n\times m}$,
such that the closed-loop system is $k$-contractive. Then, Theorem~\ref{thm:demidovich} reduces to designing $K$ such that condition \eqref{eqn:compound_LMI} is satisfied for the closed-loop system, 
namely,  
$$
Q\left((A-BK)^{[k]}\right)
+\left((A-BK)^{[k]}\right)^\top Q  \preceq -\eta I.
$$
However, this is a non-convex matrix inequality, 
 due to the strong coupling between the matrices $B,K$ imposed by the additive matrix compound. 
Consequently, even for a simple linear case, a design methodology for the gain $K$ cannot be straightforwardly derived.   
 Another motivation for deriving alternative conditions is related to the computational complexity of the matrix inequality \eqref{eqn:compound_LMI}. 
Although  for $k= n$ interesting properties may arise (see below Lemma~\ref{lem:n-contraction}), 
most useful asymptotic behaviors (from a control perspective) require small values of $k$, specifically $k\in\{1,2,3\}$ (see \cite{Wu2022} and Section~\ref{sec:asymptotic}).
Notice that matrix compounds rapidly grow in dimension when the order of the system is large and the $k$ is low, since they involve matrices of dimensions $\binom{n}{k}\times \binom{n}{k}$. 
Consequently, as highlighted in previous works \cite{dalin2022verifying}, compound-based conditions often explode in size.  

With this in mind,  
 a consistent portion of the following sections is 
dedicated the presentation of alternative design-oriented $k$-contraction conditions that do not require matrix compound computation.
These conditions will be fundamental  in the derivation of control laws guaranteeing  $k$-contractivity of the closed-loop.

\section{$k$-contraction  for linear systems}\label{sec:lin}

We start our analysis by focusing on the linear scenario. This will provide fundamental intuitions on the notion of $k$-contraction that will be instrumental in the 
 subsequent 
analysis of nonlinear 
 dynamics.

\subsection{Generalized Lyapunov necessary and sufficient conditions} \label{sub:nec_suff_cond}

Consider a linear system of the form
\begin{equation}\label{eqn:linear_system}
    \dot x = Ax, 
    \qquad x\in \RR^n.
\end{equation}
We now provide 
a set of sufficient and necessary conditions guaranteeing that \eqref{eqn:linear_system} is $k$-contractive according to Definition~\ref{def:k_contraction}. This result is based on the following two facts:
\begin{itemize}
    \item a necessary and sufficient condition for 
    system \eqref{eqn:linear_system} to be $k$-contractive
    is that the sum of the real part of 
    any combination of $k$ eigenvalues of $A$ is negative, 
    see Lemma~\ref{lem:spectral-k-properties} below 
    in Section~\ref{sec:proof_L};
    \item the generalized Lyapunov matrix inequality 
    (see, e.g \cite{SMITH1986679,smith1979poincare})
    $$
    PA + A^\top P \prec 2\mu P
    $$
    admits a symmetric solution $P$ of inertia $\cine(P)=(p, 0 , n-p)$ if and only if $A$ has $p$ eigenvalues with real part larger than $\mu$
and $n-p$ eigenvalues with real part smaller than $\mu$,
    see below Lemma~\ref{lem:inertia_shifting} in Section~\ref{sec:proof_L}.
\end{itemize}
Combining the two properties above, 
we state now the following main result.

\begin{theorem}\label{thm:k_contraction_LMI_cLTI}
System \eqref{eqn:linear_system} is $k$-contractive if and only if
there exist:
\begin{itemize}
    \item a positive integer
$\ell \in \NN$ satisfying 
$1\leq \ell \leq k $, 
\item 
$\ell$
real numbers $\mu_i\in \RR$, with 
$i\in\{0, \ldots, \ell-1\}$, 
\item 
$\ell$ positive integers $d_i\in\NN$,
with 
$i\in \{0, \ldots, \ell-1\}$,
satisfying
$$
0 \eqdef d_0 
< d_1  < \cdots < d_{\ell-1}  \leq 
 k-1,
$$    
\item 
and  $\ell$
symmetric matrices   $P_i\in \RR^{n\times n}$
of respective inertia $(d_{i},0,n-d_{i})$,
with  $i\in \{0, \ldots, \ell-1\}$,
\end{itemize}
 such that
\begin{subequations}
\label{eqn:corollaryLMI}
\begin{align}\label{eqn:corollary++LMI}
    &A^\top P_i  + P_i A    \prec 2\mu_i P_i, \qquad \forall i\in \{0, \dots, \ell-1\},\\
    \label{eqn:corollary++mu}
    &\sum_{i=0}^{\ell-1}{h_{i}} \, \mu_i \leq 0\ ,
    \end{align}
\end{subequations}
where $h_0\geq 1$ and 
$h_i = d_{i+1} - d_{i}$, for all $i=\{0, \ldots,\ell-1\}$ with $d_\ell\in\NN$ satisfying 
$ d_{\ell - 1 }+1\leq d_\ell \leq  k$.
\end{theorem}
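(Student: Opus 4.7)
The plan is to derive the equivalence from two building blocks that will be developed in Section~\ref{sec:proof_L}: a spectral characterization of $k$-contraction (Lemma~\ref{lem:spectral-k-properties}), stating that system~\eqref{eqn:linear_system} is $k$-contractive if and only if the sum of the $k$ largest real parts of the spectrum of $A$ is strictly negative; and a generalized inertia Lyapunov lemma (Lemma~\ref{lem:inertia_shifting}), which to each strict inequality $A^{\top}P+PA\prec 2\mu P$ with $\cine(P)=(p,0,n-p)$ associates the exact count ``$p$ eigenvalues of $A$ have real part strictly above $\mu$ and $n-p$ strictly below''. I let $r_1\geq r_2\geq\cdots \geq r_n$ denote the ordered real parts of the eigenvalues of $A$, with distinct values $s_1>s_2>\cdots>s_m$ of multiplicities $n_1,\dots,n_m$ and cumulative sums $N_j\defeq \sum_{i\leq j}n_i$.

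For the sufficiency direction, I would first apply Lemma~\ref{lem:inertia_shifting} to each $P_i$: combined with the strictly increasing sequence $d_0<d_1<\cdots<d_{\ell-1}$, this forces the levels to be strictly ordered, $\mu_0>\mu_1>\cdots>\mu_{\ell-1}$. A key intermediate observation is that $\mu_{\ell-1}\leq 0$: otherwise all $\mu_i$ would be positive and, since every $h_i\geq 1$, constraint~\eqref{eqn:corollary++mu} would fail. Using the strict bound $r_j<\mu_i$ for every $j>d_i$, I would then split $\sum_{j=1}^k r_j$ into the blocks $\{d_i+1,\dots,d_{i+1}\}$ for $i=0,\dots,\ell-2$ together with the tail $\{d_{\ell-1}+1,\dots,k\}$, yielding
\[
\sum_{j=1}^k r_j \ <\ \sum_{i=0}^{\ell-2} h_i\mu_i + (k-d_{\ell-1})\mu_{\ell-1}.
\]
Since $\mu_{\ell-1}\leq 0$ and $k\geq d_\ell$, the bound $(k-d_{\ell-1})\mu_{\ell-1}\leq (d_\ell-d_{\ell-1})\mu_{\ell-1}=h_{\ell-1}\mu_{\ell-1}$ absorbs the tail into the full sum, giving $\sum_{j=1}^k r_j<\sum_{i=0}^{\ell-1}h_i\mu_i\leq 0$, which via Lemma~\ref{lem:spectral-k-properties} implies $k$-contractivity.

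For necessity, starting from $\sum_{j=1}^k r_j<0$, I would pick the smallest index $j^\star$ satisfying $N_{j^\star}\geq k$ and set $\ell=j^\star$, $d_i=N_i$ for $i=0,\dots,\ell-1$, and $d_\ell=k$; the constraint $d_{\ell-1}\leq k-1$ follows from minimality of $j^\star$. I would then place $\mu_0\in (s_1,+\infty)$ and $\mu_i\in (s_{i+1},s_i)$ for $i=1,\dots,\ell-1$, so that Lemma~\ref{lem:inertia_shifting} provides the required matrices $P_i$ of prescribed inertia. Letting each $\mu_i$ tend to the lower endpoint of its interval, continuity gives
\[
\sum_{i=0}^{\ell-1} h_i\mu_i \ \longrightarrow\ n_1 s_1+n_2 s_2+\cdots+n_{\ell-1}s_{\ell-1}+(k-N_{\ell-1})s_\ell \ =\ \sum_{j=1}^k r_j\ <\ 0,
\]
so the $\mu_i$'s can be tuned inside their open intervals to satisfy~\eqref{eqn:corollary++mu}.

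The main technical point to handle carefully is the sufficiency argument when $d_\ell<k$: the given inequality only directly controls the partial sum $\sum_{j=1}^{d_\ell} r_j$, so the extra real parts $r_{d_\ell+1},\dots,r_k$ must be absorbed. The sign inequality $\mu_{\ell-1}\leq 0$, together with the monotonicity of $h_{\ell-1}\mu_{\ell-1}$ in $d_\ell$, provides exactly the leverage needed to close this gap.
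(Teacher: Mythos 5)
Your proposal is correct and follows essentially the same route as the paper: both directions rest on the spectral characterization of Lemma~\ref{lem:spectral-k-properties} and the inertia result of Lemma~\ref{lem:inertia_shifting}, with the same block decomposition of $\sum_{j=1}^{k}\Re(\lambda_j)$ for sufficiency and the same construction (shift each $\mu_i$ slightly above the corresponding distinct real part $\alpha_{i+1}$ and invoke continuity) for necessity. The only cosmetic difference is that you absorb the tail $r_{d_{\ell-1}+1},\dots,r_k$ directly via $(k-d_{\ell-1})\mu_{\ell-1}\leq h_{\ell-1}\mu_{\ell-1}$, whereas the paper first bounds $\sum_{i=1}^{k}\Re(\lambda_i)\leq\sum_{i=1}^{d_\ell}\Re(\lambda_i)$ using $\Re(\lambda_{d_\ell})<0$; both hinge on the same observation that $\mu_{\ell-1}\leq 0$.
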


The proof of  Theorem~\ref{thm:k_contraction_LMI_cLTI} is postponed to Section~\ref{subsec:proof_theorem_cLTI}.
To provide some intuition relative to Theorem~\ref{thm:k_contraction_LMI_cLTI}, we anticipate that the constants $\mu_i$ are  bounding the real part of the eigenvalues of the matrix $A$. That is, $\mu_0$ bounds the eigenvalue with largest real part of $A$, while $\mu_1$ bounds the second largest eigenvalue with real part different from the first, and so on. Then, \eqref{eqn:corollary++mu} can be interpreted as a bound in the partial sum of eigenvalues of $A$, considering their multiplicities, in turn implying $k$-contraction (see the proof in Section~\ref{subsec:proof_theorem_cLTI} for more details).

\begin{remark}
A particular case in which the former Theorem applies is when $\ell=1$. In that case, the former condition reduces to the existence of a real number $\mu_0\leq 0$ and a symmetric positive definite matrix $P_0\succ0$ such that 
$
A^\top P_0 + P_0A \prec 2\mu_0 P_0\,. 
$
 This condition is satisfied if and only if $A$ is Hurwitz, which implies that system \eqref{eqn:linear_system} is  $k$-contractive for all $k\in\{1, \dots, n\}$ by means of Lemma~\ref{lem:k+1_contract}.
\end{remark}

\begin{remark}\label{rem:algorithm}
The inertia requirements in \eqref{eqn:corollary++LMI} cannot be represented as semidefinite constraints. 
However, these constraints can be dropped without a significant impact on the solution of the inequality. Indeed, by Lemma~\ref{lem:inertia_shifting} in Section~\ref{sec:proof_L},
a given constant $\mu_i$ imposes a specific inertia on the matrix $P_i$, depending on the eigenvalues of $A$. 
\end{remark}

\begin{remark}
    Other attempts at dropping the requirement of matrix compound computations appeared in the literature \cite{dalin2022verifying}. However, to the best of our knowledge, Theorem~\ref{thm:k_contraction_LMI_cLTI} is the first result proposing necessary and sufficient conditions \cite[Section IV.C]{zoboli2023lmi}.
\end{remark}

\subsection{Computational  burden  of Theorem~\ref{thm:k_contraction_LMI_cLTI}}\label{sec:computational_burden}
 We now compare Theorem~\ref{thm:demidovich} and Theorem~\ref{thm:k_contraction_LMI_cLTI} in terms of the computational burden imposed by the solution of the respective matrix inequalities. 
We focus on the result in Theorem~\ref{thm:k_contraction_LMI_cLTI} for the case $\ell=k$ and $d_i=d_{i-1}+1$, since it provides the largest set of matrix inequalities.
Let $M\in\RR^{r\times r}$ be an arbitrary square matrix and $Q\in\RR^{r\times r}$ be a symmetric matrix. Since $Q$ is symmetric, each condition of the form 
$
QM+ M^\top Q \preceq \mu Q
$
requires the computation of $N=r(r-1)/2+1$ variables, namely the entries of the top triangular portion of $Q$ and the scalar $\mu$. Then, Theorem~\ref{thm:demidovich} requires $N_1=\binom{n}{k}\big(\binom{n}{k}+1\big)/2+1$ variables, while Theorem~\ref{thm:k_contraction_LMI_cLTI} requires $N_2=kn(n-1)/2+k$ variables. 
\begin{figure}
    \centering
    \includegraphics[width=0.9\linewidth]{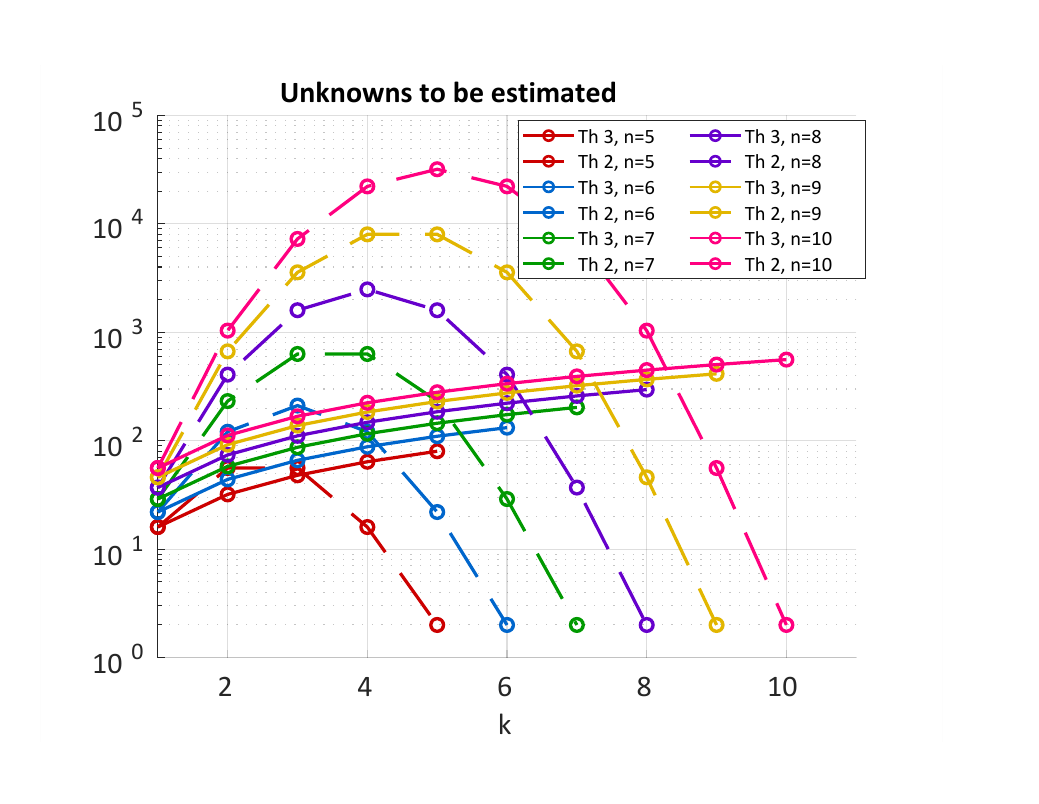}\vspace{-1em}
    \caption{Number of variables to be estimated by Theorem~\ref{thm:demidovich} (dashed) 
    and by  Theorem~\ref{thm:k_contraction_LMI_cLTI} (solid)  in function of $k$. Colors refer to different  $n$.}
    \label{fig:computational_compar}\vspace{-1em}
\end{figure}

To better understand how the size of the problem scales with different values of $k$ and $n$, 
 we refer to 
Fig.~\ref{fig:computational_compar}. Clearly, for large dimensional systems and low $k$, 
the conditions in Theorem~\ref{thm:k_contraction_LMI_cLTI}
 ask for a significantly smaller number of variables. 
Moreover, even in the worst case of $k=n$,  Theorem~\ref{thm:k_contraction_LMI_cLTI} typically requires between $10^2$ and $10^3$ variables.  Differently, Theorem~\ref{thm:demidovich} can easily reach $10^4$ variables. 

This computation shows that conditions in Theorem~\ref{thm:k_contraction_LMI_cLTI} do not grow in dimension as fast as the condition in Theorem~\ref{thm:demidovich}. Moreover,  for $k$ sufficiently smaller than $n$, we have $N_2\le N_1$.
We recall that the cases $k=\{1,2,3\}$ are interesting from a control viewpoint, see Section~\ref{sec:asymptotic}.

\subsection{$\mathit{k}$-order stabilizability}\label{sec:k_order_properties}
Consider a linear system of the form \eqref{eqn:linear_system_nonautonomous}. It is well-known that stabilizability of the pair $(A,B)$ is a necessary and sufficient condition for the existence of a stabilizing controller. A similar property can be defined when considering $k$-contractive designs. We refer to this property as \emph{$k$-order stabilizability}.

\begin{definition}[$\boldsymbol k$-order stabilizability]\label{def:k_stab}
    System \eqref{eqn:linear_system_nonautonomous} is $k$-order stabilizable if there exists a matrix $K\in\RR^{n\times m}$ such that the closed-loop system $\dot x = (A-BK)x$ is $k$-contractive.
\end{definition}

Conditions for $k$-order stabilizability can be easily derived by transforming the system into a suitable form.
Using standard Kalman decomposition,  system \eqref{eqn:linear_system_nonautonomous} is algebraically equivalent to a system of the form
\begin{equation}\label{eqn:controllable_decomposition}
    \begin{aligned}
      {\begin{bmatrix}
             \dot x_c \\  \dot  x_u
        \end{bmatrix}} &= \begin{bmatrix}
            A_{c} & A_{12}\\ 0 & A_{u}
        \end{bmatrix}\begin{bmatrix}
            x_c \\ x_u
        \end{bmatrix} + \begin{bmatrix}
            B_c \\ 0
        \end{bmatrix}u
    \end{aligned}
\end{equation}
where $x_c \in \RR^{n_c}$, $x_u\in \RR^{n_u}$, $n_c+n_u=n$ and the pair $(A_{c}, B_c)$ is controllable.  
The  non-negative integer  $n_u$ is the dimension of the null-space of the controllability matrix of \eqref{eqn:linear_system_nonautonomous}. Consequently, we admit the possibility of $n_u=0$ and $A_{u}$ being non-existing. 

\begin{lemma}\label{lem:k-order_stabilizability}
    System \eqref{eqn:linear_system_nonautonomous} is $k$-order stabilizable if and only if 
    its canonical Kalman decomposition 
    \eqref{eqn:controllable_decomposition}
    satisfies one  of the two following conditions:
    \begin{itemize}
        \item the integer $n_u$ satisfies $0\leq n_u < k$;
        \item the autonomous system $\dot x_u = A_{u}x_u,$ is $k$-contractive.    
    \end{itemize}
\end{lemma}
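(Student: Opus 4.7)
The plan is to rely on the spectral characterization of $k$-contraction in the linear setting: by the lemma announced just before Theorem~\ref{thm:k_contraction_LMI_cLTI} (see Lemma~\ref{lem:spectral-k-properties}, proved in Section~\ref{sec:proof_L}), a matrix $M\in\RR^{n\times n}$ defines a $k$-contractive linear system if and only if every sum of $k$ eigenvalues of $M$ (counted with algebraic multiplicity) has negative real part. Therefore $k$-order stabilizability is equivalent to finding a gain $K$ that places the spectrum of $A-BK$ so that this spectral condition holds.

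First, I would exploit the Kalman decomposition \eqref{eqn:controllable_decomposition} and partition $K=[K_c\ K_u]$ accordingly, so that
\begin{equation*}
A-BK \ \text{is similar to}\ \begin{bmatrix} A_c-B_cK_c & A_{12}-B_cK_u \\ 0 & A_u \end{bmatrix}.
\end{equation*}
Since this matrix is block upper-triangular, its spectrum is $\spec(A_c-B_cK_c)\cup\spec(A_u)$ (with multiplicities). Controllability of $(A_c,B_c)$ gives full eigenvalue assignment for $A_c-B_cK_c$, while $\spec(A_u)$ is insensitive to $K$. The block $K_u$ plays no role in the spectrum, so the problem reduces to choosing eigenvalues of $A_c-B_cK_c$.

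For necessity, I would show that if $n_u\geq k$ and $A_u$ is \emph{not} $k$-contractive, then stabilizability fails. Indeed, in that case there exist $k$ eigenvalues of $A_u$ whose real parts sum to a non-negative number; since these eigenvalues remain in $\spec(A-BK)$ for every $K$, the spectral condition for $k$-contraction is violated, contradicting Lemma~\ref{lem:spectral-k-properties}. For sufficiency I would split into the two cases of the statement. If $n_u<k$, any collection of $k$ eigenvalues of the closed-loop system contains at least $k-n_u\geq 1$ eigenvalues of $A_c-B_cK_c$; choosing the assignable eigenvalues to have real part smaller than $-C$ for $C$ large enough, every such sum is bounded above by $-(k-n_u)C + k\,\rho(A_u)$, which is strictly negative for large $C$. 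If instead $n_u\geq k$ and $A_u$ is $k$-contractive, then the subsums coming entirely from $\spec(A_u)$ are already strictly negative by hypothesis; the mixed subsums, containing at least one assignable eigenvalue, can again be made negative by placing the eigenvalues of $A_c-B_cK_c$ sufficiently far into the left half-plane, since any fixed contribution from $A_u$ is bounded.

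The main obstacle is the sufficiency in the case $n_u\geq k$: one must handle uniformly all $\binom{n}{k}$ subsets mixing controllable and uncontrollable eigenvalues. I would formalize this by letting $\sigma_{\max}(A_u,j)$ denote the largest sum of real parts over $j$ eigenvalues of $A_u$ and choosing $C>\max_{1\leq j\leq k}\sigma_{\max}(A_u,k-j)/j$, which guarantees that every mixed sum is negative. A minor point worth checking is that the case $n_u=0$ (no uncontrollable block) fits trivially into the first bullet, and that the algebraic equivalence in \eqref{eqn:controllable_decomposition} preserves $k$-contraction, which follows from Lemma~\ref{lem:coordinate_invariance} applied to a linear change of coordinates.
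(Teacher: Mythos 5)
Your proposal is correct and follows essentially the same route as the paper: both reduce the problem to the spectral characterization of Lemma~\ref{lem:spectral-k-properties}, use the Kalman decomposition to split the closed-loop spectrum as $\sigma(A_c-B_cK_c)\cup\sigma(A_u)$, prove necessity from the invariance of $\sigma(A_u)$ under feedback, and prove sufficiency by placing the assignable eigenvalues sufficiently far into the left half-plane. Your treatment of the mixed eigenvalue subsets is merely a more explicit version of the paper's choice of the constant $c=k\lambda_1^u$.
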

The proof of Lemma~\ref{lem:k-order_stabilizability} is postponed to Section~\ref{sec:proof_lem_stab}. Intuitively, Lemma 2 asks for the uncontrollable part to be already $k$-contractive (or of dimension smaller than $k$). For the case $k=1$, Lemma~\ref{lem:k-order_stabilizability} reduces to $n_u = 0$,  a necessary and sufficient condition for controllability in linear systems, or $\dot x_u = A_{u}x_u,$ being stable, which is a sufficient condition for the classical notion of stabilizability. 
We also remark that similar definitions could be developed for $k$-order controllability, observability, and detectability.

\subsection{$\mathit{k}$-contractive feedback design}\label{sec:linear_design}

Starting from a $k$-order stabilizability property and the decomposition \eqref{eqn:controllable_decomposition}, one can easily derive a $k$-contractive feedback design, e.g., via pole placement on the pair $(A_{c}, B_c)$. 
Nonetheless, in view of an extension of these notions
to the
nonlinear context, we look 
for coordinate-free conditions, i.e.  
that do not rely on 
change of coordinates and decompositions that would not be easy to extend to  nonlinear systems.

Motivated  by the result in Theorem~\ref{thm:k_contraction_LMI_cLTI}, we 
now  derive  constructive conditions for designing $k$-contractive controllers. This section presents a design methodology that follows the philosophy of feedback stabilization based on Lyapunov tests for stabilizability \cite[Section 14.5]{Hespanha09}. That is, we first solve a set of matrix inequalities, which are feasible if and only if the system is $k$-order stabilizable. Then, the controller is derived from the result of these inequalities. 
We start by presenting a  generalized Lyapunov test for $k$-order stabilizability. 
\begin{theorem}~\label{thm:k-order-stabilizability}
System \eqref{eqn:linear_system_nonautonomous} is $k$-order stabilizable
if and only if
there exist: 
\begin{itemize}
    \item a positive integer
$\ell \in \NN$ satisfying 
$1\leq \ell \leq k $, 
\item 
$\ell$
real numbers $\mu_i\in \RR$, with 
$i\in\{0, \ldots, \ell-1\}$, 
\item 
$\ell$ positive integers $d_i\in\NN$,
with 
$i\in \{0, \ldots, \ell-1\}$,
satisfying
$$
0 = d_0 
< d_1  < \cdots < d_{\ell-1}  \leq 
 k-1,
$$
\item 
and $\ell$
symmetric matrices   $W_i\in \RR^{n\times n}$
of respective inertia $(d_{i},0,n-d_{i})$,
with  $i\in \{0, \ldots, \ell-1\}$,
\end{itemize}
such that
\begin{subequations}\label{eqn:k_contraction_LMI_linear_stab2_total}
\begin{align}
    \label{eqn:k_contraction_LMI_linear_stab2+}
        &W_iA^\top  +  AW_i  -BB^\top   \prec 2\mu_i W_i,\quad \forall i\in \{0, \dots, \ell-1\}, \\
    &\sum_{i=0}^{\ell-1}{h_{i}} \, \mu_i \leq 0\ ,\label{eqn:mu_sum_condition_stab+}
    \end{align}
    \end{subequations}
where $h_0\geq 1$,
$h_i = d_{i+1} - d_{i}$, for all $i=\{1, \ldots,\ell-1\}$ with $d_\ell\in\NN$ satisfying 
$ d_{\ell - 1 }+1\leq d_\ell \leq  k$.
\end{theorem}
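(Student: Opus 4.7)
The plan is to mirror the argument of Theorem~\ref{thm:k_contraction_LMI_cLTI}, replacing the generalized Lyapunov inertia lemma by a new inertia theorem for the Riccati-type inequality
\[
W A^\top + A W - B B^\top \prec 2\mu W,
\]
whose development is announced as one of the paper's main contributions in the introduction. Informally, that result should state that the inequality admits a symmetric invertible solution $W$ with $\cine(W)=(p,0,n-p)$ if and only if the uncontrollable matrix $A_u$ arising from the Kalman decomposition \eqref{eqn:controllable_decomposition} has exactly $p$ eigenvalues with real part strictly greater than $\mu$ and $n_u-p$ eigenvalues with real part strictly less than $\mu$. Paired with Lemma~\ref{lem:k-order_stabilizability}, this pivot delivers the claimed equivalence.

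For the sufficiency direction, I would apply the Riccati-inertia result to each $W_i$ to obtain a partition of the spectrum of $A_u$ at every threshold $\mu_i$: exactly $d_i$ uncontrollable eigenvalues sit to the right of $\mu_i$. Sorting the real parts of the eigenvalues of $A_u$ as $\alpha_1\geq\cdots\geq\alpha_{n_u}$, this yields $\alpha_{d_i+1}<\mu_i<\alpha_{d_i}$, and a block-by-block summation (taking $d_\ell=k$ whenever $n_u\geq k$) gives
\[
\sum_{j=1}^{k}\alpha_j \;=\;\sum_{i=0}^{\ell-1}\sum_{j=d_i+1}^{d_{i+1}}\alpha_j \;<\;\sum_{i=0}^{\ell-1}h_i\mu_i\;\leq\;0.
\]
By the spectral characterization of Lemma~\ref{lem:spectral-k-properties} (the backbone of Theorem~\ref{thm:k_contraction_LMI_cLTI}), this either forces $n_u<k$ (since $d_{\ell-1}\leq k-1$ caps the number of admissible uncontrollable eigenvalues) or shows $A_u$ to be $k$-contractive; Lemma~\ref{lem:k-order_stabilizability} then concludes $k$-order stabilizability of $(A,B)$.

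For the necessity direction, assuming $(A,B)$ is $k$-order stabilizable, Lemma~\ref{lem:k-order_stabilizability} yields either $n_u<k$ or the $k$-contractivity of the autonomous dynamics $\dot x_u = A_u x_u$. In the latter case, Theorem~\ref{thm:k_contraction_LMI_cLTI} applied to $A_u$ produces integers $\ell$, $d_i$, scalars $\mu_i$, and matrices $P_i^u$ of inertia $(d_i,0,n_u-d_i)$ solving the autonomous Lyapunov inequalities on $A_u$ together with the partial-sum condition. Invoking the Riccati-inertia theorem in the reverse direction, the spectral information encoded by the $P_i^u$ is exactly what is needed to build full-dimensional matrices $W_i\in\RR^{n\times n}$ of inertia $(d_i,0,n-d_i)$ satisfying \eqref{eqn:k_contraction_LMI_linear_stab2+}; the sum condition \eqref{eqn:mu_sum_condition_stab+} is then inherited verbatim. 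The case $n_u<k$ is handled by an analogous but simpler construction, choosing $\ell$ and the $\mu_i$ to cover the (at most $n_u<k$) uncontrollable eigenvalues of $A$.

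The main obstacle I anticipate is proving the Riccati-type inertia theorem itself, extending classical results such as \cite{wimmer_1976} to a setting devoid of any controllability assumption. Within that step, the trickiest part is the lifting from $P_i^u$ to $W_i$: one must carefully design both the controllable block of $W_i$ (exploiting the controllability of $(A_c,B_c)$, e.g.\ via a suitably chosen CARE solution placing the controllable eigenvalues arbitrarily far below $\mu_i$) and the cross block, so that the resulting block matrix $W_i$ inherits the prescribed inertia and satisfies the strict Riccati inequality globally, not only on the uncontrollable block.
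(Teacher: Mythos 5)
Your overall strategy coincides with the paper's: Kalman decomposition, the spectral characterizations of Lemma~\ref{lem:spectral-k-properties} and Lemma~\ref{lem:k-order_stabilizability}, and inertia results for the inequality $WA^\top+AW-BB^\top\prec 2\mu W$ playing the role that Lemma~\ref{lem:inertia_shifting} plays in Theorem~\ref{thm:k_contraction_LMI_cLTI}. However, the pivot result you announce is false as stated: if a nonsingular symmetric $W$ with $\cine(W)=(p,0,n-p)$ satisfies the inequality, one can only conclude that $A_u$ has \emph{at most} $p$ eigenvalues with real part larger than $\mu$, not exactly $p$. Indeed, rewriting the inequality as $W\bar A^\top+\bar AW\prec 2\mu W$ with $\bar A= A-\tfrac12 BB^\top W^{-1}$, the inertia of $W$ is pinned (via Lemma~\ref{lem:inertia_shifting}) to the spectrum of $\bar A$, and the implicit feedback can move \emph{controllable} eigenvalues to the right of $\mu$; since $\sigma(A_u)\subsetneq\sigma(\bar A)$ this only yields $\pi_-(W)\ge\pi_-(-A_u+\mu I)$. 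This is precisely why the paper splits your single ``iff'' into two one-sided statements: Lemma~\ref{lem:splitting++} (existence of $W$ with inertia dictated by $\cine(A_u-\mu I)$) and Lemma~\ref{lem:ricatti_spectrum_separation} (the bound $\pi_-(W)\ge\pi_-(-A_u+\mu I)$). Fortunately, in each half of your argument you only invoke the direction that is actually true, so the proof survives once ``exactly $d_i$'' is weakened to ``at most $d_i$'' --- which still gives $\Re(\lambda^u_{d_i+1})<\mu_i$, the only estimate your telescoping sum needs; the companion bound $\mu_i<\alpha_{d_i}$ you write is neither true in general nor used.

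The remaining gap is the construction you explicitly defer, namely the existence of $W_i$ with prescribed inertia solving \eqref{eqn:k_contraction_LMI_linear_stab2+}. The paper's Lemma~\ref{lem:splitting++} resolves it with a block-diagonal candidate $W=\left[\begin{smallmatrix}W_c&0\\0&\kappa W_u\end{smallmatrix}\right]$: $W_c\succ0$ comes from the Lyapunov test for controllability of $(A_c,B_c)$ shifted by a large $\gamma$, $W_u$ from the classical inertia theorem applied to $A_u-\mu I$, and $\kappa>0$ is taken small enough that the diagonal blocks dominate the coupling through $A_{12}$; no cross block is needed, so the ``careful design of the cross block'' you anticipate is avoidable. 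Your detour through Theorem~\ref{thm:k_contraction_LMI_cLTI} applied to $A_u$ is harmless but redundant, since the paper passes directly from the eigenvalue-sum condition \eqref{eqn:necessary_cond_example_stab} to the choice of the $\mu_i$. Finally, for necessity in the case $k>n_u$ there is nothing to inherit from $A_u$ (the sum condition on its spectrum may simply fail), and the paper gives a separate explicit construction --- reduce to $\bar k=n_u+1$, take $\ell=2$, a large $\mu_0$ giving $W_0\succ0$ and a sufficiently negative $\mu_1$ with $\mu_1+n_u\mu_0\le0$ --- so your ``analogous but simpler construction'' needs to be spelled out along these lines rather than treated as a corollary of the $k\le n_u$ case.
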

The proof of Theorem~\ref{thm:k-order-stabilizability} is postponed to Section~\ref{sec:k-order-stabilizability_proof}.
Notice that
for $k=1$, inequalities \eqref{eqn:k_contraction_LMI_linear_stab2+}- \eqref{eqn:mu_sum_condition_stab+} reduce to the existence of a constant $\mu_0\leq0$ and a symmetric positive definite matrix $W_0\succ 0$ such that
$$
\begin{aligned}
W_0A^\top  +  AW_0  -BB^\top    &\prec 2\mu_0W_0.
\end{aligned}
$$
Hence, we recover
the well-known Lyapunov test for stabilizability \cite[Section 14.4]{Hespanha09}. 
Differently put,
the inequalities  \eqref{eqn:k_contraction_LMI_linear_stab2_total} can be seen as a generalization of the Lyapunov test for stabilizability to the context of $k$-contraction.

Now, based on the presented generalized Lyapunov test for $k$-order stabilizability, we can directly derive a $k$-contractive feedback controller for the linear system \eqref{eqn:linear_system_nonautonomous}. The result is summarized in the following proposition.

\begin{proposition}~\label{pro:k-order-feedback}
Assume that \eqref{eqn:linear_system_nonautonomous} is $k$-order stabilizable.
Then, there exist: 
\begin{itemize}
    \item a positive integer
$\ell \in \NN$ satisfying 
$1\leq \ell \leq k$,
\item 
$\ell$
real numbers $\mu_i\in \RR$, with 
$i\in\{0, \ldots, \ell-1\}$, 
\item 
$\ell$ positive integers $d_i\in\NN$,
with 
$i\in \{0, \ldots, \ell-1\}$,
satisfying
$$
0 = d_0 
< d_1  < \cdots < d_{\ell-1}  \leq 
 k-1,
$$
\item 
and $\ell$
symmetric matrices   $W_i\in \RR^{n\times n}$
of respective inertia $(d_{i},0,n-d_{i})$,
with  $i\in \{0, \ldots, \ell-1\}$,
\end{itemize}
such that \eqref{eqn:k_contraction_LMI_linear_stab2_total} is satisfied, and  the following colinearity relation holds
\begin{equation}
    \label{eqn:colinear}
  B^\top W_i^{-1} = B^\top W_0^{-1}, \quad \forall \,i\in \{0, \ldots, \ell-1\}.
\end{equation}
Furthermore, with this solution,
system \eqref{eqn:linear_system_nonautonomous} is $k$-contractive with the feedback 
    law    
\begin{equation}\label{eqn:feedback_design+}
  u = - Kx, \qquad   K = \frac\rho2 B^\top W_0^{-1}, 
  \qquad \forall\, \rho\geq1.
    \end{equation}
\end{proposition}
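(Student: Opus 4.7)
My plan is to split the proof into two steps: first, assuming the existence of matrices $W_i$ satisfying both the generalized Lyapunov inequalities~\eqref{eqn:k_contraction_LMI_linear_stab2_total} and the colinearity relation~\eqref{eqn:colinear}, verify that the feedback $K = \tfrac{\rho}{2} B^\top W_0^{-1}$ renders the closed-loop $k$-contractive; second, establish the existence of such a family. The first step is computational: I plan to apply Theorem~\ref{thm:k_contraction_LMI_cLTI} to $A - BK$ using the certificates $P_i := W_i^{-1}$. By~\eqref{eqn:colinear} one has $K = \tfrac{\rho}{2} B^\top W_i^{-1}$ for every $i$, so that
$$
(A-BK)^\top W_i^{-1} + W_i^{-1}(A-BK) = A^\top W_i^{-1} + W_i^{-1}A - \rho\, W_i^{-1} BB^\top W_i^{-1}.
$$
A congruence transformation of~\eqref{eqn:k_contraction_LMI_linear_stab2+} by $W_i^{-1}$ (which is symmetric and invertible) yields $A^\top W_i^{-1} + W_i^{-1}A - W_i^{-1} BB^\top W_i^{-1} \prec 2\mu_i W_i^{-1}$; since $W_i^{-1} BB^\top W_i^{-1} \succeq 0$ and $\rho \geq 1$, the additional contribution has the right sign and I obtain $(A-BK)^\top P_i + P_i(A-BK) \prec 2\mu_i P_i$. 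The inertia of $P_i$ matches that of $W_i$, namely $(d_i, 0, n-d_i)$, and~\eqref{eqn:mu_sum_condition_stab+} is unchanged, so Theorem~\ref{thm:k_contraction_LMI_cLTI} concludes $k$-contraction of the closed-loop for every $\rho \geq 1$.

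For the existence part, the cornerstone of my plan is the algebraic identity: for every symmetric invertible $W$ satisfying the colinearity $W^{-1}B = W_0^{-1}B$ and for $K_0 := \tfrac{1}{2} B^\top W_0^{-1}$, one has
$$
WA^\top + AW - BB^\top = W(A-BK_0)^\top + (A-BK_0)W.
$$
Starting from $k$-order stabilizability, Theorem~\ref{thm:k-order-stabilizability} produces at least one $W_0 \succ 0$ certifying the open-loop inequality~\eqref{eqn:k_contraction_LMI_linear_stab2+} for $i=0$. Setting $A_c := A - BK_0$, the identity applied to $W = W_0$ (for which colinearity is trivial) gives $W_0 A_c^\top + A_c W_0 \prec 2\mu_0 W_0$. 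I then plan to combine this with the spectral characterization underlying Theorems~\ref{thm:k_contraction_LMI_cLTI}--\ref{thm:k-order-stabilizability} (namely, the eigenvalue-sum criterion for $k$-contraction together with the inertia theorems recalled in Section~\ref{sub:nec_suff_cond}) to conclude that $A_c$ is $k$-contractive as a standalone linear system. Theorem~\ref{thm:k_contraction_LMI_cLTI} applied to $A_c$ then produces certificates $\tilde P_i$ of inertia $(d_i, 0, n-d_i)$ satisfying $A_c^\top \tilde P_i + \tilde P_i A_c \prec 2\mu_i \tilde P_i$.

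The main obstacle is the final step: upgrading these $\tilde P_i$ into matrices $P_i$ that additionally satisfy $B^\top P_i = B^\top W_0^{-1}$ without sacrificing either the strict inequality or the inertia. My plan is to perturb each $\tilde P_i$ by a correction $M_i$ chosen in the affine subspace $\{M : B^\top M = B^\top(W_0^{-1}-\tilde P_i)\}$, which is nonempty thanks to the structure of $\ker(B^\top)$, and taken small enough to preserve the open conditions (strict definiteness of the residual and inertia). Once colinearity is enforced, reversing the algebraic identity above translates the closed-loop inequality for $A_c$ back into the open-loop inequality~\eqref{eqn:k_contraction_LMI_linear_stab2+} for $A$, with $W_i := P_i^{-1}$, thereby completing the construction. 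Handling this perturbation rigorously---showing that colinearity can always be imposed while retaining feasibility and the prescribed inertia---is where I expect the bulk of the technical effort to lie.
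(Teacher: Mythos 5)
The first half of your argument---deducing $k$-contraction of the closed loop from the inequalities \eqref{eqn:k_contraction_LMI_linear_stab2_total} together with the colinearity relation \eqref{eqn:colinear}---is correct and is essentially the computation in the paper: colinearity lets you absorb $-BB^\top$ into the closed-loop matrix uniformly in $i$, the extra term $(1-\rho)W_i^{-1}BB^\top W_i^{-1}\preceq 0$ for $\rho\geq 1$ has the right sign, and Theorem~\ref{thm:k_contraction_LMI_cLTI} applied with $P_i=W_i^{-1}$ (same inertia as $W_i$) concludes. The existence half, however, has two genuine gaps. First, your intermediate claim that $A_c:=A-\tfrac12 BB^\top W_0^{-1}$ is $k$-contractive for a $W_0$ produced merely by Theorem~\ref{thm:k-order-stabilizability} is false in general. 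Inequality \eqref{eqn:k_contraction_LMI_linear_stab2+} for $i=0$ only certifies that every eigenvalue of $A_c$ has real part below $\mu_0$, and $\mu_0$ may be positive; the inequalities for $i\geq 1$ constrain the \emph{different} matrices $A-\tfrac12 BB^\top W_i^{-1}$, so without colinearity they say nothing about the spectrum of $A_c$ beyond its uncontrollable part. Concretely, for $A=\mathrm{diag}(a,a)$ with $a>1/4$ and $B=(1,0)^\top$, the full system \eqref{eqn:k_contraction_LMI_linear_stab2_total} with $k=2$ is feasible with $W_0=I$, $\mu_0=a+1$, $\mu_1=-(a+1)$ and a suitable indefinite $W_1$, yet $A-\tfrac12 BB^\top W_0^{-1}=\mathrm{diag}(a-\tfrac12,a)$ has positive trace and is not $2$-contractive. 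So the $W_0$ must be chosen with care, and that choice is precisely where the work lies.

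Second, your perturbation step cannot be carried out as described. The correction $M_i$ is constrained to the affine subspace $\{M: B^\top M=B^\top(W_0^{-1}-\tilde P_i)\}$, whose right-hand side is a \emph{fixed} matrix determined by $\tilde P_i$; unless $B^\top\tilde P_i$ happens to be already close to $B^\top W_0^{-1}$, this subspace is bounded away from the origin, so "taking $M_i$ small enough to preserve the open conditions" is not available. Openness of the strict inequality and of the inertia class does not help when the required correction has a prescribed, possibly large, size. The paper avoids both difficulties by a direct construction (its Lemma~\ref{lem:colinear_ricatti}): pass to the Kalman coordinates \eqref{eqn:controllable_decomposition}, build \emph{all} the $W_i$ block-diagonal with an \emph{identical} controllable block $W_c\succ0$ solving $W_c(A_c-\mu_0 I)^\top+(A_c-\mu_0 I)W_c-B_cB_c^\top=-2\gamma W_c$ for $\gamma>0$ large, and take uncontrollable blocks $\kappa_i W_{i,u}$ of the prescribed inertia with $\kappa_i>0$ small. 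Since $B=(B_c^\top,0)^\top$ only sees the shared block, colinearity holds by construction, and the off-diagonal coupling through $A_{12}$ is dominated by choosing $\kappa_i$ small. To complete your proof you would need to reproduce an argument of this type rather than a generic perturbation.
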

The proof is postponed to Section~\ref{sec:proof_feedback}. 
We highlight that, since $k$-contraction for the closed-loop system is preserved for all $\rho\geq1$, the proposed controller presents a generalization of the infinite-gain margin property \cite[Section 3.2.2]{sepulchre2012constructive} to the framework of partial stabilization. Consequently, this result expands similar infinite-gain margin designs \cite{giaccagli2023lmi} from $1$-contraction to $k$-contraction.

\section{$k$-contraction  for nonlinear systems}\label{sec:nonlin}
As a follow-up to the linear scenario, we now move to the analysis of $k$-contraction for nonlinear systems. The main goal is to provide sufficient conditions inspired by the results in Section~\ref{sec:lin}. 
\subsection{Sufficient conditions}
Consider a  nonlinear system of the form \eqref{eqn:original_system}. The following theorem provides sufficient conditions for $k$-contraction.

\begin{theorem}\label{thm:k_contraction_LMI}
Let $\cS\subsetneq \RR^n$ be a compact forward invariant set.
Suppose there exist two symmetric matrices $P_0, P_{1}\in \RR^{n\times n}$ 
 of respective inertia  $(0,0,n)$ and $(k-1,0,n-k+1)$,
  and  $\mu_0, \mu_1\in \RR$ such that 
\begin{subequations}\label{eqn:2_contraction_LMI}
\begin{align}\label{eqn:P0_LMI}
        &\dfrac{\partial f}{\partial x}(x)^\top P_0  + P_0 \dfrac{\partial f}{\partial x}(x)     \prec 2\mu_0 P_0,
\\ \label{eqn:Pk_LMI}
        &\dfrac{\partial f}{\partial x}(x)^\top P_1  + P_1\dfrac{\partial f}{\partial x}(x)    
 \prec 2\mu_1 P_1,
 \\ \label{eqn:mu_sum_condition_NL}
 &\mu_1 + (k-1)\mu_0 <0,
\end{align}
\end{subequations}
for all $x\in\cS$.
Then, system \eqref{eqn:original_system} is infinitesimally $k$-contractive on $\cS$ (therefore, $k$-contractive on $\cS$). 
\end{theorem}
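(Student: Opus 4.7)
By Theorem~\ref{thm:inf_k_contr}, it suffices to establish infinitesimal $k$-contraction on $\cS$. Setting $A(t)=\tfrac{\partial f}{\partial x}(\psi^t(x_0))$, the variational dynamics $\dot X = A(t)X$ induce the compound dynamics $\dot Y = A^{[k]}(t)Y$ with $Y = X^{(k)}\in\RR^{\binom{n}{k}}$, so the task reduces to exhibiting a uniform exponential decay of $|Y(t)|$ in $x_0\in\cS$.

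Next, I would read off pointwise spectral information from the two inequalities. By the inertia-shifting argument invoked in Section~\ref{sub:nec_suff_cond}, inequality \eqref{eqn:P0_LMI} with $P_0\succ 0$ forces every eigenvalue of $A(t)$ to have real part strictly below $\mu_0$, whereas \eqref{eqn:Pk_LMI} combined with the inertia $(k-1,0,n-k+1)$ of $P_1$ forces exactly $k-1$ eigenvalues of $A(t)$ to lie above $\mu_1$ and $n-k+1$ below it. Since every eigenvalue of $A^{[k]}(t)$ is a sum of $k$ distinct eigenvalues of $A(t)$, each such sum has real part strictly smaller than $(k-1)\mu_0 + \mu_1$, which is negative by \eqref{eqn:mu_sum_condition_NL}. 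This is the candidate contraction rate.

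The heart of the proof is upgrading this pointwise spectral bound to a uniform matrix Lyapunov inequality for the compound dynamics. The plan is to construct a symmetric positive definite matrix $Q\in\RR^{\binom{n}{k}\times\binom{n}{k}}$, depending only on $P_0$ and $P_1$, such that whenever a matrix $M\in\RR^{n\times n}$ satisfies both premises \eqref{eqn:P0_LMI}--\eqref{eqn:Pk_LMI} one also has
\[
QM^{[k]} + (M^{[k]})^\top Q \;\preceq\; 2\bigl[(k-1)\mu_0+\mu_1\bigr]Q.
\]
Applying this pointwise to $M=A(t)$ turns $Y^\top QY$ into a Lyapunov function on $\RR^{\binom{n}{k}}$ with uniform exponential decay rate, and a standard comparison then propagates the decay to $|Y|=|X^{(k)}|$. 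Compactness of $\cS$ supplies the uniform overshoot constant $b$ required in Definition~\ref{def:inf_k_contraction}.

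The main obstacle is the explicit construction of such a $Q$ from $(P_0,P_1)$ in a way that is universal in $M$, and the verification of the displayed inequality. A natural route mirrors the linear Theorem~\ref{thm:k_contraction_LMI_cLTI}: assemble $Q$ via a compound-like identity from the generalized eigenspaces of the pencil $(P_1,P_0)$, exploiting that any $k$-dimensional subspace of $\RR^n$ must meet the $(n-k+1)$-dimensional $P_1$-negative subspace, so that in the wedge $v_1\wedge\cdots\wedge v_k$ at least one slot contributes the $\mu_1$ rate while the remaining $k-1$ slots contribute the $\mu_0$ rate. Carrying out this wedge-algebra bookkeeping while keeping $Q$ independent of $M$ is the delicate ingredient.
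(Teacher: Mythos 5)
Your reduction to infinitesimal $k$-contraction via Theorem~\ref{thm:inf_k_contr}, and your pointwise spectral reading of \eqref{eqn:P0_LMI}--\eqref{eqn:Pk_LMI} (all eigenvalues of the frozen Jacobian below $\mu_0$, exactly $k-1$ above $\mu_1$, hence spectral abscissa of the additive compound below $(k-1)\mu_0+\mu_1<0$), are both correct. The gap is the step you yourself flag as ``the delicate ingredient'': the existence of a \emph{single constant} $Q\succ0$ of size $\binom{n}{k}$, built from $(P_0,P_1)$ alone, such that $QM^{[k]}+(M^{[k]})^\top Q\preceq 2[(k-1)\mu_0+\mu_1]Q$ for \emph{every} $M$ satisfying the two premises. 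This is asserted, not proved, and it is the entire content of the theorem: pointwise Hurwitzness of $\frac{\partial f}{\partial x}(\psi^t(x_0))^{[k]}$ along trajectories gives nothing for the time-varying compound dynamics without such a uniform certificate, and a family of matrices sharing the two generalized Lyapunov certificates for $M$ need not share a quadratic Lyapunov certificate for $M^{[k]}$. Indeed, if such a $Q$ existed one would have deduced Theorem~\ref{thm:k_contraction_LMI} as a corollary of Theorem~\ref{thm:demidovich}; the paper explicitly notes that the two sets of conditions are in general \emph{not} equivalent in the nonlinear case, and no such $Q$ is known. Your proposed construction from ``the $(n-k+1)$-dimensional $P_1$-negative subspace'' also cannot work as stated: the subspace that actually contracts at rate $\mu_1$ is not a fixed eigenspace of the pencil $(P_1,P_0)$ but a state-dependent object.

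The paper closes this gap by a different mechanism. Inequality \eqref{eqn:Pk_LMI} with $\cine(P_1)=(k-1,0,n-k+1)$ is a strict $(k-1)$-dominance condition, and the Forni--Sepulchre splitting theorem (Theorem~\ref{thm:Magic_splitting}, which is where compactness and forward invariance of $\cS$ are used) produces a continuous, trajectory-invariant splitting $T_x\RR^n=\cV_x\oplus\cH_x$ with $\dim\cV_x=k-1$ and a uniform bound $|\difflow^t(x)\transfh(x)|\le \bar c_h e^{\mu_1 t}$ on the $(n-k+1)$-dimensional factor; inequality \eqref{eqn:P0_LMI} gives the global expansion bound $|\difflow^t(x)v|\le c_v e^{\mu_0 t}|v|$ (Lemma~\ref{lemma:expansion}). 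The wedge-product bookkeeping you anticipate is then carried out in Lemma~\ref{lem:technical_lemma_compounds}, but applied to $\pmb\psi^t(x)=[\difflow^t(x)\transfh(x)\;\;\difflow^t(x)\transfv(x)]$: since the expanding block has only $k-1$ columns, every $k\times k$ minor contains at least one contracting column, and \eqref{eqn:mu_sum_condition_NL} is exactly the condition $-\mu_1>(k-1)\mu_0$ needed for the net decay of $|\difflow^t(x)^{(k)}|$ (Lemma~\ref{lemma:transition_comp_contr}). So your end-game intuition is the right one, but it must be run on the dynamical splitting, directly at the level of norm estimates on the transition matrix, rather than packaged into a constant compound-size Lyapunov matrix.
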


A detailed discussion of Theorem~\ref{thm:k_contraction_LMI} is postponed to  Section~\ref{subsec:proof_theorem_k_contraction}, along with the associated proof.  Intuitively, inequality \eqref{eqn:P0_LMI} bounds the expansion rate for the variational system \eqref{eqn:variational_system} by a factor $\mu_0$. Differently, the second inequality \eqref{eqn:Pk_LMI} bounds the contraction rate for the variational system on a subsapce by a factor $\mu_1$. Consequently, inequality \eqref{eqn:mu_sum_condition_NL} constraints the contraction rate to be faster than the expansion rate. 
This resembles the eigenvalue bounding approach of Section~\ref{sec:lin}. However, a simple eigenvalue interpretation is not applicable in the nonlinear framework. Hence, we directly bound the fastest diverging direction and the slowest converging one, asking for the latter to be sufficiently fast through \eqref{eqn:mu_sum_condition_NL}.

Notice that Theorem~\ref{thm:k_contraction_LMI} considers constant matrices $P_0,P_1$. 
In view of 
recent results on Riemannian contraction analysis, 
e.g. \cite{SIMPSONPORCO201474,andrieu2020characterizations, Sanfelice2012}, we 
claim 
that  
the use of constant metrics is restrictive and that
state-dependant metrics can help  
in widening the result to more general cases. A direct consequence of this 
observation is the fact 
that, contrarily to the linear case, the conditions 
of Theorem \ref{thm:k_contraction_LMI}
are in general not equivalent to those of
Theorem \ref{thm:demidovich}.
Nonetheless, we highlight that restricting ourselves to the case of constant matrices can help to derive some new asymptotic behavior for $k$-contractive systems,  
as discussed at the end of Section~\ref{sec:matrix_compounds}.  Finally, we remark that, under linear dynamics, Theorem~\ref{thm:k_contraction_LMI}
recovers Theorem~\ref{thm:k_contraction_LMI_cLTI} for the particular case $\ell=2$ and $d_1=k-1$.

\begin{remark}\label{rem:convex_relaxation}
   Theorem~\ref{thm:k_contraction_LMI} requires solving an infinite 
    set  
   of matrix inequalities. Nonetheless, there are multiple strategies that can be used to 
   reduce it to a feasible problem. 
   For instance, one could exploit convex relaxation, 
   as explained in \cite[Section VI]{Forni2019}.  Alternatively, for systems with a semi-linear structure  (namely, $f(x) = Ax + g(x)$), one can obtain a finite set of LMIs if the nonlinear term satisfies a monotonic or a sector-bounded condition \cite{zoboli2024quadratic}.
\end{remark}

\subsection{Relaxing conditions for the planar case}
Note that, differently from Theorem~\ref{thm:demidovich}, in Theorem~\ref{thm:k_contraction_LMI} we require the set $\cS$ to be compact. We conjecture this compactness assumption can also be dropped in Theorem~\ref{thm:k_contraction_LMI}. This conjecture is motivated by the following result for the planar case $n=k=2$.

\begin{lemma}\label{lem:planar_case}
Let $\cS\subseteq \RR^2$  and assume there exist  symmetric matrices $P_0, P_{1}\in \RR^{2\times 2}$ 
 of inertia  $ \cine(P_0)=(0,0,2)$, 
 $\cine(P_{1})=(1,0,1)$
  and  $\mu_0,\mu_{1}\in \RR$ such that, 
  for all $x\in\cS$, inequalities in
 \eqref{eqn:2_contraction_LMI} are satisfied.
Then, system \eqref{eqn:original_system} is $2$-contractive on $\cS$. 
\end{lemma}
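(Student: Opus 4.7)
The plan is to exploit the fact that in the planar case $n=k=2$ the $P_0$-weighted $2$-volume coincides, up to the multiplicative constant $\sqrt{\det P_0}$, with the usual Euclidean area in $\RR^2$: for every $\Phi\in\cI_2$,
\[
\vol(\Phi) \;=\; \sqrt{\det P_0}\,\int_{[0,1]^2} \left|\det \tfrac{\partial \Phi}{\partial r}(r)\right|\, dr .
\]
Composing with the flow and applying the chain rule yields
$\vol(\psi^t\circ\Phi)=\sqrt{\det P_0}\,\int_{[0,1]^2}|\det D\psi^t(\Phi(r))|\,|\det \tfrac{\partial \Phi}{\partial r}(r)|\,dr$, so it suffices to produce a uniform bound $|\det D\psi^t(x)|\le e^{-at}$ on $\cS\times\RR_{\geq0}$ for some $a>0$. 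By Liouville's formula $|\det D\psi^t(x_0)|=\exp\!\big(\int_0^t \trace(\tfrac{\partial f}{\partial x}(\psi^s(x_0)))\,ds\big)$, this in turn reduces to establishing a uniform strictly negative upper bound on $\trace\bigl(\tfrac{\partial f}{\partial x}(x)\bigr)$ for $x\in\cS$.

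Next I would apply, pointwise in $x$, the Lyapunov inertia result recalled in Lemma~\ref{lem:inertia_shifting}. Fix $x\in\cS$ and set $J:=\tfrac{\partial f}{\partial x}(x)$. Inequality \eqref{eqn:P0_LMI} with $P_0$ of inertia $(0,0,2)$ implies that both eigenvalues of $J$ have real part strictly less than $\mu_0$. Inequality \eqref{eqn:Pk_LMI} with $P_1$ of inertia $(1,0,1)$ implies that $J$ has exactly one eigenvalue with real part strictly greater than $\mu_1$ and one with real part strictly less than $\mu_1$. Since two complex-conjugate eigenvalues of a real $2\times 2$ matrix share the same real part, the second condition rules out non-real eigenvalues, so $J$ has two real eigenvalues $\lambda_-(x)<\mu_1$ and $\mu_1<\lambda_+(x)<\mu_0$. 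Summing,
\[
\trace(J(x)) \;=\; \lambda_-(x)+\lambda_+(x) \;<\; \mu_0+\mu_1 .
\]

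Hypothesis \eqref{eqn:mu_sum_condition_NL} with $k=2$ gives $\delta:=-(\mu_0+\mu_1)>0$, so $\trace\bigl(\tfrac{\partial f}{\partial x}(x)\bigr)<-\delta$ for \emph{every} $x\in\cS$, and this bound is uniform since $\delta$ depends only on the constants $\mu_0,\mu_1$. Plugging this into Liouville's formula yields $|\det D\psi^t(x)|\leq e^{-\delta t}$ for all $(t,x)\in\RR_{\geq0}\times\cS$, and combining with the area identity above produces $\vol(\psi^t\circ \Phi)\leq e^{-\delta t}\vol(\Phi)$, which is precisely Definition~\ref{def:k_contraction} with $a=\delta$ and $b=1$. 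Notably, no compactness of $\cS$ enters the argument.

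The decisive step, and the reason the planar case is special, is the combination of the two pointwise inertia conditions: in dimension two, having one eigenvalue on each side of $\mu_1$ forces the spectrum of $J(x)$ to be real and bounded above by $\mu_0+\mu_1$. For $n>k$ this shortcut is unavailable because Liouville's identity controls the full sum of the $n$ eigenvalues of $J$, not the $k$-term partial sums that drive $k$-contraction; this is precisely why Theorem~\ref{thm:k_contraction_LMI} relied on compactness in order to carry out a genuinely $k$-dimensional infinitesimal argument.
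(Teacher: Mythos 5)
Your proof is correct, and it rests on the same two pillars as the paper's argument: the inertia-to-spectrum dictionary of Lemma~\ref{lem:inertia_shifting} applied pointwise to the Jacobian, and the reduction of planar $2$-contraction to a uniform strictly negative bound on $\trace\bigl(\tfrac{\partial f}{\partial x}(x)\bigr)$. The execution differs in two places. First, where you read the eigenvalue locations directly (both real parts below $\mu_0$ from \eqref{eqn:P0_LMI}, one on each side of $\mu_1$ from \eqref{eqn:Pk_LMI}, hence $\trace<\mu_0+\mu_1$), the paper instead extracts the equivalent information from the signs of $\det\bigl(\tfrac{\partial f}{\partial x}(x)-\mu_0 I\bigr)>0$ and $\det\bigl(\tfrac{\partial f}{\partial x}(x)-\mu_1 I\bigr)<0$, expands both determinants in the matrix entries, and eliminates the off-diagonal products to arrive at the same bound $F_s+F_u<\mu_0+\mu_1$. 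Your spectral reading is more transparent; your side observation that the eigenvalues must be real is not actually needed, since the trace bound already follows from the real parts alone. Second, where you close with Liouville's formula and an explicit area computation, the paper simply notes that for $n=k=2$ the additive compound $\tfrac{\partial f}{\partial x}(x)^{[2]}$ \emph{is} the trace and invokes Theorem~\ref{thm:demidovich}, which does not require compactness; your Liouville argument is in effect a self-contained re-derivation of that theorem in the $k=n$ case. One cosmetic slip: the constant in your area identity should be $\sqrt{\det P}$ with $P$ the fixed metric of \eqref{eqn:k_length}, not $\sqrt{\det P_0}$ (the LMI certificate plays no role in the volume); since the constant cancels in the ratio $\vol(\psi^t\circ\Phi)/\vol(\Phi)$, nothing is affected.
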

The proof of Lemma~\ref{lem:planar_case}  is postponed to Section~\ref{sec:proof_planar}.
This result shows that the inequalities in \eqref{eqn:2_contraction_LMI} may be valid on the whole $\RR^n$. Consequently,
in future works, we aim at exploring if 
Theorem~\ref{thm:k_contraction_LMI} can be expanded to the whole $\RR^n$. Currently, the technical obstruction that prevents us to conclude the conjecture is the use of  Theorem~\ref{thm:Magic_splitting} and Lemma~\ref{lemma:transition_comp_contr} in Theorem~\ref{thm:k_contraction_LMI} proof, which require $\cS\subsetneq\RR^n$ in order to guarantee a bounded invariant subspace splitting. 

\subsection{$\mathit{k}$-contractive feedback design} 
\label{sec:nonlin_feed}

Following the lines of the linear results presented in Section~\ref{sec:lin}, we now elaborate on the conditions for $k$-contraction proposed in Theorem~\ref{thm:k_contraction_LMI}. We aim at devising $k$-contractive controllers for nonlinear systems.  Precisely, we consider nonlinear systems of the form 
\begin{equation}\label{eqn:nonlinear_system_nonautonomous}
\begin{aligned}
    \dot x &= f(x)+ Bu 
\end{aligned}
\end{equation}
where $u\in \RR^m$ and $f$ is sufficiently smooth. 
In the next proposition, we provide a result on $k$-contractive controller design.

\begin{proposition}\label{thm:2_contraction_feedback_design}
 Let $\cS\subsetneq \RR^n$ be a compact set, and assume there exist two symmetric matrices $W_0,W_1\in \RR^{n\times n},\,$ with $W_0\succ0$ and inertia $\cine(W_1)=(k-1,0,n-k+1)$ and a pair of real numbers $\mu_0,\mu_1\in\RR$, such that, for all $x\in\cS$,
\begin{subequations}\label{eqn:k_contraction_LMI_NLdesign}
\begin{align}
\label{eqn:k_contraction_LMI_NL_stab}
        &W_0\tfrac{\partial f}{\partial x}(x)^{\!\top}  +  \tfrac{\partial f}{\partial x}(x)W_0  -BB^\top    \prec 2\mu_0 W_0 
\\
  &  \notag W_1\left(\tfrac{\partial f}{\partial x}(x)-\tfrac{1}{2}BB^\top W_{0}^{-1}\right)^{\!\!\top}
  \!\!\!\!+ \left(\tfrac{\partial f}{\partial x}(x)-\tfrac{1}{2}BB^\top W_{0}^{-1}\right)W_1
\\&\qquad\qquad -BB^\top \prec 2\mu_1 W_1.
\label{eqn:k_contraction_LMI_NL_stab3}
\end{align}
    \end{subequations}
    Then, there exists a real number
    $\omega>0$ such that if
    \begin{equation}\label{eqn:mu_sum_condition_NL_stab}
        (k-1)\mu_0+\mu_1+ \omega<0,
    \end{equation}
     the feedback law $u= -Kx$
    with 
\begin{equation}\label{eqn:feedback_design_NL}
    K = \tfrac{1}{2} B^\top (W_0^{-1}+W_1^{-1}).
    \end{equation}
    makes the system \eqref{eqn:nonlinear_system_nonautonomous}
    $k$-contractive on $\cS$, if $\cS$ is forward invariant for the closed-loop system.
\end{proposition}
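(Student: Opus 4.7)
The plan is to apply Theorem~\ref{thm:k_contraction_LMI} directly to the closed-loop system $\dot{x} = f(x) - BKx$, using the natural candidate metrics $P_0 = W_0^{-1}$ and $P_1 = W_1^{-1}$. Since $W_0\succ 0$ and $\cine(W_1)=(k-1,0,n-k+1)$, inversion preserves these inertias, so $P_0$ and $P_1$ meet the inertia requirements of Theorem~\ref{thm:k_contraction_LMI}. The closed-loop Jacobian is
\[
A_{\rm cl}(x) \defeq \tfrac{\partial f}{\partial x}(x) - \tfrac{1}{2}BB^\top\bigl(W_0^{-1}+W_1^{-1}\bigr).
\]

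For the $P_1$ inequality, I would pre- and post-multiply \eqref{eqn:k_contraction_LMI_NL_stab3} by $W_1^{-1}$ and expand: the resulting inequality matches exactly $A_{\rm cl}(x)^\top P_1 + P_1 A_{\rm cl}(x) \prec 2\mu_1 P_1$. This is the key reason for the Riccati-like correction $-\tfrac12 BB^\top W_0^{-1}$ appearing inside the Jacobian in \eqref{eqn:k_contraction_LMI_NL_stab3}: it precisely accounts for the feedback contribution not proportional to $W_1^{-1}$, leaving no residual.

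For the $P_0$ inequality, pre- and post-multiplication of \eqref{eqn:k_contraction_LMI_NL_stab} by $W_0^{-1}$ yields a bound on $\tfrac{\partial f}{\partial x}(x)^\top P_0 + P_0 \tfrac{\partial f}{\partial x}(x) - P_0 BB^\top P_0$. A direct computation then shows
\[
A_{\rm cl}(x)^\top P_0 + P_0 A_{\rm cl}(x) \;\prec\; 2\mu_0 P_0 + R,
\]
where $R \defeq -\tfrac12\bigl(W_0^{-1}BB^\top W_1^{-1} + W_1^{-1}BB^\top W_0^{-1}\bigr)$ is a constant, symmetric matrix that does not cancel (because $P_0$ is not colinear with the feedback gain). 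Here lies the main obstacle: unlike the $P_1$ case, the $P_0$ inequality is contaminated by a cross term. However, since $P_0\succ 0$ and $R$ is constant, there exists $\omega>0$ — depending only on $W_0$, $W_1$, $B$ and $k$ — such that $R \prec \tfrac{2\omega}{k-1}\,P_0$. Defining this $\omega$ explicitly via $\omega \geq \tfrac{k-1}{2}\lambda_{\max}\bigl(W_0^{1/2} R W_0^{1/2}\bigr)$ closes the gap.

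Combining the two bounds, the closed-loop Jacobian satisfies the hypotheses of Theorem~\ref{thm:k_contraction_LMI} with effective rates $\hat{\mu}_0 = \mu_0 + \omega/(k-1)$ and $\hat{\mu}_1 = \mu_1$. The required scalar inequality \eqref{eqn:mu_sum_condition_NL} becomes
\[
\hat{\mu}_1 + (k-1)\hat{\mu}_0 \;=\; \mu_1 + (k-1)\mu_0 + \omega \;<\;0,
\]
which is exactly \eqref{eqn:mu_sum_condition_NL_stab}. Since $\cS$ is assumed forward invariant for the closed loop, Theorem~\ref{thm:k_contraction_LMI} yields (infinitesimal and hence) $k$-contractivity of the closed-loop system on $\cS$, completing the proof.
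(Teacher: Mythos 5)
Your proposal is correct and follows essentially the same route as the paper: apply Theorem~\ref{thm:k_contraction_LMI} to the closed loop with $P_0=W_0^{-1}$, $P_1=W_1^{-1}$, observe that \eqref{eqn:k_contraction_LMI_NL_stab3} transfers \emph{exactly} to the closed-loop $P_1$-inequality with rate $\mu_1$, and absorb the residual cross term $-\tfrac12\bigl(W_0^{-1}BB^\top W_1^{-1}+W_1^{-1}BB^\top W_0^{-1}\bigr)$ in the $P_0$-inequality into an inflated rate $\mu_0+\omega/(k-1)$, with $\omega$ depending only on $W_0,W_1,B,k$. The only (immaterial) difference is how that residual is dominated: the paper uses a completion-of-squares identity to bound it by $\bar\omega W_0$ in the congruence-transformed form, while you bound the symmetric residual directly by $\tfrac{2\omega}{k-1}P_0$ via a generalized eigenvalue estimate.
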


The proof of Proposition~\ref{thm:2_contraction_feedback_design} is postponed to Section~\ref{sec:2_contractive_design_NL}. 
 Proposition~\ref{thm:2_contraction_feedback_design} is an extension of the result for linear systems in Theorem~\ref{thm:k_contraction_LMI_cLTI}  to the 
nonlinear framework.  
 However, besides the nonlinearities, we highlight some main differences between the two results. First, since we require constant matrices $W_i$, the nonlinear result cannot be proven to be necessary in general. Second, even if such constant matrices do exist, there is no guarantee that they satisfy a colinearity condition \eqref{eqn:colinear} uniformly on $x$.  Hence, Proposition~\ref{thm:2_contraction_feedback_design} proposes an alternative design that trades the colinearity condition \eqref{eqn:colinear} for conservativeness in the sum of rates \eqref{eqn:mu_sum_condition_NL_stab}, i.e. the addition  of $\omega>0$. A similar approach can be used in the linear scenario to avoid the colinearity condition \eqref{eqn:colinear} at the price of the necessity result.

\subsection{Asymptotic behavior}\label{sec:asymptotic}

In this section, we describe some asymptotic behaviors 
for $k$-conctractive systems.
In particular, for the cases 
$k=\{1,2\}$
we can directly rely on Definition~\ref{def:k_contraction}
to establish the existence of a unique or multiple equilibria 
and exclude more complex behaviors (such as limit cycle or chaotic ones). For $k=3$, the sufficient conditions of Theorem~\ref{thm:k_contraction_LMI} allow to conclude the existence of simple attractors, that is, fixed points or limit cycles. Finally, for $k=n$, the sufficient condition of Theorem~\ref{thm:demidovich} allows to exclude the existence of repelling equilibria.

\smallskip
\subsubsection*{$\mathit1$-contractivity}
For $k=1$, we recover the asymptotic behavior of the classical notion of contraction  
(see, e.g. \cite{bullo2022contraction}
and references therein). Namely, one can deduce the existence and attractiveness of a unique equilibrium.

\begin{lemma}\label{lem:standard_contraction}
    Assume that system~\eqref{eqn:original_system} is $1$-contractive (according to 
    Definition~\ref{def:k_contraction}) in a closed  and forward invariant set $\cS\subseteq\RR^n$. Then, system~\eqref{eqn:original_system} has a unique equilibrium which is exponentially stable 
    with a domain of attraction including $\cS$.
\end{lemma}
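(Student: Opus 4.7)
My plan is to directly exploit the $1$-contraction property from Definition~\ref{def:k_contraction}, which states that the length of any smooth curve in $\cS$ shrinks exponentially under the flow. The key observation is that, for $k=1$, the weighted Euclidean distance $|y|_P \defeq \sqrt{y^\top P y}$ between the endpoints of a smooth curve $\Phi:[0,1]\to\RR^n$ is bounded by its length $\vol(\Phi)$:
$$
|\Phi(1)-\Phi(0)|_P \;=\; \Bigl|\int_0^1 \Phi'(r)\,dr\Bigr|_P \;\le\; \int_0^1 \sqrt{\Phi'(r)^\top P\, \Phi'(r)}\,dr \;=\; \vol(\Phi).
$$
Combining this with \eqref{eqn:k_contraction_continuous}, for any two points $x_0,x_1\in\cS$ joined by a smooth curve $\Phi$ with $\Imag(\Phi)\subseteq\cS$, we obtain $|\psi^t(x_0)-\psi^t(x_1)|_P \le b\,e^{-a t}\,\vol(\Phi)$, and hence exponential convergence also in the standard Euclidean norm since $P$ is positive definite.

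First, I would establish existence of an equilibrium by a Cauchy argument. Fix $x_0\in\cS$ and $T>0$, and consider the trajectory-curve $\Phi_T(s)\defeq\psi^{sT}(x_0)$, $s\in[0,1]$, whose image lies in $\cS$ by forward invariance. Applying the contraction estimate above to $\Phi_T$, whose endpoints are $x_0$ and $\psi^T(x_0)$, gives $|\psi^{t+T}(x_0)-\psi^t(x_0)|_P \le b\,e^{-a t}\,\vol(\Phi_T)$. Telescoping along the geometric series $\sum_n e^{-a n T}$ shows that the sequence $\{\psi^{nT}(x_0)\}_{n\in\NN}$ is Cauchy, and since $\cS$ is closed it converges to some $x^*\in\cS$. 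Continuity of the flow then gives $\psi^T(x^*)=x^*$ for every $T>0$, whence $f(x^*)=0$; a second telescoping yields the continuous-time convergence $\psi^t(x_0)\to x^*$ at exponential rate $a$.

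Next, I would handle uniqueness and attraction. For any $x_1\in\cS$, joining it to $x_0$ by a smooth curve in $\cS$ and applying the bound above yields $|\psi^t(x_1)-\psi^t(x_0)|_P\to 0$ exponentially, hence $\psi^t(x_1)\to x^*$ exponentially. In particular, if $y^*\in\cS$ were another equilibrium, taking $x_1=y^*$ would give $y^* = \lim_{t\to\infty}\psi^t(y^*) = x^*$, proving uniqueness in $\cS$ and that the domain of attraction contains $\cS$, with exponential rate inherited from \eqref{eqn:k_contraction_continuous}.

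The main technical obstacle is to ensure that the curves used in the arguments remain entirely in $\cS$. For the existence step, this is handled automatically by forward invariance, since the connecting curve is the trajectory itself. For the uniqueness/attraction step, one must connect arbitrary pairs of points in $\cS$ by smooth curves lying in $\cS$, which requires $\cS$ to be path-connected (or the result to be applied on each path-connected component); this is a mild topological assumption implicit in the statement.
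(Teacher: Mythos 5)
Your argument is correct and is essentially the paper's: the paper proves this lemma by noting that for $k=1$ volumes bound the distance between curve endpoints and then invoking the Banach fixed point theorem as in \cite[Lemma 2]{giaccagli2022sufficient}, and your telescoping Cauchy-sequence argument along $\{\psi^{nT}(x_0)\}$ is precisely that fixed-point proof written out. The path-connectedness caveat you flag for the uniqueness/attraction step is genuine, but it applies equally to the paper's cited route, which likewise needs admissible curves in $\cS$ joining arbitrary pairs of points.
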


Noting that for 
$k=1$ 
volumes boil down to distances, the previous lemma can be proved by 
using Banach fixed point theorem as in   \cite[Lemma 2]{giaccagli2022sufficient}. 
Furthermore, we highlight that 
$\cS$ is not required to be a compact set, 
hence it can be equal to $\RR^n$.
\smallskip
\subsubsection*{$\mathit2$-contractivity}
The property of $2$-contraction can be used to study  phenomena: 
non-existence of trivial periodic solutions and multi-stability.

\begin{lemma}\label{lem:non_periodic}
    Assume that system~\eqref{eqn:original_system} is $2$-contractive (according to 
    Definition~\ref{def:k_contraction})
    in a compact and forward invariant set $\cS\subsetneq\RR^n$. Then, the following holds:
    \begin{itemize}
        \item[1)] \cite{Muldowney1990}
system~\eqref{eqn:original_system} has no non-trivial periodic solutions in $\cS$;
        \item[2)] \cite{Michael1995}
        every solution of system~\eqref{eqn:original_system} initialized in $\cS$ converges to an equilibrium point (which may not be the same for every solution).
    \end{itemize}
\end{lemma}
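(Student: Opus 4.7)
The plan is to handle the two items separately: item~1 via a direct geometric contradiction drawn from Definition~\ref{def:k_contraction}, and item~2 via a classical $\omega$-limit analysis that builds on item~1. For item~1, I would argue by contradiction. Suppose a non-trivial periodic orbit $\gamma \subset \cS$ exists, with minimal period $T>0$ and hence positive length $L>0$. Under a mild topological assumption on $\cS$ (for example, simple connectedness, which holds whenever $\cS$ is star-shaped or convex), $\gamma$ is null-homotopic in $\cS$, so one can build a smooth immersion $\Phi_0 \in \cI_2$ whose image is a $2$-dimensional spanning surface of $\gamma$ with boundary tracing $\gamma$ exactly once. Since $\gamma$ is flow-invariant and $\psi^t$ is a diffeomorphism, $\psi^t\circ\Phi_0 \in \cI_2$ is again an immersion whose boundary equals $\gamma$ (as a set), and whose $2$-volume bounds below the mass of the singular $2$-chain in $\RR^n$ spanned by $\gamma$. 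A classical isoperimetric / filling-area inequality in $\RR^n$ then yields a uniform lower bound $V^2(\psi^t\circ\Phi_0) \geq c\,L^2 > 0$, where $c$ depends only on $n$. This contradicts the exponential decay $V^2(\psi^t\circ\Phi_0) \leq b\,e^{-at}\,V^2(\Phi_0) \to 0$ mandated by Definition~\ref{def:k_contraction}, ruling out $\gamma$.

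For item~2, I would follow the Li--Muldowney strategy of~\cite{Michael1995}. Fix $x_0 \in \cS$ and let $\omega(x_0)$ be its $\omega$-limit set; by compactness and forward invariance of $\cS$, this set is non-empty, compact, connected and flow-invariant. The goal is to show that $\omega(x_0)$ reduces to a single equilibrium. Item~1 already forbids non-trivial periodic orbits inside $\omega(x_0)$. Any other non-equilibrium minimal invariant subset of $\omega(x_0)$ would carry a recurrent trajectory whose forward and backward iterates could be used to construct a $2$-dimensional invariant immersed surface of strictly positive $2$-volume persisting for all time, again contradicting the $2$-contraction estimate. Standard connectedness and chain-recurrence arguments then collapse $\omega(x_0)$ to a singleton equilibrium.

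The principal obstacle in item~1 is topological: if $\cS$ is not simply connected, $\gamma$ may fail to bound any $2$-chain inside $\cS$, so the spanning surface $\Phi_0$ need not exist and the filling-area argument breaks down. In that case Muldowney's original construction~\cite{Muldowney1990}, based on the $2$-additive compound of the variational system along $\gamma$ and on the monodromy matrix, must replace the global disc argument, yielding the contradiction from infinitesimal area contraction in the normal bundle of $\gamma$. The delicate step in item~2 is ruling out non-trivial chain-recurrent or almost-periodic invariant subsets of positive topological dimension as possible $\omega$-limits; this is the technical heart of~\cite{Michael1995}, which I would invoke rather than reconstruct from first principles.
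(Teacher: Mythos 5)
The paper does not actually prove Lemma~\ref{lem:non_periodic}: both items are imported from \cite{Muldowney1990} and \cite{Michael1995}, and Remark~\ref{rem:reviewer_com} records the intended route for item~2 (absence of periodic orbits plus Pugh's closing lemma). Your deferral to those references for the hard steps is therefore consistent with the paper. Your self-contained argument for item~1 is essentially the Li--Muldowney minimal-surface argument, and you are right that it hinges on $\gamma$ bounding a $2$-chain \emph{inside} $\cS$. But this is not a technicality you can discharge by ``falling back on Muldowney's compound/monodromy construction'': Muldowney's Bendixson-type theorem carries the same simple-connectedness hypothesis, and the statement is genuinely problematic without it. Take $\dot x_1=-x_2+x_1(1-x_1^2-x_2^2)$, $\dot x_2=x_1+x_2(1-x_1^2-x_2^2)$ on the compact forward-invariant annulus $\cS=\{3/4\le x_1^2+x_2^2\le 3/2\}$: there $\mathrm{div}\,f=2-4(x_1^2+x_2^2)\le-1$, so the system is infinitesimally $2$-contractive (Theorem~\ref{thm:demidovich} with $k=n=2$ and $Q=1$), hence $2$-contractive in the sense of Definition~\ref{def:k_contraction} by Theorem~\ref{thm:inf_k_contr}, and yet the unit circle is a non-trivial periodic orbit in $\cS$. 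So the topological obstruction you flag is the real content: any correct proof must use (and the statement should require) that periodic orbits in $\cS$ are null-homotopic in $\cS$, e.g.\ that $\cS$ is simply connected. Two smaller points on your item~1: the Federer--Fleming/Gromov filling inequality bounds the minimal filling area \emph{from above} by $CL^2$, whereas what you need is the different (elementary) fact that an embedded closed curve of positive diameter has strictly positive minimal spanning area, together with the observation that the volume \eqref{eqn:k_length}, which counts multiplicity, dominates the mass of the pushed-forward $2$-current.

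For item~2, the step ``forward and backward iterates of a recurrent trajectory can be used to construct an invariant immersed surface of strictly positive $2$-volume'' is not an argument: a recurrent orbit is one-dimensional and does not obviously sweep out an immersion of $[0,1]^2$ with a time-uniform lower bound on its area. The actual mechanism in \cite{Michael1995}, which the paper itself indicates in Remark~\ref{rem:reviewer_com}, is Pugh's closing lemma: a non-wandering, non-equilibrium point can be closed into a periodic orbit by a $C^1$-small perturbation of $f$, the infinitesimal $2$-contraction condition is robust to such perturbations, and item~1 applied to the perturbed system yields the contradiction; connectedness of the $\omega$-limit set then gives convergence to a single equilibrium. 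Note also that this closing-lemma route works at the level of the compound (infinitesimal) condition; if you insist on working only with Definition~\ref{def:k_contraction}, you would additionally have to show that that property survives a $C^1$-small perturbation of the vector field, which is not obvious and is a further reason the paper cites the references rather than reproving the result.
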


The first item of the previous lemma shows that  $2$-contraction and the conditions in Theorem~\ref{thm:k_contraction_LMI} could be understood as a generalization of the Bendixson's criteria for systems of dimension larger than 2. 
The 
second item  connects $2$-contraction with multi-stability.

\smallskip
\subsubsection*{$\mathit{3}$--contractivity}

For the case $k=3$, the property of $3$-contraction can be used to study non-trivial periodic solutions of the system. In this case, however, we don't explicitly rely on the Definition~\ref{def:k_contraction}, but rather on the (more stringent) sufficient conditions of 
Theorem~\ref{thm:k_contraction_LMI}.

\begin{lemma}\label{lem:3-contraction}
    Assume that  system~\eqref{eqn:original_system} satisfies the conditions of Theorem~\ref{thm:k_contraction_LMI}  
    with $k=3$
        in a compact forward invariant set $\cS\subsetneq\RR^n$. Then, every solution of \eqref{eqn:original_system} with initial condition in $\cS$ converges to a simple attractor, namely, a fixed point or a limit cycle (which may not be the same for every solution).
\end{lemma}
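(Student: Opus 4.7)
The plan is to reduce the statement to a classical Poincaré–Bendixson-type theorem for systems whose third additive compound is stable, using the fact that Theorem~\ref{thm:k_contraction_LMI} already provides infinitesimal 3-contraction.

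First, I would apply Theorem~\ref{thm:k_contraction_LMI} directly with $k=3$: the hypotheses on $P_0,P_1,\mu_0,\mu_1$ yield that system~\eqref{eqn:original_system} is infinitesimally 3-contractive on $\cS$. By Definition~\ref{def:inf_k_contraction}, this means that, uniformly in $x_0\in\cS$, the third multiplicative compound $\vertmatnonlin{t}{x_0}^{(3)}$ of the variational transition matrix decays as $b\,e^{-at}$ for some $a,b>0$. Since $\cS$ is compact and forward invariant, every trajectory starting in $\cS$ is bounded and admits a nonempty compact $\omega$-limit set contained in $\cS$.

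Second, I would translate this exponential contraction of 3-volumes into a genuine stability certificate for the third additive compound $\bigl(\partial f/\partial x\bigr)^{[3]}$ of the Jacobian. The evolution of $\vertmatnonlin{t}{x_0}^{(3)}$ is governed by the linear time-varying ODE driven by $\bigl(\partial f/\partial x\bigr)(\psi^t(x_0))^{[3]}$, so exponential contraction of the multiplicative compound is equivalent to the existence of a positive-definite quadratic form $Q$ on $\RR^{\binom{n}{3}}$ such that $Q\,J^{[3]}+J^{[3]\top}Q\prec 0$ on $\cS$. A candidate $Q$ can be constructed from the pair $(P_0,P_1)$ by the same exterior/compound algebra used in the proof of Theorem~\ref{thm:k_contraction_LMI}, essentially by picking up $P_1$ on the unstable 2-plane of $P_1$ and $P_0$ on its complement, and combining them on $\RR^{\binom{n}{3}}$ via Kronecker/wedge sums compatible with the decomposition of 3-forms into $2+1$ directions.

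Third, with a uniform negative-definite certificate for $J^{[3]}$ in hand on the compact forward invariant set $\cS$, I would invoke the Li–Muldowney strong Poincaré–Bendixson theorem (in the spirit of \cite{Muldowney1990,Michael1995}), which asserts that, under stability of the third additive compound of the Jacobian on a simply connected forward invariant compact set, every $\omega$-limit set of a bounded trajectory is either a single equilibrium or a non-trivial periodic orbit. Combining with the first step, this delivers the claim.

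The main obstacle I anticipate is the second step: the inertia mismatch between $P_0$ (inertia $(0,0,n)$) and $P_1$ (inertia $(2,0,n-2)$) and the fact that only the \emph{sum} $\mu_1+2\mu_0$ is required to be negative means $Q$ cannot be read off as a naive tensor product of $P_0,P_1$; rather, one must argue on an invariant splitting of $\RR^{\binom{n}{3}}$ associated to the two-dimensional ``unstable'' subspace picked out by $P_1$, combining the rate $\mu_1$ (picking one wedge factor in that subspace) with $2\mu_0$ (picking the remaining two wedge factors in directions controlled by $P_0$). This is essentially the same linear-algebraic bookkeeping that underlies the proof of Theorem~\ref{thm:k_contraction_LMI} and can be carried out by reusing those intermediate lemmas.
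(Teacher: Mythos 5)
Your proof has a genuine gap, and it is exactly the gap that the paper warns about in the discussion immediately following the lemma. Your third step invokes a ``Li--Muldowney strong Poincar\'e--Bendixson theorem'' asserting that uniform stability of the \emph{third} additive compound $\bigl(\partial f/\partial x\bigr)^{[3]}$ on a compact forward invariant set forces every $\omega$-limit set to be an equilibrium or a periodic orbit. No such theorem holds. The Li--Muldowney closing-lemma machinery applies to the \emph{second} compound (that is item 2 of Lemma~\ref{lem:non_periodic}); for the third compound the implication is false, and the paper exhibits the counterexample: the R\"ossler-type system \eqref{eqn:Rossler} satisfies $\tfrac{\partial f}{\partial x}(x)^{[3]}\equiv -0.5$, so condition \eqref{eqn:compound_LMI} holds with the trivial constant certificate $Q=1$ on $\RR^{\binom{3}{3}}=\RR$, its trajectories evolve in a compact set, and yet it is chaotic. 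So even if your second step succeeded in producing the constant $Q\succ0$ with $QJ^{[3]}+J^{[3]\top}Q\prec0$, the conclusion you want would not follow. (Your second step is also shaky on its own: uniform exponential decay of the compound transition matrices over a state-dependent family of LTV systems is equivalent to a \emph{time-varying} quadratic certificate, not a constant one; but this is secondary.)

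The extra strength in the hypotheses of Theorem~\ref{thm:k_contraction_LMI} that actually makes the lemma true is not 3-contraction at all, but the single inequality \eqref{eqn:Pk_LMI} taken with a \emph{constant} $P_1$ of inertia $(2,0,n-2)$ together with the observation that $\mu_1<0$. Indeed, Lemma~\ref{lem:inertia_shifting} forces $\mu_1<\mu_0$ (the two inertias are incompatible otherwise), and combined with $\mu_1+2\mu_0<0$ this gives $3\mu_1<\mu_1+2\mu_0<0$. Hence \eqref{eqn:Pk_LMI} is precisely strict $2$-dominance in the sense of Definition~\ref{def:p-dominance} with rate $-\mu_1>0$, and the conclusion is [Forni2019, Corollary~1]: a $2$-dominant system on a compact forward invariant set admits an invariant splitting with an $(n-2)$-dimensional uniformly contracting fiber, so the asymptotic dynamics is essentially planar and Poincar\'e--Bendixson applies. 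That is the paper's two-line proof. If you want to salvage your outline, replace steps two and three by: extract $2$-dominance from \eqref{eqn:Pk_LMI} and $\mu_1<0$, then cite the dominance Poincar\'e--Bendixson result rather than a compound-based one.
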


The proof of Lemma~\ref{lem:3-contraction} is postponed to Section~\ref{sec:proof_3_contraction}.
We remark that this result is not generally true for $3$-contractive systems  and it is a consequence of the condition \eqref{eqn:2_contraction_LMI} with constant matrices $P_i$. As an example, 
consider the Rössler system \cite{Rossler1979}
\begin{equation}\label{eqn:Rossler}
\begin{aligned}
 &\dot x_1  =  x_2,\quad 
 \dot x_2  = -x_1-x_3, \\ 
 &\dot x_3  = 0.5((x_1-x_1^2)-x_3).
\end{aligned}
\end{equation}
 The 3-additive compound of the Jacobian is $\frac{\partial f}{\partial x}(x)^{[3]}=-0.5$. Therefore, condition \eqref{eqn:compound_LMI} is trivially satisfied and the  system \eqref{eqn:Rossler} is $3$-contractive. Nonetheless, the nonlinear term $x_1^2$  prevents the existence of a constant matrix $P_1$ in \eqref{eqn:Pk_LMI}. Therefore, even if the system is $3$-contractive and  evolves in a compact set, \eqref{eqn:2_contraction_LMI} cannot be satisfied and there is no guarantee it will converge to a fixed point or limit cycle. 
 Indeed, this system presents chaotic behavior and its trajectories do not converge to a simple attractor, see Figure~\ref{fig:rossler}.

 Alternatively, consider the following system,
\begin{equation}\label{eqn:Rossler_mod}
\begin{aligned}
 &\dot x_1  =  x_2-2x_3, \quad
 \dot x_2  = -x_1-x_3, \\ 
 &\dot x_3  = 0.5((x_1-x_1^3)-x_3).
\end{aligned}
\end{equation}
In this case, the 3-additive compound of the Jacobian 
 is $\frac{\partial f}{\partial x}(x)^{[3]}=-0.5$,  similarly to \eqref{eqn:Rossler}. Then,  
the system is 
$3$-contractive.  Moreover, it evolves in a compact forward invariant set.  However, differently from \eqref{eqn:Rossler}, 
inequalities \eqref{eqn:P0_LMI}-\eqref{eqn:mu_sum_condition_NL} are satisfied with $\mu_0 = 0.2, \mu_1 = -0.45$ and $P_0 = \smallmat{
    1.45 &    0.20  &  0.13 \\
    0.20  &  2.09  &  0.26 \\
    0.13  &    0.26 &    0.67},\,
 P_1 = \smallmat{
    -2.05 &   -1.02 &   -0.45 \\
   -1.02 &   26.75 &   14.47 \\
   -0.45 &   14.47 &    6.41}.$
Therefore, by Lemma~\ref{lem:3-contraction} we can conclude that the system will converge to a simple attractor (which may not be the same for every solution). 
 This behavior is shown in  
Figure~\ref{fig:modified_rossler}, which presents the evolution of three trajectories.

\begin{figure}[tb]
	\begin{center}
		\includegraphics[width=1\linewidth]{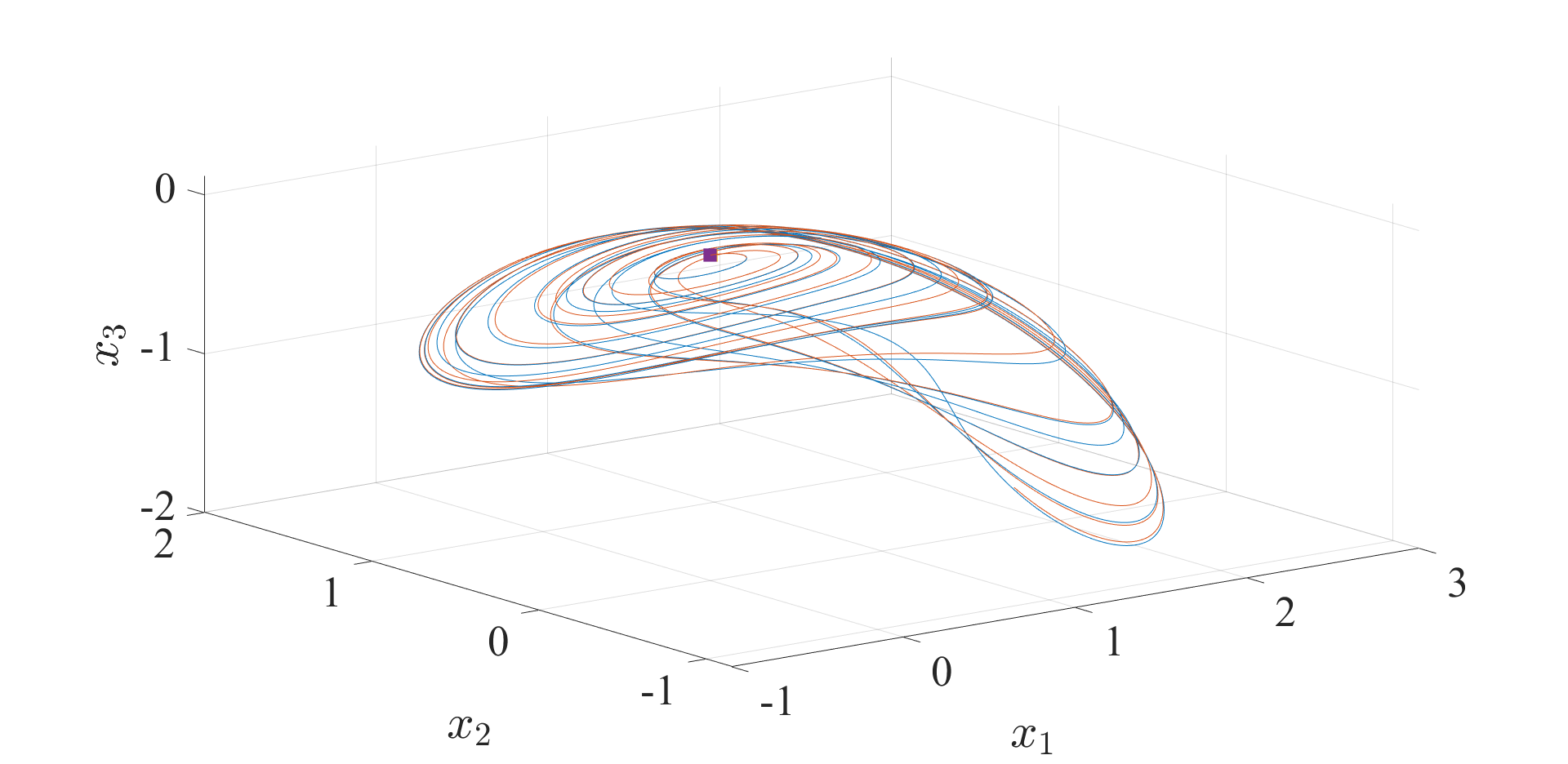}
		\caption{Evolution of two trajectories of the system \eqref{eqn:Rossler}. The first (blue) has an initial condition $[0.1,0.1,0]$ and the second (red) $[0.099,0.1,0]$.
  } \label{fig:rossler}
	\end{center}
\end{figure}

\begin{figure}[tb]
	\begin{center}
		\includegraphics[width=1\linewidth]{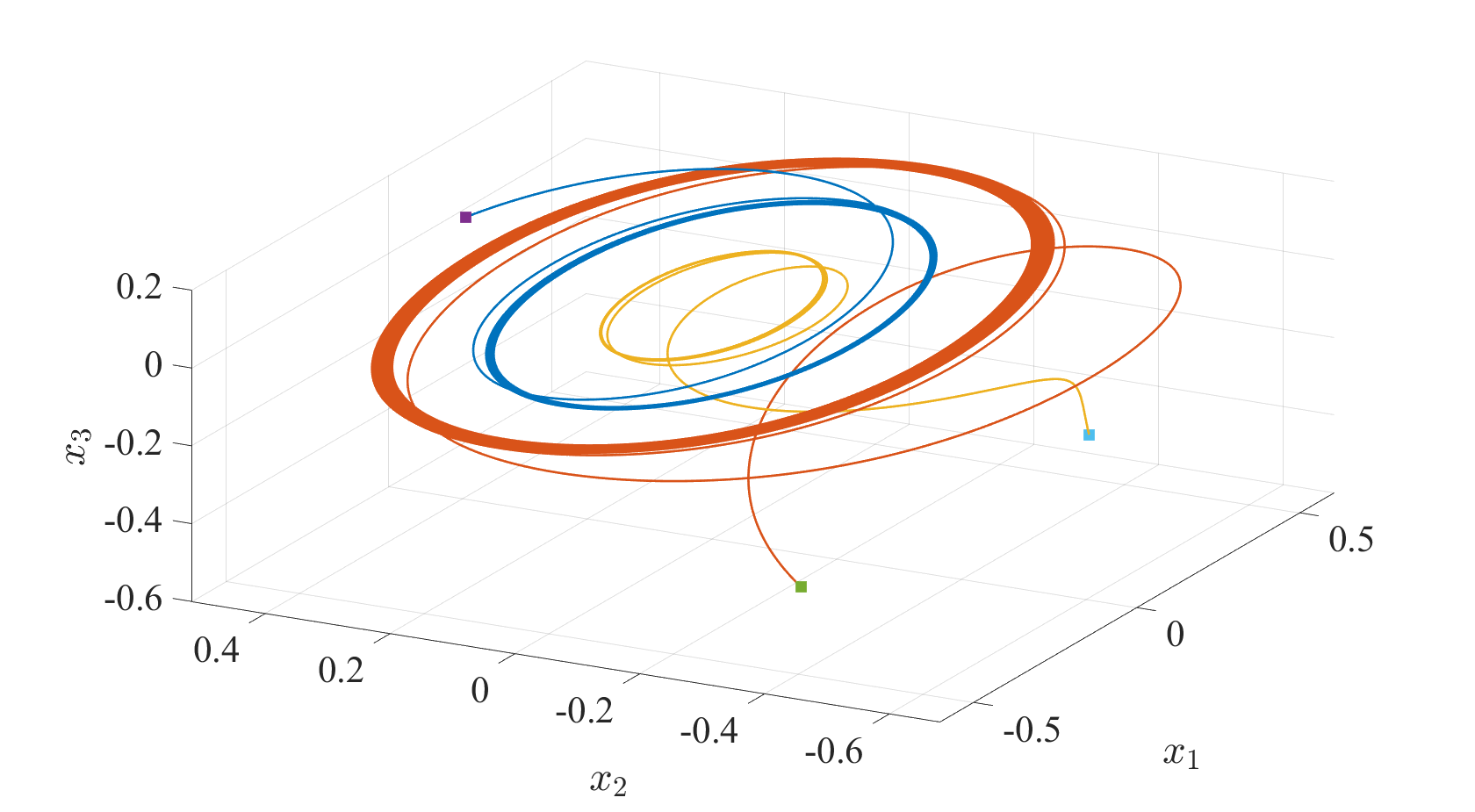}
		\caption{Evolution of three trajectories of the system \eqref{eqn:Rossler_mod}. The first (blue) has an initial condition $[0.2,0.5,0]$, the second (red) $[-0.3,-0.3,-0.5]$ and  the third (yellow)  $[0.2,-0.5,-0.3]$.} \label{fig:modified_rossler}
	\end{center}
\end{figure}

\begin{remark}\label{rem:reviewer_com}
The sufficient conditions in Theorem~\ref{thm:k_contraction_LMI}  resemble conditions for dominance analysis through cone fields, see, e.g., \cite{SMITH1987265, Forni2019}. However, this does not imply the topics are strongly related. For instance,  the proofs of
the asymptotic properties of $2$-contractive systems and systems that admit an invariant cone of
rank $1$ are very different. In the first case, one first proves that there are no periodic solutions, see Lemma~\ref{lem:non_periodic}, and then the Pugh's closing-lemma implies that any bounded solution converges to
an equilibrium. In the second case, (almost all) the trajectories of the system can be projected in
one-to-one way onto a $1$-dimensional
linear space, which implies that (almost all) trajectories converge to an equilibrium.
\end{remark}

\smallskip
\subsubsection*{$\mathit{n}$-contractivity}
Finally, for the case $k=n$, one can exclude the existence of repelling equilibria, that is, equilibria such that any infinitesimally small perturbation acting in an arbitrary direction results in a diverging trajectory. This is formalized as follows.

\begin{definition}[Infinitesimally repelling equilibrium]
    \label{def:repelling}
    Given a dynamical system $\dot x = f(x)$, we say that $x^\circ\in\RR^n$ is an 
    infinitesimally 
    repelling equilibrium if $f(x^\circ)=0$ and 
     $-\tfrac{\partial f}{\partial x}(x^\circ)$ is Hurwitz.
\end{definition}

\begin{lemma}\label{lem:n-contraction}
    Assume that system~\eqref{eqn:original_system} satisfies the conditions of Theorem~\ref{thm:demidovich} with $k=n$ in a set $\cS\subseteq\RR^n$. Then, system \eqref{eqn:original_system} has no 
    infinitesimally 
    repelling equilibria in $\cS$.
\end{lemma}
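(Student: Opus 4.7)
My plan is to exploit the degeneracy of the $n$-th additive compound: when $k=n$, the matrix compound machinery collapses to a purely scalar condition on the trace of the Jacobian, which is incompatible with the local behavior at a repelling equilibrium.

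First, I would unfold the assumption of Theorem~\ref{thm:demidovich} in the case $k=n$. Since $\binom{n}{n}=1$, the symmetric positive definite matrix $Q\in\RR^{\binom{n}{n}\times\binom{n}{n}}$ reduces to a strictly positive scalar $q>0$. Moreover, by the identity $M^{[n]}=\trace(M)$ for $M\in\RR^{n\times n}$ recalled after Definition of the additive compound, inequality \eqref{eqn:compound_LMI} becomes
\begin{equation*}
    2q\,\trace\!\left(\tfrac{\partial f}{\partial x}(x)\right) \leq -\eta, \qquad \forall x\in\cS.
\end{equation*}
In particular, the trace of the Jacobian is bounded above by the strictly negative constant $-\eta/(2q)$ uniformly on $\cS$.

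Next, I would argue by contradiction. Suppose that $x^\circ\in\cS$ is an infinitesimally repelling equilibrium in the sense of Definition~\ref{def:repelling}. Then $-\tfrac{\partial f}{\partial x}(x^\circ)$ is Hurwitz, so every eigenvalue $\lambda_i$ of $\tfrac{\partial f}{\partial x}(x^\circ)$ satisfies $\Re(\lambda_i)>0$. Since the trace equals the sum of the eigenvalues counted with multiplicity, we obtain $\trace\!\left(\tfrac{\partial f}{\partial x}(x^\circ)\right) = \sum_{i=1}^n \Re(\lambda_i) > 0$, which directly contradicts the uniform upper bound derived in the previous step when specialised to $x=x^\circ$.

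I do not foresee any genuine technical obstacle in this argument; the only subtlety is noticing at the outset that the $n$-th additive compound degenerates to a scalar (the trace), so that the seemingly intricate matrix inequality of Theorem~\ref{thm:demidovich} effectively becomes a one-line spectral condition that is visibly incompatible with the spectrum of a repelling equilibrium.
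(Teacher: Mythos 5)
Your proposal is correct and follows essentially the same route as the paper: both reduce the $n$-th additive compound condition to the scalar inequality $\trace\bigl(\tfrac{\partial f}{\partial x}(x)\bigr)<0$ on $\cS$ and then observe that this is incompatible with $-\tfrac{\partial f}{\partial x}(x^\circ)$ being Hurwitz at an equilibrium. Your version is slightly more explicit (spelling out that $Q$ degenerates to a positive scalar and phrasing the conclusion as a contradiction with the sum of the real parts of the eigenvalues), but the substance is identical.
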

The proof of Lemma~\ref{lem:n-contraction} is postponed to Section~\ref{sec:proof_n_contraction}. A similar result could be obtained using the conditions of Theorem~\ref{thm:k_contraction_LMI}. However, 
for $k=n$, the conditions of Theorem~\ref{thm:k_contraction_LMI} are more restrictive and computationally heavier, see Section~\ref{sec:computational_burden}. 
Therefore, we omit the corresponding statement.

\section{Illustrations}\label{sec:Illustrations}

\subsection{Multi-stability of a grid-connected synchronverter}

Synchronverters are electrical inverters controlled 
to behave as a synchronous generator. In this section, we will consider the fourth-order model of the synchronverter presented in \cite{Lorenzetti2022},
\begin{equation}\label{eqn:synchronverter}
    \begin{aligned}
        \dot x_1 &= -\dfrac{R}{L}x_1 + x_2x_3 + \dfrac{V}{L}\sin{x_4} \\
        \dot x_2 &= -x_1x_3 -\dfrac{R}{L}x_2 -\dfrac{m}{L}i_fx_3 + \dfrac{V}{L}\cos{x_4} \\
        \dot x_3 &= \dfrac{m}{J}i_fx_2 - \dfrac{D_p}{J}(x_3-w_n) + \dfrac{T_m}{J} \\
        \dot x_4 &= x_3-w_n. 
    \end{aligned}
\end{equation}
Due to space restrictions, we
omit the physical meaning of the equations, and we refer to \cite{Lorenzetti2022} for a  detailed presentation.

Synchronverters properly operate when synchronized to the grid. 
Using the notation in \eqref{eqn:synchronverter}, this occurs when $x_3 = w_n$. 
With this in mind, one may wonder if for any initial condition (inside a predefined set) the synchronverter will converge to a state of synchronization. Note that the $x_4$ dynamics are just an integral of the error $x_3-w_n$. Hence, any equilibrium point of \eqref{eqn:synchronverter} presents the required synchronization behavior.

We highlight that \eqref{eqn:synchronverter} may present multiple equilibrium points \cite{Lorenzetti2022}. Therefore, the study of the eventual system's synchronization to the grid boils down to the study of its convergence properties to any of its equilibria.
Nonetheless, the complexity of such analysis led to several works dedicated to answering this specific question \cite{natarajan2018almost}.

In this section, we will use $k$-contraction as a simpler tool to prove the convergence properties of the synchronverter model \eqref{eqn:synchronverter}.  In particular, we will
show that Theorem~\ref{thm:k_contraction_LMI} can be used to prove $2$-contractivity of system \eqref{eqn:synchronverter}. Then, by boundedness of the system's trajectories, the convergence of the system to a (possibly non-unique) equilibrium point is guaranteed by item 2) of  Lemma~\ref{lem:non_periodic}. 

We consider the model parameters in \cite{Lorenzetti2022}, that is  $w_n = w_g = 100\pi, \ V= 230\sqrt{3}, \ J = 0.2, \ R = 1.875, \ L = 0.05675, \ Dp = 10, \ m = 3.5$, with $T_m=0,i_f = 1$. Moreover, we consider the forward invariant set $\cS$ defined as $x_1 \in\begin{bmatrix}-81 & 5\end{bmatrix}, \ x_2 \in\begin{bmatrix}-67 & 10.5\end{bmatrix}, \ x_3 \in\begin{bmatrix}298 & 315\end{bmatrix}, \ x_4 \in\begin{bmatrix}1 & -0.2\end{bmatrix}$. Then, by means of the convex relaxation approach mentioned in Remark~\ref{rem:convex_relaxation},
the matrix inequalities \eqref{eqn:P0_LMI}-\eqref{eqn:mu_sum_condition_NL} can be proven to hold with with $\mu_0 = 14, \mu_1 = -15$ and
$P_0 =  \smallmat{
        0.00007 &  0.000001 & -0.0005 &  0.00145 \\
   0.000001 &  0.00006 & -0.000076 &  0.001036 \\
  -0.0005 & -0.000076 &  0.011 & -0.02135 \\
   0.00145 &  0.001036 & -0.02135 &  0.9882 \\
    }, \,
    P_1 =  \smallmat{
        0.0007 &  0.00001 & -0.00851 &  -0.0692 \\
   0.00001 &  0.00027 & -0.000382&  -0.00009 \\
  -0.00851 & -0.000382 &  0.157&  1.308 \\
  -0.0692 & -0.00009 &  1.308 &  0.842 \\
    }.
$ Therefore, by Theorem~\ref{eqn:2_contraction_LMI}, \eqref{eqn:synchronverter} is $2$-contractive on $\cS$  and the desired stability property follows by item 2) of Lemma~\ref{lem:non_periodic}.
\color{black}

\subsection{Multi-stability via $2$-contractive feedback}
Consider the following 
nonlinear system:
\begin{equation}\label{eqn:example_controller}
\begin{aligned}
\dot x_1 = x_2 -x_3, \,
\dot x_2 =-x_1-x_3 + u, \,
\dot x_3 = x_1(x_1^2-0.25).
\end{aligned}
\end{equation}
In the absence of input ($u=0$), the system trajectories present non-trivial periodic solutions and the 3-additive compound is  $\frac{\partial f}{\partial x}(x)^{[3]}=0$.  Therefore, the system is not 3-contractive (nor 2-contractive nor 1-contractive).
The objective is to design a linear state-feedback controller $u=-Kx$, such that the system converges to a non-necessarily unique equilibrium point.

The closed-loop system has 3 equilibrium points, 
\begin{align*}
    &x_1^* = \{0,-0.5,0.5\}, \quad 
    x_2^* = x_3, \quad x_3^* = \dfrac{-(1+k_1)}{1+k_3+k_2}x_1,
    \end{align*}
where $k_1,k_2,k_3$ are the components of the feedback gain $K$. The existence of multiple equilibrium points prevents the design of a $1$-contractive linear controller, e.g., \cite{giaccagli2023lmi}. 
Nonetheless, it may still be possible to design a controller that  guarantees   
$2$-contraction 
 of the closed-loop. 
Since the system dynamics evolve a in compact set, 2-contraction guarantees convergence to a (possibly non-unique) equilibrium point by means of
item 2) of Lemma~\ref{lem:non_periodic}.

 With the aim of exploiting the results in Proposition~\ref{thm:2_contraction_feedback_design}, via the convex relaxation technique mentioned in Remark~\ref{rem:convex_relaxation} one can verify that the matrix inequalities \eqref{eqn:k_contraction_LMI_NL_stab}-\eqref{eqn:k_contraction_LMI_NL_stab3} are satisfied with $\mu_0 = 0.3, \mu_1 = -0.6$ and
$W_0 =  \smallmat{
        1.86  & -1.21  & -0.92\\
   -1.21 &   2.13  &  0.86\\
   -0.92  &  0.86  &  0.96
    }, \,
    W_1 =  \smallmat{
        2.84  &  0.06  &  2.65\\
    0.06 &   0.27  & -5.16\\
    2.65  & -5.19 &  -2.29
    }.
$
With this specific selection, we have $\omega = 0.048$, which implies \eqref{eqn:mu_sum_condition_NL_stab} is satisfied. 
Therefore, according to Proposition~\ref{thm:2_contraction_feedback_design} the state-feedback law $u=-Kx$ with 
 gain  \eqref{eqn:feedback_design_NL}, namely $K = \smallmat{0.89   & 2.16  & -1.18}$,
makes the closed-loop system $2$-contractive. Indeed, the closed-loop system satisfies \eqref{eqn:compound_LMI} with $\eta =0.091$ and
$
Q =  \smallmat{
    1.26  & -0.06  &  0.46\\
   -0.05  &  2.74  & -1.34\\
    0.41  & -1.34  &  2.33
},
$
which validates the result by means of Theorem~\ref{thm:demidovich}. More precisely, the closed-loop system presents 3 equilibrium points, one unstable at the origin and $2$ (locally) asymptotically stable.

\color{black}

\section{Proof of linear results (analysis)}\label{sec:proof_L}

In the rest of this section, given a matrix $A\in \RR^{n\times n}$, we order its eigenvalues in such a way that
\begin{equation}\label{NEQ:numbering}
\Re(\lambda_1)\geq\Re(\lambda_2)\geq\dots\geq\Re(\lambda_n).
\end{equation}

\subsection{$\mathit{k}$-contraction properties and inertia theorem}

The following result is a direct consequence of the additive compound definition.

\begin{lemma}\label{lem:spectral-k-properties}
A system $\dot x =Ax$, with $x\in \RR^{n}$
is $k$-contractive if and only if the eigenvalues $\lambda_i$ of the matrix $A$ satisfy
\begin{equation}\label{eqn:necessary_cond_example}
    \sum_{i =1}^{k}\Re{(\lambda_{i})}<0 \ ,
\end{equation}
according to the above mentioned ordering \eqref{NEQ:numbering}.
\end{lemma}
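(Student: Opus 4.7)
The plan is to reduce the claim to a spectral statement about the additive compound $A^{[k]}$. The two key algebraic facts to exploit are: (i) the multiplicative/additive compound identity $(e^{At})^{(k)} = e^{A^{[k]} t}$ for all $t \geq 0$, and (ii) the spectral mapping theorem for additive compounds, which says that the eigenvalues of $A^{[k]}$ are exactly the sums $\lambda_{i_1} + \cdots + \lambda_{i_k}$ taken over all index tuples $i_1 < \cdots < i_k \in \{1,\ldots,n\}$. Under the ordering \eqref{NEQ:numbering}, the eigenvalue of $A^{[k]}$ with largest real part is therefore precisely $\sum_{i=1}^{k} \Re(\lambda_i)$. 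Consequently, $A^{[k]}$ is Hurwitz if and only if \eqref{eqn:necessary_cond_example} holds.

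For the sufficiency direction, assume \eqref{eqn:necessary_cond_example}. Then $A^{[k]}$ is Hurwitz, so there exist $a, b > 0$ with $\|e^{A^{[k]} t}\| \leq b\, e^{-a t}$ for all $t \geq 0$. Since the variational flow of \eqref{eqn:linear_system} is simply $\difflow^t(x_0) = e^{At}$ for every $x_0$, identity (i) gives
\[
\big|\Psi(t,x_0)^{(k)}\big| = \big|e^{A^{[k]} t}\, \Psi(0,x_0)^{(k)}\big| \leq b\, e^{-at}\, \big|\Psi(0,x_0)^{(k)}\big|,
\]
which is exactly infinitesimal $k$-contraction on $\RR^n$. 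Theorem~\ref{thm:inf_k_contr} then yields $k$-contraction in the sense of Definition~\ref{def:k_contraction}.

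For the necessity direction, I would argue by contrapositive. Assume $\sum_{i=1}^k \Re(\lambda_i) \geq 0$, so that $A^{[k]}$ admits an eigenvalue with non-negative real part. Pick linearly independent vectors $v_1, \ldots, v_k \in \RR^n$ such that the compound vector formed from $V \defeq [v_1 | \cdots | v_k]$, namely $V^{(k)}$, has a non-zero component along the (generalized) eigenspace of $A^{[k]}$ corresponding to this eigenvalue; such a $V$ exists since compound vectors span the domain of $A^{[k]}$. Define the affine immersion
\[
\Phi(r) \defeq \sum_{i=1}^{k} r_i v_i, \qquad r \in [0,1]^k,
\]
which belongs to $\cI_k$. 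Then $\psi^t \circ \Phi(r) = e^{At} \sum r_i v_i$, so $\tfrac{\partial (\psi^t\circ\Phi)}{\partial r}(r) = e^{At} V$ and, by the Cauchy--Binet formula combined with identity (i),
\[
\vol(\psi^t\circ\Phi) = \bigl|(e^{At} V)^{(k)}\bigr| = \bigl|e^{A^{[k]} t} V^{(k)}\bigr|,
\]
(taking $P = I$ for concreteness, since the norm equivalence hidden in $P$ only affects the overshoot $b$). By construction, this quantity does not decay exponentially to zero: it either grows or stays bounded below by a positive constant. This contradicts \eqref{eqn:k_contraction_continuous}, proving the necessity.

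The only delicate point is the existence of the vectors $v_1,\ldots,v_k$ in the last paragraph. It amounts to noting that the linear span of $\{V^{(k)} : V \in \RR^{n \times k}\}$ is all of $\RR^{\binom{n}{k}}$ (the decomposable $k$-vectors span the whole $k$-th exterior power), which together with the fact that the eigenspace of $A^{[k]}$ associated to the non-negative eigenvalue is a non-trivial subspace guarantees one can pick such a $V$. All remaining steps are routine consequences of the compound identities and of the linear structure of \eqref{eqn:linear_system}.
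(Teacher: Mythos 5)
Your proposal is correct and follows the same essential route as the paper: reduce $k$-contraction to the Hurwitz property of $A^{[k]}$ and then invoke the spectral mapping property that the eigenvalues of $A^{[k]}$ are the $k$-fold sums $\lambda_{i_1}+\cdots+\lambda_{i_k}$, so that the dominant one is $\sum_{i=1}^k\lambda_i$ under the ordering \eqref{NEQ:numbering}. The difference is that the paper outsources both implications of the equivalence ``$k$-contractive $\Leftrightarrow$ $A^{[k]}$ Hurwitz'' to citations (\cite{Wu2022} for sufficiency, an adaptation of \cite[Lemma 1]{zoboli2024k} for necessity), whereas you prove them: sufficiency via $(e^{At})^{(k)}=e^{A^{[k]}t}$ and Theorem~\ref{thm:inf_k_contr}, and necessity via an explicit affine test immersion $\Phi(r)=\sum_i r_i v_i$ whose volume equals $|e^{A^{[k]}t}V^{(k)}|$ by Cauchy--Binet, together with the observation that decomposable $k$-vectors span the compound space so one can hit the offending generalized eigenspace. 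This makes your argument self-contained where the paper's is not. One small imprecision: when $V^{(k)}$ has a nonzero component in the generalized eigenspace of an eigenvalue with nonnegative real part, $|e^{A^{[k]}t}V^{(k)}|$ need not literally ``grow or stay bounded below by a positive constant''; the clean statement is that it decays at most polynomially (since $\|e^{-Mt}\|$ grows at most polynomially when all eigenvalues of $M$ have nonnegative real part), which still contradicts the exponential bound \eqref{eqn:k_contraction_continuous} and so does not affect the conclusion.
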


\begin{proof}
 A necessary and sufficient condition for $k$-contraction of a linear system $\dot x =Ax$ is that the matrix $A^{[k]}$ is Hurwitz \cite{Wu2022}.  The necessity part can be proven by adapting \cite[Lemma 1]{zoboli2024k} to continuous time. Moreover, recall that a spectral property of the additive compound matrix is that the eigenvalues of the matrix $A^{[k]}$ are all the possible sums of the form $\lambda_{i_1}+\lambda_{i_2}+\dots+\lambda_{i_k}$, with $ 1 
\leq i_1<\dots < i_k \leq n$, see \cite{Wu2022}. That is, a necessary and sufficient condition for $k$-contraction is that
the sum 
of any combination of $k$ eigenvalues 
of $A$ is negative.
In particular, this holds true if and only if the 
first $k$ eigenvalues satisfy condition
\eqref{eqn:necessary_cond_example}
with  the ordering \eqref{NEQ:numbering}.
\end{proof}

Additionally, we state a lemma about inertia properties of the Lyapunov equation \eqref{eqn:corollary++LMI}, collecting
together various results from the literature, 
e.g., 
\cite[Lemma 1, Section 3]{smith1979poincare}, \cite[Theorem 2.5]{stykel2002stability}.

\begin{lemma}\label{lem:inertia_shifting}\itshape
Given a matrix $A\in \RR^{n\times n}$,  a real constant $\mu$, 
and an integer $p\in \{0,\ldots, n\}$,
the following statements are equivalent:
\begin{enumerate}
    \item $A$ has $p$ eigenvalues with real part larger than $\mu$
and $n-p$ eigenvalues with real part smaller than $\mu$,
    \item the matrix $A-\mu I$ has inertia $(n-p,0,p)$,
\item there exists a  symmetric matrix $P\in\RR^{n\times n}$ with 
inertia $\cine(P)=(p, 0 , n-p)$ satisfying
        $A^\top P  + P A   \prec 2\mu P$. 
    \item given a symmetric positive definite matrix 
$Q\succ 0$
   there exists a  symmetric matrix $P\in\RR^{n\times n}$ with inertia
 $\cine(P) =(p,0,n-p)$
    satisfying 
    $
 (A-\mu I)^\top P + P\ (A-\mu I) =  -Q.
$
\end{enumerate}
\end{lemma}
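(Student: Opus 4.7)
The plan is to reduce all four statements to the classical Ostrowski--Schneider inertia theorem applied to the shifted matrix $M \defeq A - \mu I$. I would organize the argument as $(1) \Leftrightarrow (2)$ by a spectral shift, followed by $(2) \Leftrightarrow (3) \Leftrightarrow (4)$ by a Lyapunov-type inertia result.

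First, I would dispatch $(1) \Leftrightarrow (2)$ directly: the spectrum of $M$ is simply the spectrum of $A$ translated by $-\mu$, so the $p$ eigenvalues of $A$ with $\Re(\lambda) > \mu$ correspond to eigenvalues of $M$ with positive real part, while the $n-p$ eigenvalues of $A$ with $\Re(\lambda) < \mu$ correspond to eigenvalues of $M$ with negative real part. The absence of a middle index in $(n-p,0,p)$ encodes that no eigenvalue of $A$ lies on the vertical line $\Re(z) = \mu$.

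Next, I would rewrite the Lyapunov expressions in (3)--(4). Since $P$ is symmetric,
$$A^\top P + PA - 2\mu P = (A - \mu I)^\top P + P(A - \mu I) = M^\top P + PM,$$
so (3) becomes the strict Lyapunov inequality $M^\top P + PM \prec 0$ admitting a symmetric solution with $\cine(P) = (p, 0, n-p)$, and (4) becomes the Lyapunov equation $M^\top P + PM = -Q$ for arbitrary $Q \succ 0$ with the same prescribed inertia for $P$. The analytic heart is then the classical Ostrowski--Schneider inertia theorem (see, e.g., \cite{smith1979poincare,stykel2002stability}): a square matrix $M$ has no imaginary eigenvalues if and only if there exists a nonsingular symmetric $P$ with $M^\top P + PM \prec 0$; equivalently, if and only if for every $Q \succ 0$ the equation $M^\top P + PM = -Q$ admits a (unique) symmetric solution $P$. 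In either case the inertia of $P$ is determined by $M$ via $\cine(P) = (\pi_+(M), 0, \pi_-(M))$. Applying this with $M = A - \mu I$ and $\cine(M) = (n-p,0,p)$ forces $\cine(P) = (p, 0, n-p)$ and closes the chain $(2) \Leftrightarrow (3) \Leftrightarrow (4)$.

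I do not anticipate serious technical obstacles: the result is a classical combination of the spectral mapping theorem and the Ostrowski--Schneider theorem, already documented in the references cited by the paper. The only piece requiring a bit of care is the equivalence between the strict-inequality form (3) and the parameterized equality form (4): the direction $(4) \Rightarrow (3)$ is immediate by taking the right-hand side $-Q \prec 0$, while for $(3) \Rightarrow (4)$ I would set $Q \defeq -(M^\top P + PM) \succ 0$, which exhibits $P$ as the (unique) solution of the Lyapunov equation for this particular $Q$; the theorem's conclusion that the inertia formula holds \emph{for every} $Q \succ 0$ then recovers (4) in the stated generality.
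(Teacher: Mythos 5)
Your proposal follows essentially the same route as the paper: $(1)\Leftrightarrow(2)$ by the spectral shift $M=A-\mu I$, and $(2)\Leftrightarrow(3)\Leftrightarrow(4)$ by the classical inertia theorem applied to $M$ --- the Ostrowski--Schneider theorem you invoke is exactly the content of the results of Smith and Stykel cited in the paper, and your observation that $(4)\Rightarrow(3)$ by substituting $-Q\prec0$ is also the paper's closing step. The only caution is that the classical theorem gives a symmetric solution of $M^\top P+PM=-Q$ with the prescribed inertia for \emph{some} $Q\succ0$, with uniqueness only when $\sigma(M)\cap\sigma(-M)=\emptyset$ (e.g.\ for $M=\mathrm{diag}(1,-1)$ the left-hand side is always diagonal, so no solution exists for a non-diagonal $Q\succ0$), so your parenthetical ``(unique) \ldots for every $Q$'' slightly overstates the result --- though item $(4)$ of the lemma is phrased with the same ambiguity and the paper resolves it only by citation.
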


\begin{proof}
    The eigenvalues of $A-\mu I$ are the shifted eigenvalues 
    $\lambda_1-\mu, \dots,  \lambda_n-\mu$, 
    ordered as in \eqref{NEQ:numbering}.
    If $A-\mu I$ has inertia $(n-p,0,p)$,
    it implies that $\Re(\lambda_{p+1})-\mu<0$.
    This shows that  $(1)$ and $(2)$ are equivalent.
    The equivalence between $(3)$ and $(2)$
    is due to \cite[Lemma 1, Section 3]{smith1979poincare} and
    the one between $(2)$ and $(4)$
    is due to \cite[Theorem 2.5]{stykel2002stability}.
    Finally, we have that  $(4)$ implies  $(3)$.
    Indeed, since $Q\succ 0$, $(4)$ implies
    $
    A^\top P + P A = -2\mu P -Q \prec -2\mu P.
    $
    \end{proof}

\subsection{Proof of Theorem~\ref{thm:k_contraction_LMI_cLTI}}\label{subsec:proof_theorem_cLTI}

The proof of 
Theorem~\ref{thm:k_contraction_LMI_cLTI}
is obtained by combining the two previous lemmas.
To begin with, we introduce a new notation in order to represent  the eigenvalues of $A$ and their associated multiplicities. Precisely, consider the matrix $A$ in  \eqref{eqn:linear_system} and
let $\Pi:\CC\to\RR$ denote the canonical projection onto the real axis. Let $\sigma(A)$ be the spectrum of $A$ and suppose
\(
\Pi(\sigma(A))
= \{\alpha_1, \alpha_2, \dots, \alpha_q\}
\)
($q\le n$)
with 
\(
\alpha_1 > \alpha_2 > \dots > \alpha_q.
\)
Set
\(
\bar h_i = \card \big(\Pi^{-1}(\alpha_{i+1}) \bigcap \sigma(A)\big),
\)
where eigenvalues have been counted with their algebraic multiplicities
(so that $\bar h_0+ \bar h_1 + \cdots + \bar  h_{q-1} = n$). Finally, let $\bar d_0=0$ and
$
\bar d_i =\sum_{j=0}^{i-1}\bar h_j$,
for each $i\in\{1, \dots, q-1\}
$. 

We remark that the variables $\bar d_i,\bar h_i$ are related to variables $d_i,h_i$ of Theorem~\ref{thm:k_contraction_LMI_cLTI}, hence the similar notation. To see this relation, notice that  $\bar h_i$ represents the number of eigenvalues that project to $\alpha_{i+1}$ and $\bar d_i$ represents the amount of eigenvalues with real part strictly larger than $\alpha_{i+1}$. Therefore,  for any constant $\mu\in\RR$ such that $\alpha_1<\mu$, we have $\cine(-A+\mu I)=(0,0,n)$. Similarly, if $\alpha_{i+1}<\mu<\alpha_{i}$, we have that $\cine(-A+\mu I)=(\bar d_i,0,n-\bar d_i)$. Consequently, if we avoid the singular case $\pi_0(-A+\mu I)>0$, the matrix $-A+\mu I$ can only present a particular set of inertia $(\bar d_i,0,n-\bar d_i)$ with $i\in\{0,\dots,q-1\}$. From this fact and Lemma~\ref{lem:inertia_shifting}, we obtain that the matrix inequalities in \eqref{eqn:corollary++LMI} are only feasible for the particular set of inertia $\cine(P_i) = (\bar d_i,0,n-\bar d_i)$ with $i\in\{0,\dots,q-1\}$. A direct consequence of this result is that $\ell \leq q$.

We now present the main arguments proving sufficiency and necessity of the result in Theorem~\ref{thm:k_contraction_LMI_cLTI}.

 \noindent
\emph{Sufficiency.}
In order to prove the sufficiency, 
we will show that the set of inequalities
\eqref{eqn:corollaryLMI}
implies 
the condition 
\eqref{eqn:necessary_cond_example}.
To this end, notice that a solution of \eqref{eqn:corollary++LMI} for some $\mu_i\in\RR$ and $P_i$ with inertia $\cine(P_i) = (d_i,0,n-d_i)$ implies that $\cine(-A+\mu I) = (d_i,0,n-d_i)$ by means of Lemma~\ref{lem:inertia_shifting}. That is, $A$ has only $d_i$ eigenvalues with real part strictly larger than $\mu_i$. This is equivalent to the bound 
\begin{equation}\label{eqn:eig_bound}
\Re(\lambda_{d_{i+1}})\leq\Re(\lambda_{d_{i}+1})<\mu_i, \quad \forall i\in \{0, \dots, \ell-1\}.
\end{equation}
Now, due to Lemma~\ref{lem:inertia_shifting}, we have that $\mu_{i+1}<\mu_i$ for all $i\in\{0,\dots,\ell-2\}$. Therefore, the bound \eqref{eqn:corollary++mu} implies $\mu_{\ell-1}\leq0$, which combined with \eqref{eqn:eig_bound}, implies $\lambda_{d_\ell}<0$. Additionally,
 because the eigenvalues are ordered as per \eqref{NEQ:numbering},
the following bound trivially holds for all  $i\in \{0, \dots, \ell-1\}$
$$
\sum_{j =d_{i}+1}^{d_{i+1}}\Re{(\lambda_{j})}\leq (d_{i+1}-d_i) \Re{(\lambda_{d_{i}+1})}= h_i \Re{(\lambda_{d_{i}+1})},
$$
where the definition $ h_i  = (d_{i+1}-d_i)$ has been used.
Combining this bound with the fact that $d_\ell \leq k$, the bound $\lambda_{d_\ell}<0$ and the fact that the eigenvalues are ordered as in \eqref{NEQ:numbering}, we obtain the following,
\begin{align}
\sum_{i =1}^{k}\Re{(\lambda_{i})}
\leq \sum_{i =1}^{d_\ell}\Re{(\lambda_{i})}
&=\sum_{i =0}^{\ell-1}\sum_{j =d_{i}+1}^{d_{i+1}}\Re{(\lambda_{j})}
\nonumber \\
&\leq \sum_{i =0}^{\ell-1} h_i \Re{(\lambda_{d_{i}+1})}.
\label{eqn:eqn:eig_sum_bound}
\end{align}
Then, combining \eqref{eqn:eig_bound},  \eqref{eqn:eqn:eig_sum_bound} and \eqref{eqn:corollary++mu}  we have   
\begin{equation}
    \sum_{i =1}^{k}\Re{(\lambda_{i})} < \sum_{i=0}^{\ell-1}{h_{i}} \, \mu_i\le 0\, .
\end{equation}     
Consequently, the system is $k$-contractive  by Lemma~\ref{lem:spectral-k-properties}.

\noindent
\emph{Necessity.}
Define
\begin{equation}\label{eqn:allowed_pdom_indices}
    \begin{aligned}
    p_k
    &\defeq \max \left(\{\bar d_0, \bar d_1, \dots, \bar d_{q-1}\}\bigcap [0, k-1]\right), \\
    c_k
    &\defeq \card \left(\{\bar d_0, \bar d_1, \dots, \bar d_{q-1}\}\bigcap [0, k-1]\right).
    \end{aligned}
\end{equation}
Then, the following equality holds
\begin{equation}\label{eqn:alpha_equality}
(k-p_k)\alpha_{c_k} +\sum_{i=0}^{c_k-2}{h_{i}} \, \alpha_{i+1} =\sum_{i =1}^{k}\Re{(\lambda_{i})}.
\end{equation}
Hence, combining Lemma~\ref{lem:spectral-k-properties} and \eqref{eqn:alpha_equality}, if the system is $k$-contractive, the next bound is satisfied
\begin{equation}\label{eqn:necessary_cond_alpha}
(k-p_k)\alpha_{c_k} +\sum_{i=0}^{c_k-2}{h_{i}} \, \alpha_{i+1} < 0.
\end{equation}
Then, by continuity, there exists a scalar $\varepsilon>0$, such that
$$
    (k-p_k) (\alpha_{c_k}+\varepsilon) +\sum_{i=0}^{c_k-2}{h_{i}} \, (\alpha_{i+1}+\varepsilon) \le 0\,.
$$ 
Next, fix $\ell = c_k$, $d_\ell = k-p_k+d_{\ell-1}$, $d_i = \bar d_i$ for all $i\in\{0,\dots, \ell-1\}$ and select
$
\mu_{i-1}=\varepsilon + \alpha_{i},$ for all $ i\in\{1,\dots,\ell\}
$. We have 
\begin{align*}
\sum_{i=0}^{\ell-1}{h_{i}}\mu_i&=(k-p_k) (\alpha_{c_k}+\varepsilon) +\sum_{i=0}^{c_k-2}{h_{i}} \, (\alpha_{i+1}+\varepsilon) \le 0,
\end{align*}
thus showing \eqref{eqn:corollary++mu}.
Now, define matrices $\hat A_i \defeq A-\mu_i I$ with $i\in\{0,\dots,\ell-1\}$. It is clear that, since $\mu_{i-1}>\alpha_{i}$ by definition, each matrix $\hat A_i$ has  $\bar d_i$  eigenvalues with positive real part and $n-\bar d_i$  eigenvalues with negative real part. That is $\cine(\hat A_i) = (n-\bar d_i,0,\bar d_i)$. Then, by Lemma~\ref{lem:inertia_shifting}, there exist symmetric matrices $P_i$ with $\cine(P_i)=\cine(-\hat A_i) =(\bar d_i,0,n-\bar d_i)$ such that
$$
 A^\top P_i +  P_iA  \prec 2\mu_i P_i\qquad \forall i=0,\dots, \ell-1\,,
$$
 thus concluding the proof.

\section{Proof of linear results (design)}

This section follows the next notation. First, we will consider that every pair $(A,B)$, is algebraically equivalent to the form in \eqref{eqn:controllable_decomposition}. Second,
we define 
$
\Re(\lambda_1^u)\geq \dots \geq \Re(\lambda_{n_u}^u),
$
as the ordered set of eigenvalues of $A_{u}$.

\subsection{Proof of Lemma~\ref{lem:k-order_stabilizability}}\label{sec:proof_lem_stab}

 \noindent
\emph{Sufficiency.}
Without loss of generality, consider a system in the form \eqref{eqn:controllable_decomposition}.
After a feedback design $u= -Kx = -[K_c \ K_u](x_c^\top,x_u^\top)^\top$
is selected, the closed-loop eigenvalues
are given by $\sigma(A) = \sigma(A_c-B_c K_c)\cup \sigma(A_u)$.
Then,  we can arbitrarily assign the eigenvalues of the closed-loop matrix $A_c-B_c K_c$, since the pair $(A_c,B_c)$ is controllable. 
As a consequence, let $c =  k\lambda_1^u$ 
and select $K_c$ such that 
the largest eigenvalue of $A_c-B_c K_c$
has real part smaller than $-|\Re(c)|$.
Then, the conditions of Lemma~\ref{lem:spectral-k-properties} are satisfied 
either if $A_u$ is $k$-contractive or $n_u<k$ by construction.

 \noindent
\emph{Necessity.}
If  $n_u\geq k$ and $\dot x = A_u x$ is not $k$-contractive, then, there is a sum of $k$ eigenvalues in the spectrum of $A_u$ that is positive, see Lemma~\ref{lem:spectral-k-properties}. Therefore, since the spectrum of $A_u$ is invariant to the controller gain, the closed-loop system $A-BK$ cannot be $k$-contractive
(invoking again Lemma~\ref{lem:spectral-k-properties}), which proves necessity.

\subsection{Inertia theorems generalizing stabilizability conditions}

We state  a set of
new technical lemmas related to the feasibility and the inertia of the 
generalized stabilizability-like inequality
 \eqref{eqn:k_contraction_LMI_linear_stab2+}.
Note that the following lemmas do not require
the pair $(A,B)$ to be controllable, 
contrarily to \cite{wimmer_1976}.

\begin{lemma}\label{lem:splitting++}
Consider a pair of matrices $(A,B)$ and its
canonical decomposition 
\eqref{eqn:controllable_decomposition}. 
Suppose that for some $\mu\in\RR$,
$\cine(A_u-\mu I) = (n_u-\varrho, 0, \varrho)$
with
$\varrho\in \{0,\ldots, n_u\}$.
Then, there
exists a symmetric matrix $W\in \RR^{n\times n}$, with inertia $\cine(W)=(\varrho,0,n-\varrho)$, 
satisfying 
\begin{equation}
        WA^\top  +  AW  -BB^\top    \prec 2\mu W.\label{eqn:eigenvalue_shifting_LMI}
\end{equation}
\end{lemma}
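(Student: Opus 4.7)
The strategy is to reduce to the Kalman canonical coordinates \eqref{eqn:controllable_decomposition} and to construct $W$ block-diagonal in those coordinates. Since the similarity transformation $T$ that puts $(A,B)$ in canonical form acts by a congruence $W \mapsto T^{-\top} W T^{-1}$ on the inequality \eqref{eqn:eigenvalue_shifting_LMI} while preserving the inertia of $W$ via Sylvester's law, it suffices to carry out the argument in canonical coordinates.

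In those coordinates, I propose to take $W = \mathrm{diag}(W_c, \beta W_u)$ with $W_c \in \RR^{n_c \times n_c}$, $W_u \in \RR^{n_u \times n_u}$ symmetric and $\beta>0$ a scalar to be fixed. The block $W_u$ is obtained from Lemma~\ref{lem:inertia_shifting} applied to $A_u^\top$ (which shares its spectrum with $A_u$): the hypothesis $\cine(A_u - \mu I) = (n_u - \varrho, 0, \varrho)$ states that $A_u$ has exactly $\varrho$ eigenvalues with real part strictly larger than $\mu$, so statement (3) of the lemma yields a symmetric $W_u$ with inertia $(\varrho, 0, n_u - \varrho)$ satisfying $A_u W_u + W_u A_u^\top \prec 2\mu W_u$. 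For $W_c$, I exploit that controllability of $(A_c, B_c)$ is preserved under the diagonal shift, so $(A_c - \mu I, B_c)$ is controllable and, in particular, stabilizable. The classical Lyapunov test for stabilizability (see e.g.\ \cite[Section 14.4]{Hespanha09}) applied to the shifted pair then produces $W_c \succ 0$ satisfying $(A_c - \mu I) W_c + W_c (A_c - \mu I)^\top - B_c B_c^\top \prec 0$, which by strictness can be sharpened to $A_c W_c + W_c A_c^\top - B_c B_c^\top - 2\mu W_c \prec -\eta I$ for some $\eta > 0$.

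Plugging $W = \mathrm{diag}(W_c, \beta W_u)$ into \eqref{eqn:eigenvalue_shifting_LMI} gives the block expression
\begin{equation*}
W A^\top + A W - B B^\top - 2\mu W = \begin{bmatrix} \Sigma_c & \beta A_{12} W_u \\ \beta W_u A_{12}^\top & \beta \Sigma_u \end{bmatrix},
\end{equation*}
where $\Sigma_c \prec -\eta I$ and $\Sigma_u := A_u W_u + W_u A_u^\top - 2\mu W_u \prec 0$. By the Schur complement criterion with the negative-definite pivot $\beta \Sigma_u$, negative definiteness of this block matrix reduces to $\Sigma_c - \beta\, A_{12} W_u \Sigma_u^{-1} W_u A_{12}^\top \prec 0$. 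Since $-\Sigma_u^{-1} \succ 0$, the correction $M_0 := -A_{12} W_u \Sigma_u^{-1} W_u A_{12}^\top$ is a fixed positive semidefinite matrix independent of $\beta$, and the inequality becomes $\Sigma_c + \beta M_0 \prec 0$. Choosing $\beta \in (0, \eta/\lVert M_0\rVert)$ (any $\beta>0$ if $M_0 = 0$) yields the claim. The inertia count follows at once: $W_c\succ 0$ contributes $(0,0,n_c)$, while $\beta W_u$ inherits the inertia $(\varrho, 0, n_u - \varrho)$ of $W_u$, so $\cine(W) = (\varrho, 0, n - \varrho)$.

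The delicate ingredient is the existence of $W_c$ at an arbitrary level $\mu$ that may lie inside the spectrum of $A_c$; the standard stabilizability Lyapunov test does not directly handle this, but the shift trick $(A_c, B_c) \rightsquigarrow (A_c - \mu I, B_c)$ bypasses the obstruction because controllability is invariant under diagonal shifts. The auxiliary parameter $\beta$ plays the purely technical role of decoupling the mixed-inertia uncontrollable block from the positive-definite controllable one so that the Schur complement can be closed.
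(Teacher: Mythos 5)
Your proof is correct and follows essentially the same strategy as the paper's: a block-diagonal candidate $W=\mathrm{diag}(W_c,\beta W_u)$ in Kalman coordinates, with $W_u$ supplied by Lemma~\ref{lem:inertia_shifting} for the uncontrollable block and $W_c\succ0$ from a Lyapunov-type test on the shifted controllable pair, followed by shrinking the scalar on the uncontrollable block to absorb the coupling term $A_{12}$. The only (immaterial) differences are that the paper obtains $W_c$ from the controllability Lyapunov equation for $-\gamma I-\hat A_c$ with $\gamma$ large rather than from the stabilizability LMI directly, and closes the final step by a generic small-perturbation argument where you use an explicit Schur complement.
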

\begin{proof}
Without loss of generality, suppose that the pair $(A,B)$ is in the form \eqref{eqn:controllable_decomposition}. 
Moreover, define the shifted matrices $\hat{A}_{c}\defeq A_{c}-\mu I$ and $\hat{A}_{u}\defeq A_{u}- \mu I$.
Then, since $\cine(\hat{A}_{u}) = (n_u-\varrho, 0, \varrho)$ and by Lemma~\ref{lem:inertia_shifting}, there exist some symmetric matrix $W_u\in\RR^{n_u\times n_u}$ with inertia $\cine(W_u)=(\varrho,0,n_u-\varrho)$ such that
\begin{equation}\label{eqn:uncontrollable_part_LMI}
 W_u \hat{A}_{u}^\top + \hat{A}_{u}W_u  = - Q,
\end{equation}
with $Q\succ 0$. Furthermore, since the pair $(A_c,B_c)$ is controllable, the pair $(-\gamma I-A_c,B_c)$ is also controllable for any $\gamma\in\RR$.
Hence, for $\gamma>0$
large enough and
from the Lypaunov test of controllability \cite[Theorem 12.4]{Hespanha09}, there exist some positive symmetric matrix $W_c\succ 0$
$$
 W_c (-\gamma I-\hat{A}_{c})^\top + (-\gamma I-\hat{A}_{c})W_c  = - B_c B_c^\top
$$
which implies
\begin{equation}\label{eqn:controllable_part_LMI}
 W_c \hat{A}_{c}^\top + \hat{A}_{c}W_c - B_c B_c^\top = -2\gamma W_c.
\end{equation}
With this in mind, consider a symmetric matrix $W$ with inertia $\cine(W)=(\varrho,0,n-\varrho)$  of the form \begin{equation*} 
    W = \begin{bmatrix}
        W_c & 0\\ 0 & \kappa W_u
    \end{bmatrix}   
    \end{equation*}
where $\kappa>0$ has to be fixed, with $W_u$ and $W_c$ satisfying \eqref{eqn:uncontrollable_part_LMI} and \eqref{eqn:controllable_part_LMI}.
Now, by subtracting  $2\mu W$ from the left-hand side of \eqref{eqn:eigenvalue_shifting_LMI}, we get the following equality
\begin{multline}\label{eqn:hermitian_1}
  \begin{bmatrix}
        W_c & 0\\ 0 & \kappa W_u
    \end{bmatrix}   \begin{bmatrix}
            \hat{A}_{c}^\top & A_{12}^\top\\ 0 & \hat{A}_{u}^\top
        \end{bmatrix}
        +
        \begin{bmatrix}
            \hat{A}_{c} & A_{12}\\ 0 & \hat{A}_{u}
        \end{bmatrix}\begin{bmatrix}
        W_c & 0\\ 0 & \kappa W_u
    \end{bmatrix} 
    \\- \begin{bmatrix}
        B_c B_c^\top & 0 \\0 & 0
    \end{bmatrix} 
    = \begin{bmatrix}
        -2\gamma W_c & \kappa A_{12} W_u \\ \kappa W_uA_{12} ^\top & 
        -\kappa Q
    \end{bmatrix}.
\end{multline}
Since $W_c$ and $Q$ are positive definite, 
the right hand side of identity 
\eqref{eqn:hermitian_1}  can be made negative definite by taking $\kappa>0$  sufficiently small, see e.g. \cite[Section 14.4]{Hespanha09}.
\end{proof}

Additionally, we present the following technical lemma.
\begin{lemma}\label{lem:ricatti_spectrum_separation}
Consider a pair of matrices $(A,B)$ and its
canonical decomposition 
\eqref{eqn:controllable_decomposition}. Moreover, assume there exists a (non-singular) symmetric matrix $W\in\RR^{n\times n}$ and a constant $\mu\in\RR$ such that
\begin{equation}\label{eqn:ricatti_spectrum_separation}
         WA^\top  +  AW  -BB^\top    \prec 2\mu W.
    \end{equation}
    Then, $\pi_{-}({W})\geq \pi_{-}(-A_u+\mu I)$.    
\end{lemma}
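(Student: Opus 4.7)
The plan is to reduce the inequality to its restriction on the uncontrollable block via the Kalman decomposition, and then deduce an inertia bound on that block by a direct Lyapunov argument. First, I would note that under a change of state coordinates $x=Tz$, simultaneously sending $(A,B)\to (T^{-1}AT, T^{-1}B)$ and $W\to T^{-1}WT^{-\top}$, the left-hand side of \eqref{eqn:ricatti_spectrum_separation} becomes $T^{-1}(WA^\top + AW - BB^\top)T^{-\top}$ and the right-hand side $T^{-1}(2\mu W)T^{-\top}$, so both the strict inequality and the inertia of $W$ (by Sylvester's law of inertia) are preserved. Hence, without loss of generality, I would assume $(A,B)$ is already in the form \eqref{eqn:controllable_decomposition} and partition $W$ conformably as a symmetric $2\times 2$ block matrix with lower-right block $W_{22}\in\RR^{n_u\times n_u}$. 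Since $BB^\top$ has zero $(2,2)$ block and any principal submatrix of a negative definite matrix is negative definite, extracting the $(2,2)$ block of the LMI yields
$$
W_{22}(A_u-\mu I)^\top + (A_u-\mu I) W_{22}\prec 0.
$$

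Next, I would show that this inequality alone implies $\pi_-(W_{22})\geq \pi_-(-A_u+\mu I)$, \emph{without} requiring $W_{22}$ to be nonsingular. Set $\hat A_u\defeq A_u-\mu I$ and $k\defeq\pi_+(\hat A_u)=\pi_-(-A_u+\mu I)$, and let the columns of $U\in\RR^{n_u\times k}$ form an orthonormal basis of the (antistable) generalized eigenspace of $\hat A_u^\top$ associated with its eigenvalues of positive real part. Invariance gives $\hat A_u^\top U = UM$ for some $k\times k$ matrix $M$ with $\sigma(M)$ in the open right half-plane. Sandwiching the previous inequality by $U^\top(\cdot)U$ produces $M^\top \tilde W + \tilde W M\prec 0$, where $\tilde W\defeq U^\top W_{22} U$. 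Since $-M$ is Hurwitz, the functional $V(y) = -y^\top \tilde W y$ satisfies $\dot V<0$ along the asymptotically stable flow $\dot y = -My$, and integration from $0$ to $\infty$ gives $V(y_0)>0$ for every $y_0\neq 0$, i.e.\ $\tilde W \prec 0$. Therefore $W_{22}$ is negative definite on a $k$-dimensional subspace, and Courant--Fischer yields $\pi_-(W_{22})\geq k$.

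The conclusion then follows from Cauchy's interlacing theorem applied to the principal submatrix $W_{22}$ of the symmetric matrix $W$: the interlacing of eigenvalues forces $\pi_-(W_{22})\leq \pi_-(W)$, which combined with the previous bound delivers $\pi_-(W)\geq \pi_-(-A_u+\mu I)$. The main technical subtlety I anticipate is precisely that $W_{22}$ may be singular even when $W$ is not, so Lemma~\ref{lem:inertia_shifting} and the classical Ostrowski--Schneider inertia theorem do not apply directly to $W_{22}$; the invariant-subspace Lyapunov argument above bypasses this issue, as it never requires any definiteness or nonsingularity assumption on $W_{22}$ itself.
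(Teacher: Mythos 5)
Your proof is correct, but it follows a genuinely different route from the paper's. The paper exploits the assumed nonsingularity of $W$ to rewrite \eqref{eqn:ricatti_spectrum_separation} as $W\bar A^\top+\bar A W\prec 2\mu W$ with $\bar A\defeq A-\tfrac12 BB^\top W^{-1}$, observes that in the Kalman coordinates \eqref{eqn:controllable_decomposition} this feedback-like term cannot touch the lower-right block, so $\sigma(A_u-\mu I)\subseteq\sigma(\bar A-\mu I)$, and then invokes the inertia theorem (Lemma~\ref{lem:inertia_shifting}, i.e.\ Ostrowski--Schneider) on the \emph{full} matrix $W$ to get $\pi_-(W)=\pi_-(-\bar A+\mu I)\geq\pi_-(-A_u+\mu I)$. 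You instead never invert $W$: you localize the inequality to the uncontrollable block by taking the $(2,2)$ principal submatrix (correct, since the $(2,2)$ block of $BB^\top$ vanishes in these coordinates), establish negativity of $W_{22}$ on the antistable invariant subspace of $(A_u-\mu I)^\top$ by a direct Lyapunov integration, and lift the count back to $W$ via Cauchy interlacing. Each step checks out, including the block computation $W_{22}A_u^\top+A_uW_{22}-2\mu W_{22}\prec 0$ and the congruence/similarity invariance justifying the reduction to Kalman form. What your route buys is that it nowhere uses nonsingularity of $W$ (nor of $W_{22}$, as you correctly flag), so it proves a marginally stronger statement and is self-contained modulo interlacing; what the paper's route buys is brevity, since Lemma~\ref{lem:inertia_shifting} is already available and does the inertia bookkeeping in one stroke.
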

\begin{proof}
Without loss of generality, we suppose that the pair $(A,B)$ is in the form \eqref{eqn:controllable_decomposition}. Moreover, notice that the inequality \eqref{eqn:ricatti_spectrum_separation} can be re-arranged as follows 
\begin{equation}\label{eqn:ricatti_spectrum_separation2}
 W\bar A^\top  +  \bar A W \prec 2\mu W,
\end{equation}
where $\bar A := A - \dfrac{1}{2}BB^\top W^{-1}$.
Now, recall the notation in \eqref{eqn:controllable_decomposition} and notice that the eigenvalues in the spectrum of $A_u$ cannot be modified by the term $\dfrac{1}{2}BB^\top P^{-1}$, thus, we have $\sigma(A_u)\subsetneq \sigma(\bar A)$, or, equivalently, $\sigma(A_u-\mu I)\subsetneq \sigma(\bar A-\mu I)$. From this fact we obtain that $\pi_{-}(-\bar A+\mu I)\geq \pi_{-}(-A_u+\mu I)$.  Then, by Lemma~\ref{lem:inertia_shifting} we have that any (non-singular) $W$ that satisfies \eqref{eqn:ricatti_spectrum_separation2} necessarily implies $\pi_{-}(-\bar A+\mu I) = \pi_{-}(W)$, which concludes the proof.
\end{proof}

Finally, we present a technical lemma that relates the colinearity condition in \eqref{eqn:colinear} and the 
 stabilizability-like inequality
 \eqref{eqn:k_contraction_LMI_linear_stab2+}.
\begin{lemma}\label{lem:colinear_ricatti}
Consider a pair of matrices $(A,B)$ and its
canonical decomposition 
\eqref{eqn:controllable_decomposition}. 
Suppose that for some $\mu_1,\mu_2\in\RR$ with $\mu_1\leq\mu_2$,
$\cine(A_u-\mu_1 I) = (n_u-\varrho_1, 0, \varrho_1),\cine(A_u-\mu_2 I) = (n_u-\varrho_2, 0, \varrho_2)$
with
$\varrho_1,\varrho_2\in \{0,\ldots, n_u\}$.
Then, there
exist a pair of symmetric matrices $W_1,W_2\in \RR^{n\times n}$, with inertia $\cine(W_1)=(\varrho_1,0,n-\varrho_1),\cine(W_2)=(\varrho_2,0,n-\varrho_2)$, 
satisfying 
\begin{subequations}\label{eqn:Colinear_Ricatti}
\begin{align}
   &W_1A^\top  +  AW_1  -BB^\top    \prec 2\mu_1 W_1, \label{eqn:W1_ricatti} \\
        &W_2A^\top  +  AW_2  -BB^\top \prec 2\mu_2 W_2, \label{eqn:W2_ricatti}
\end{align}
    \end{subequations}
\begin{equation}\label{eqn:colinearity_W1W2}
  B^\top W_2^{-1} = B^\top W_1^{-1}\,.
\end{equation}
\end{lemma}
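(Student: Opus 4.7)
The plan is to revisit the block construction used in the proof of Lemma~\ref{lem:splitting++} and exploit the freedom in the controllable block to force a common value of $B^\top W_i^{-1}$. Without loss of generality, I work in coordinates where the pair $(A,B)$ is already in the canonical form \eqref{eqn:controllable_decomposition}; this is legitimate because both the Riccati-like inequalities in \eqref{eqn:Colinear_Ricatti} and the colinearity relation \eqref{eqn:colinearity_W1W2} transform consistently under the congruence $W_i\mapsto T^{-1}W_iT^{-\top}$ induced by the Kalman change of variables (a direct verification shows that $WA^\top+AW-BB^\top-2\mu W$ and $B^\top W^{-1}$ are both intertwined with this congruence). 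In these coordinates, the key observation is that if $W_1$ and $W_2$ are chosen block-diagonal of the form $W_i=\mathrm{diag}(W_c,\kappa_iW_{u,i})$ with a \emph{common} controllable block $W_c\succ 0$, then $B^\top W_i^{-1}=\begin{bmatrix}B_c^\top W_c^{-1} & 0\end{bmatrix}$ is independent of $i$, so \eqref{eqn:colinearity_W1W2} is automatic.

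The construction then proceeds in three steps. First, using the controllability of $(A_c,B_c)$ and the Lyapunov test of controllability \cite[Theorem~12.4]{Hespanha09}, I fix a single positive-definite $W_c\succ 0$ satisfying $W_c(A_c-\mu_1 I)^\top+(A_c-\mu_1 I)W_c-B_cB_c^\top=-2\gamma W_c$ for some $\gamma>0$ sufficiently large, exactly as in the proof of Lemma~\ref{lem:splitting++}. Since $\mu_1\le\mu_2$ and $W_c\succ 0$, this yields $W_cA_c^\top+A_cW_c-B_cB_c^\top\prec 2\mu_1W_c\preceq 2\mu_2W_c$, so the $(1,1)$ block of the target strict inequality is controlled for \emph{both} shifts. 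Second, by applying Lemma~\ref{lem:inertia_shifting} to $A_u$ with each $\mu_i$, I pick symmetric matrices $W_{u,i}\in\RR^{n_u\times n_u}$ of inertia $(\varrho_i,0,n_u-\varrho_i)$ such that $W_{u,i}(A_u-\mu_i I)^\top+(A_u-\mu_i I)W_{u,i}=-Q_i$ with $Q_i\succ 0$. Third, setting $W_i=\mathrm{diag}(W_c,\kappa_iW_{u,i})$ and mimicking the Schur/dominance argument of Lemma~\ref{lem:splitting++}, the matrix $W_iA^\top+AW_i-BB^\top-2\mu_iW_i$ has a block form whose $(1,1)$ block is negative definite independently of $\kappa_i$, whose off-diagonal blocks are linear in $\kappa_i$, and whose $(2,2)$ block equals $-\kappa_iQ_i$; taking $\kappa_i>0$ small enough makes the overall matrix strictly negative definite. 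The inertia of $W_i$ is $(\varrho_i,0,n-\varrho_i)$ since $W_c\succ 0$ contributes $n_c$ positive eigenvalues and $\kappa_i>0$ preserves the sign pattern of $W_{u,i}$, yielding in particular the invertibility needed to form $W_i^{-1}$.

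The main (and essentially only) non-trivial point is to check that a single $W_c$ can be used for both shifts simultaneously; this rests on the elementary monotonicity $2\mu_1W_c\preceq 2\mu_2W_c$ whenever $W_c\succ 0$ and $\mu_1\le\mu_2$, which preserves the strict Riccati-type inequality across both shifts. Once this is in place, the colinearity \eqref{eqn:colinearity_W1W2} drops out for free from the block-diagonal structure, because left multiplication by $B^\top=\begin{bmatrix}B_c^\top & 0\end{bmatrix}$ annihilates the uncontrollable block of $W_i^{-1}$ entirely, so neither $\kappa_i$ nor $W_{u,i}$ ever enters the product $B^\top W_i^{-1}$.
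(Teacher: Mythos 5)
Your proposal is correct and follows essentially the same route as the paper's proof: work in the Kalman canonical coordinates, reuse a single controllable block $W_c\succ 0$ obtained from the Lyapunov test of controllability for both shifts (the paper makes this explicit via the identity $W_c(A_c-\mu_2 I)^\top+(A_c-\mu_2 I)W_c-B_cB_c^\top=-2(\gamma+\mu_2-\mu_1)W_c$, which is exactly your monotonicity observation), pick $W_{i,u}$ of the required inertia from Lemma~\ref{lem:inertia_shifting}, and scale by small $\kappa_i$ so that the block matrix is negative definite, with colinearity following automatically from the common $W_c$ block and $B=(B_c^\top,0)^\top$. No gaps.
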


\begin{proof}
Without loss of generality, we suppose  the pair $(A,B)$ to be in the form \eqref{eqn:controllable_decomposition}. This is not a restrictive assumption since the colinearity condition \eqref{eqn:colinearity_W1W2} is preserved under linear coordinate changes $z = Tx$, for any non-singular constant matrix $T\in\RR^{n\times n}$.  

Now, since $\cine(A_u-\mu_1 I) = (n_u-\varrho_1, 0, \varrho_1)$, we can follow similar arguments as in Lemma~\ref{lem:splitting++} proof, to show that \eqref{eqn:W1_ricatti} can be satisfied with a symmetric matrix $W_1$ of the form 
\begin{equation}\label{eqn:W1_candidate} 
    W_1 = \begin{bmatrix}
        W_c & 0\\ 0 & \kappa_1 W_{1,u}
    \end{bmatrix}   
    \end{equation}
where $\kappa_1>0$ is a sufficiently small constant, $W_{1,u}\in\RR^{n_u\times n_u}$ is a symmetric matrix with inertia  $\cine(W_{1,u})=(\varrho_1,0,n_u-\varrho_1)$ and $W_c\succ 0$ is a positive definite symmetric matrix computed from
\begin{equation}\label{eqn:controllable_part_W1}
 W_c (A_c-\mu_1 I)^\top + (A_c-\mu_1 I)W_c - B_c B_c^\top = -2\gamma W_c.
\end{equation}
for some positive $\gamma>0$.

With this in mind, we can construct a solution to \eqref{eqn:W2_ricatti} such that $W_1$ and $W_2$ are colinear according to \eqref{eqn:colinearity_W1W2}. Since  $\cine(A_u-\mu_2 I) = (n_u-\varrho_2, 0, \varrho_2)$ and by Lemma~\ref{lem:inertia_shifting}, there exists some symmetric matrix $W_{2,u}\in\RR^{n_u\times n_u}$ with inertia $\cine(W_{2,u})=(\varrho_2,0,n_u-\varrho_2)$ such that, for some $Q\succ 0$,
\begin{equation}\label{eqn:uncontrollable_part_W2}
 W_{2,u} (A_u-\mu_2 I)^\top + (A_u-\mu_2 I)W_{2,u}  = - Q.
\end{equation}
Moreover, recall the relation \eqref{eqn:controllable_part_W1}, then, we derive the following
\begin{equation}\label{eqn:W2_Wc_relation}
\begin{aligned}
&W_c (A_c-\mu_2)^\top + (A_c-\mu_2 I)W_c- B_c B_c^\top 
\\
&=W_c (A_c-\mu_1 I)^\top + (A_c-\mu_1 I)W_c + 2(\mu_1-\mu_2)W_c - B_c B_c^\top \\
&=-2(\gamma+\mu_2-\mu_1)W_c.
\end{aligned}
\end{equation}
With this in mind, consider a symmetric matrix $W_2$  of the form \begin{equation}\label{eqn:W2_candidate} 
    W_2 = \begin{bmatrix}
        W_c & 0\\ 0 & \kappa_2 W_{2,u}
    \end{bmatrix}   
    \end{equation}
where $\kappa_2>0$ has to be fixed, with $W_{2,u}$ and $W_c$ satisfying \eqref{eqn:uncontrollable_part_W2} and \eqref{eqn:W2_Wc_relation}.
Now, by subtracting  $2\mu_2 W_2$ from the left-hand side of \eqref{eqn:W2_ricatti} and defining $\hat{A}_{c}:= A_c-\mu_2 I$ and $\hat{A}_{u}:= A_u-\mu_2 I$, we get the following equality
\begin{multline}\label{eqn:hermitian_2}
  \begin{bmatrix}
        W_c & 0\\ 0 & \kappa W_{2,u}
    \end{bmatrix}   \begin{bmatrix}
            \hat{A}_{c}^\top  & A_{12}^\top\\ 0 & \hat{A}_{u}^\top
        \end{bmatrix}
        +
        \begin{bmatrix}
            \hat{A}_{c} & A_{12}\\ 0 & \hat{A}_{u}
        \end{bmatrix}\begin{bmatrix}
        W_c & 0\\ 0 & \kappa W_{2,u}
    \end{bmatrix} 
    \\- \begin{bmatrix}
        B_c B_c^\top & 0 \\0 & 0
    \end{bmatrix} 
    = \begin{bmatrix}
        -2(\gamma+\mu_2-\mu_1)W_c & \kappa_2 A_{12} W_{2,u} \\ \kappa_2 W_{2,u}A_{12} ^\top & 
        -\kappa_2 Q
    \end{bmatrix}.
\end{multline}
Recall that $\mu_1\leq\mu_2$ by assumption and $\gamma>0,W_c\succ 0$ by design. Consequently, $-2(\gamma+\mu_2-\mu_1)W_c$ is negative definite and
the right hand side of identity 
\eqref{eqn:hermitian_2}  can be made negative definite by taking $\kappa_2>0$  sufficiently small. 

Finally, since the system is in the form \eqref{eqn:controllable_decomposition}, we have that $W_1,W_2$ constructed as in the block-diagonal form \eqref{eqn:W1_candidate} and \eqref{eqn:W2_candidate} satisfy \eqref{eqn:colinearity_W1W2}.  
\end{proof}
\subsection{Proof of Theorem~\ref{thm:k-order-stabilizability}}\label{sec:k-order-stabilizability_proof}

Now, similar to the proof in Section~\ref{subsec:proof_theorem_cLTI},  we introduce a notation in order to represent  the eigenvalues of $A_u$ and their associated multiplicities. Precisely,
consider the matrix $A_u$ in  \eqref{eqn:controllable_decomposition} 
 and $\sigma(A_u)$ its spectrum  and suppose
\(
\Pi(\sigma(A_u))
= \{\alpha_1, \alpha_2, \dots, \alpha_q\}
\)
($q\le n_u$)
where again  $\Pi:\CC\to\RR$ denote the canonical projection onto the real axis and
with 
\(
\alpha_1 > \alpha_2 > \dots > \alpha_q.
\)
Set
\(
\bar h_i = \card \big(\Pi^{-1}(\alpha_{i+1}) \bigcap \sigma(A_u)\big),
\)
where eigenvalues have been counted with their algebraic multiplicities
(so that $\bar h_1+ \bar h_2 + \cdots + \bar h_{q-1} = n_u$). Finally, let $\bar d_0=0$,
\(
\bar d_i =\sum_{j=1}^{i-1} \bar h_j, \; i\in\{1, \dots, {q-1}\}
\). Similar to Section~\ref{subsec:proof_theorem_cLTI}, $\bar h_i$ represents the number of eigenvalues of $A_u$ projected to $\alpha_{i+1}$ and $\bar d_i$ represents the amount of eigenvalues with real part strictly than $\alpha_{i+1}$. 

Finally,  by recalling Lemma~\ref{lem:k-order_stabilizability} 
and Lemma~\ref{lem:spectral-k-properties},  a necessary and sufficient condition for $k$-order stabilizability is either $n_u<k$ or 
\begin{equation}\label{eqn:necessary_cond_example_stab}
    \sum_{j =1}^{k}\Re{(\lambda^u_{j})}<0.
\end{equation} 

We now present the main arguments proving that the inequalities \eqref{eqn:k_contraction_LMI_linear_stab2_total} are necessary and sufficient for $k$-order stabilizability.

\smallskip
 \noindent
\emph{Sufficiency.}  
The goal of this proof is to show
that if \eqref{eqn:k_contraction_LMI_linear_stab2_total} is satisfied, then either \eqref{eqn:necessary_cond_example_stab} is satisfied or $n_u<k$, hence showing the result
invoking
Lemmas~\ref{lem:k-order_stabilizability} and \ref{lem:spectral-k-properties}. Without loss of generality, we suppose that the pair $(A,B)$ is in the form \eqref{eqn:controllable_decomposition}.

Firstly, we assume the case $k\leq n_u$ and we show that \eqref{eqn:k_contraction_LMI_linear_stab2_total} implies \eqref{eqn:necessary_cond_example_stab}. To this end and by means of Lemma ~\ref{lem:ricatti_spectrum_separation}, we have that inequality \eqref{eqn:k_contraction_LMI_linear_stab2+} implies that $\pi_{-}(W_i)\geq\pi_{-}(-A_u+\mu_i I)$ for  all $i \in \{0,\dots,\ell-1\}$. Recalling that $W_i$ has inertia $\cine(W_i)=(d_i,0,n-d_i)$,
we have that $A_{u}$ has at most $d_i$ eigenvalues with real part strictly larger than $\mu_i$.  
This is equivalent to the bound 
\begin{equation}\label{eqn:eig_bound_stab}
    \Re(\lambda^u_{d_{i+1}})\leq \Re(\lambda^u_{d_{i}+1})<\mu_i, \qquad \forall i\in \{0, \dots, \ell-1\}.
\end{equation}

Then, following similar arguments as in the sufficiency part of Section~\ref{subsec:proof_theorem_cLTI}, the next bound can be obtained.
\begin{equation}
    \sum_{i =1}^{k}\Re{(\lambda^u_{i})} < \sum_{i=0}^{\ell-1}{h_{i}} \, \mu_i\le 0\, .
\end{equation}

Thus, \eqref{eqn:necessary_cond_example_stab} is satisfied and the pair $(A,B)$ is $k$-order stabilizable invoking
Lemmas~\ref{lem:k-order_stabilizability} and \ref{lem:spectral-k-properties}.

Finally, for the case $k>n_u$, we have $k$-order stabilizability directly from Lemma~\ref{lem:k-order_stabilizability}, thus ending the sufficiency proof.

 \noindent
\emph{Necessity.}  As stated before,  if the pair $(A,B)$ is $k$-order stabilizable then, either \eqref{eqn:necessary_cond_example_stab} or $n_u<k$ is verified. The goal of this proof is to show that if one of these conditions are satisfied, then, there exists a solution to the inequalities \eqref{eqn:k_contraction_LMI_linear_stab2_total}.
We begin by assuming the case $k\leq n_u$ and \eqref{eqn:necessary_cond_example_stab} is verified. Now, let the scalars $p_k$ and $c_k$ be defined as in \eqref{eqn:allowed_pdom_indices}.

Then, the following equality holds
\begin{equation}\label{eqn:alpha_equality_stab}
(k-p_k)\alpha_{c_k} +\sum_{i=0}^{c_k-2}{h_{i}} \, \alpha_{i+1} =\sum_{j =1}^{k}\Re{(\lambda^u_{j})}.
\end{equation}
Hence, combining \eqref{eqn:necessary_cond_example_stab} and \eqref{eqn:alpha_equality_stab}, if the system is $k$-order stabilizable (with $k\leq n_u$), the next bound is satisfied 
\begin{equation}\label{eqn:necessary_cond_alpha_stab}
(k-p_k)\alpha_{c_k} +\sum_{i=0}^{c_k-2}{h_{i}} \, \alpha_{i+1} < 0.
\end{equation}
Then, by continuity, there exist a scalar $\varepsilon>0$, such that
$$
    (k-p_k) (\alpha_{c_k}+\varepsilon) +\sum_{i=0}^{c_k-2}{h_{i}} \, (\alpha_{i+1}+\varepsilon) \le 0\,
$$ 
Now, fix $\ell = c_k$, $d_\ell = k-p_k+d_{\ell-1}$, $d_i = \bar d_i$ for all $i\in\{0,\dots, \ell-1\}$ and select
$
\mu_{i-1}=\varepsilon + \alpha_{i},$ for all $ i\in\{1,\dots,\ell\},
$. We have 
\begin{align*}
\sum_{i=0}^{\ell-1}{h_{i}}\mu_i&=(k-p_k) (\alpha_{c_k}+\varepsilon) +\sum_{i=0}^{c_k-2}{h_{i}} \, (\alpha_{i+1}+\varepsilon) \le 0,
\end{align*}
thus showing \eqref{eqn:mu_sum_condition_stab+}. Now, since $\mu_{i}>\alpha_{i+1}$ for all $i\in\{0,\dots, \ell-1\}$ we have that $A_u$ has only $\bar d_i$ eigenvalues strictly larger than $\mu_{i}$ and the rest are strictly smaller. That is, $\cine(A_u-\mu_i I) = (n_u-\bar d_i, 0, \bar d_i)$ for all $i\in\{0,\ell-1\}$. Then, by Lemma~\ref{lem:splitting++}, there exist symmetric matrices $W_i$ with $\cine(W_i)=\{\bar d_i,0,n-\bar d_i\}$ such that 
$$
 A^\top W_i +  W_iA -BB^\top \prec 2\mu_i W_i\qquad \forall i\in\{0,\dots, \ell-1\}\,,
$$
thus concluding the  proof if \eqref{eqn:necessary_cond_example_stab} is verified and $k\leq n_u$.

We now proceed with the necessity proof for the case $k>n_u$.  For this proof, we remark that since a $k$-contractive system is also $\bar k$-contractive for all $\bar k\in \{k,\dots,n\}$ \cite{Wu2022}, we have that if a pair $(A,B)$ is $k$-order stabilizable, then, it is also $\bar k$-order stabilizable for all $\bar k\in \{k,\dots,n\}$. Additionally, by means of Lemma~\ref{lem:k-order_stabilizability}, a pair $(A,B)$ is always $k$-order stabilizable if $k=n_u+1$. Therefore, for all $k>n_u$,  $k$-order stabilizability necessarily implies $\bar k$-order stabilizability with $\bar k = n_u+1$. With this fact in mind, this proof is based on showing that, if $k=n_u+1$, then, there always exists a pair of matrices $W_0,W_1$ and constants $\mu_0,\mu_1$ such that \eqref{eqn:k_contraction_LMI_linear_stab2_total} is satisfied. We highlight that this result does not require \eqref{eqn:necessary_cond_example_stab} to be satisfied.

Precisely, assume $k=n_u+1$. Notice that we can always guarantee $\cine(A_u-\mu_0 I) = (n_u,0,0)$ for any $\mu_0\in \RR$ large enough. Therefore, by considering this sufficiently large $\mu_0$ and by means of Lemma~\ref{lem:splitting++}, we know that there exists a symmetric matrix $W_0$ with inertia $\cine(W_0) = (0,0,n)$ solution of \eqref{eqn:k_contraction_LMI_linear_stab2+}. Furthermore, we can always find a sufficiently negative constant $\mu_1<0$, such that 
\begin{equation}\label{eqn:mu_selection}
    \mu_1+n_u \mu_0\leq 0,
\end{equation}
and $\cine(A_u-\mu_1 I)= (0,0,n_u)$. Therefore, by considering this $\mu_1$ and by means of Lemma~\ref{lem:splitting++}, there exists a symmetric matrix $W_1$ with inertia $\cine(W_1) = (n_u ,0,n-n_u)$ solution of \eqref{eqn:k_contraction_LMI_linear_stab2+}. Finally, fix $\ell = 2$ and select the aforementioned pair of matrices $W_0,W_1$ and pair of constants $\mu_0,\mu_1$ (these matrices and constants satisfy \eqref{eqn:k_contraction_LMI_linear_stab2+}). Moreover, fix $d_\ell = n_u +1$. With this selection, we have $d_0= 0, d_1 = n_u$ and $h_0 = n_u, h_1 = 1$. Thus, \eqref{eqn:mu_selection} implies \eqref{eqn:mu_sum_condition_stab+}.

\subsection{Proof of Proposition~\ref{pro:k-order-feedback}}\label{sec:proof_feedback}
The first part of the proof focuses on proving the existence of solutions for the inequalities \eqref{eqn:k_contraction_LMI_linear_stab2_total} considering the assumptions stated in the theorem and in particular the colinearity condition in \eqref{eqn:colinear}. An immediate result of Lemma~\ref{lem:colinear_ricatti}
is that there always exist a set of constants $\mu_i$ and $W_i$ such that \eqref{eqn:mu_sum_condition_stab+} and the colinearity condition in  \eqref{eqn:colinear} is simultaneously satisfied for all $i\in\{1, \dots,\ell-1\}$. Moreover, notice that Lemma~\ref{lem:colinear_ricatti} preserves the relation between the inertia of $A_u-\mu_i I$ and $W_i$ as in Lemma~\ref{lem:splitting++}. Consequently, the arguments in the necessity part of Section~\ref{sec:k-order-stabilizability_proof} could be repeated to obtain the existence of a solution from a $k$-order stabilizability assumption.

The second part of the proof consist in showing how the state-feedback law \eqref{eqn:feedback_design+} makes the closed-loop system
\begin{equation}\label{eqn:closed_loop_sys}
\dot x = (A-BK)x = (A-\frac\rho2 B B^\top  W_0^{-1})x,
\end{equation}
 $k$-contractive for all $\rho\geq1$. Note that, since $W_i$ is non-singular and symmetric for all $i\in \{0, \dots, \ell-1\}$ and by means of the colinearity condition \eqref{eqn:colinear},  the left-hand side of \eqref{eqn:k_contraction_LMI_linear_stab2+} can be rearranged as follows for all $i\in \{0, \dots, \ell-1\}$
$$
\begin{aligned}
&W_iA^\top  +  AW_i  -BB^\top  \\
&\qquad = W_i(A-\tfrac{1}2 BB^\top  W_i^{-1})^\top  +  (A-\tfrac{1}2 BB^\top  W_i^{-1})W_i \\
&\qquad = W_i(A-\tfrac{1}2 BB^\top W_0^{-1})^\top  +  (A-\tfrac{1}2 BB^\top W_0^{-1})W_i.
\end{aligned}
$$
Combining this result with the right-hand side of \eqref{eqn:k_contraction_LMI_linear_stab2+}, we obtain that for all $i\in \{0, \dots, \ell-1\}$
\begin{equation}\label{eqn:gen_Lyap_Wi}
W_i(A-\tfrac{1}2 BB^\top W_0^{-1})^\top  +  (A-\tfrac{1}2 BB^\top W_0^{-1})W_i \prec 2\mu_i W_i.
\end{equation}
Now, by adding $(1-\rho)BB^\top$ in both sides of \eqref{eqn:gen_Lyap_Wi} 
and  considering the fact that $(1-\rho)BB^\top\preceq 0$ for all $\rho\geq 1$, by  \eqref{eqn:colinear}  we get,
\begin{multline}\label{eqn:gen_Lyap_Wi2}
W_i(A-\dfrac{\rho}2 BB^\top W_0^{-1})^\top  +  (A-\dfrac{\rho}2 BB^\top W_0^{-1})W_i  \\
\prec 2\mu_i W_i + (1-\rho)BB^\top \preceq2\mu_i W_i.
\end{multline}
By post-multiplying and pre-multiplying both side of \eqref{eqn:gen_Lyap_Wi2} by $W_i^{-1}$ and defining $P_i:=W_i^{-1}$ we get for all $i\in \{0, \dots, \ell-1\}$
\begin{equation}\label{eqn:gen_Lyap_Pi}
(A-\dfrac{\rho}2 BB^\top W_0^{-1})^\top P_i  +  P_i(A-\dfrac{\rho}2 BB^\top W_0^{-1}) \prec 2\mu_i P_i.
\end{equation}
Finally, combining \eqref{eqn:gen_Lyap_Pi} and \eqref{eqn:mu_sum_condition_stab+} with Theorem~\ref{thm:k_contraction_LMI_cLTI} proves that the closed-loop system \eqref{eqn:closed_loop_sys} is $k$-contractive.

\section{Proofs of  nonlinear results}\label{sec:proofs_NL}
\subsection{Preliminary results}
We provide in this section 
some preliminary results that will be used 
in the proof of Theorem~\ref{thm:k_contraction_LMI}.
We start by recalling the definition of $p$-dominance \cite{Forni2019}.  
\begin{definition}[$p$-dominance]\label{def:p-dominance}
    System \eqref{eqn:original_system} is said to be strictly $p$-dominant on $\cS \subsetneq\RR^n$ if\footnote{The definition can be extended to the full set $\RR^n$ but in this case condition \eqref{eqn:p_dominance_LMI} is modified into
$P \tfrac{\partial f}{\partial x}(x) 
+ 
\tfrac{\partial f}{\partial x}(x)^\top P     \preceq -2\mu P - \varepsilon I$    for all $x\in \RR^{n}$ where the term 
$-\varepsilon I$
is added to ensure uniformity.} there exist a real number $\mu\geq0$ and a symmetric matrix $P\in\RR^{n\times n}$ with inertia $\cine(P)=(p,0,n-p)$ such that
    \begin{equation}\label{eqn:p_dominance_LMI}
P \dfrac{\partial f}{\partial x}(x) 
+ 
\dfrac{\partial f}{\partial x}(x)^{\!\top} P     \prec -2\mu P\,, 
    \quad\forall x\in \cS.
    \end{equation}
\end{definition} 

Then, we recall (with a mild reformulation) the following result on $p$-dominance \cite[Theorem 1]{Forni2019}.

\begin{theorem}\label{thm:Magic_splitting}
  Suppose that system \eqref{eqn:original_system} is
strictly $p$-dominant on 
a compact forward invariant set 
     $\cA \subsetneq \RR^n$  with rate $\mu > 0$ and symmetric matrix $P$ with inertia $\cine(P)=(p,0,n-p)$. Then, for each $x\in \cA$, there exists an invariant splitting $T_x \RR^n = \cV_x \oplus \cH_x$, i.e. there exists a continuous mapping $\transf:\RR^n\to\RR^{n\times n}$ 
invertible for any $x\in \cA$ and
satisfying
\begin{subequations}
\begin{equation}\label{eqn:matrix_splitting}
\transf(x) \defeq \begin{bmatrix}
    \transfh(x) & \transfv(x)
\end{bmatrix}\,,
\end{equation}
where $\transfh: \RR^n\to\RR^{n\times n-p}$ and  $\transfv: \RR^n\to\RR^{n\times p}$ satisfy
\begin{equation}\label{eqn:splitting}
\Imag \ \transfh(x) = \cH_x, \quad \Imag \ \transfv(x) = \cV_x.
\end{equation} Moreover, there exist a scalar $c_h>0$ such that 
\begin{equation}\label{eqn:Horizontal_contraction_bound}
\left|\difflow^t(x)\begin{bmatrix}
    \transfh(x) & 0
\end{bmatrix}\diffvar\right| \leq c_h e^{-\mu t}\left|\begin{bmatrix}
    \transfh(x) & 0
\end{bmatrix}\diffvar\right|
\end{equation}
\end{subequations}
holds for all $ (t,x, \diffvar) \in\RR_{\geq0}\times\cA\times T_x \RR^n$.
\end{theorem}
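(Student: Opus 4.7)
My plan is to construct the splitting via a cone-field invariance argument, following the strategy of \cite[Theorem 1]{Forni2019}. The fundamental idea is to use the indefinite quadratic form associated with $P$ as a discriminator between the dimensions that are asymptotically contracted and those that remain ``large'', and then to extract the two invariant families of subspaces by a graph-transform argument on the Grassmannian.

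First I would associate to $P$ the quadratic form $V(v) \defeq v^\top P v$ on $\RR^n$. Since $P$ has inertia $(p,0,n-p)$, the set $\mathcal{K} \defeq \{v \in \RR^n : V(v) \leq 0\}$ is a closed cone whose maximal linear subspaces have dimension exactly $p$. Differentiating $V$ along the variational system \eqref{eqn:variational_system} and invoking \eqref{eqn:p_dominance_LMI} yields
$$
\dot V(v) \;=\; v^\top\!\left(P A(x) + A(x)^\top P\right)\!v \;\leq\; -2\mu\, V(v),
$$
with $A(x)=\tfrac{\partial f}{\partial x}(x)$. On $\mathcal{K}$ the quantity $V$ is non-positive, so the inequality forces $\dot V$ to be non-negative and strictly so outside the origin; hence $\mathcal{K}$ is strictly forward invariant for the variational flow, and its interior attracts all non-zero trajectories entering it.

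Next, to obtain the invariant subspace field I would use a graph-transform / Perron-type argument on the Grassmannian $\mathrm{Gr}(p,n)$. The set of $p$-dimensional subspaces contained in $\mathcal{K}$ is a non-empty compact subset, and forward invariance of $\mathcal{K}$ implies that the pushforward $E\mapsto \difflow^t(x_0) E$ maps such subspaces into $p$-dimensional subspaces still contained in $\mathcal{K}$. Exploiting the compactness of $\cA$ together with the uniform rate $\mu>0$, one shows that this pushforward is a strict contraction in an appropriate Grassmannian (Hilbert/projective) metric, hence admits a unique attractive invariant section $x\mapsto \cV_x$: a continuous choice of $p$-dimensional subspaces in $\mathcal{K}$ satisfying $\difflow^t(x_0)\cV_{x_0} = \cV_{\psi^t(x_0)}$. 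The complementary subspace $\cH_x$ is then taken as the $V$-orthogonal complement, $\cH_x \defeq \{v : v^\top P w = 0 \ \text{for all}\ w\in \cV_x\}$, which is $(n-p)$-dimensional, varies continuously with $x$, and is invariant by invariance of $\cV_x$ and symmetry of $P$. On $\cH_x$ the form $V$ is positive definite, and compactness of $\cA$ yields constants $c_1,c_2>0$ with $c_1|v|^2\leq V(v)\leq c_2|v|^2$ uniformly in $x\in\cA$; combining with $V(v(t))\leq e^{-2\mu t}V(v(0))$ gives the exponential bound \eqref{eqn:Horizontal_contraction_bound}. The matrices $\transfh(x)$ and $\transfv(x)$ are then chosen as continuous local bases of $\cH_x$ and $\cV_x$, and transversality ensures that $\transf(x)$ is invertible throughout $\cA$.

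The main obstacle I anticipate is the Grassmannian contraction step: one must show that forward iteration on $p$-dimensional subspaces inside $\mathcal{K}$ is strictly contracting in a suitable metric, so as to produce the unique continuous invariant section $x\mapsto\cV_x$. This is precisely where the strict inequality in \eqref{eqn:p_dominance_LMI} and the compactness of $\cA$ enter, and it is the technical heart of the dominance/cone-field theory developed in \cite{Forni2019}. Once the splitting is obtained, the remaining claims---continuity of $\transf$, invertibility, and the exponential horizontal bound---are routine consequences of the Lyapunov estimate on $V$ and the compactness of $\cA$.
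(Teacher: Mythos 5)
First, note that the paper does not prove this statement: Theorem~\ref{thm:Magic_splitting} is explicitly recalled (``with a mild reformulation'') from \cite[Theorem 1]{Forni2019}, so your sketch can only be measured against the cone-field construction of that reference, which is indeed the strategy you follow. Within that strategy, however, there are two concrete problems. The first is a sign error in the cone-invariance step: from $\dot V\le -2\mu V$ and $V\le 0$ you conclude that $\dot V$ is ``forced to be non-negative'', but the inequality only bounds $\dot V$ \emph{from above} by a non-negative quantity; it says nothing about its sign. Forward invariance of $\mathcal K=\{V\le 0\}$ follows instead from Gr\"onwall ($V(v(t))\le e^{-2\mu t}V(v(0))\le 0$), or equivalently from the observation that on the boundary $\{V=0\}\setminus\{0\}$ the strict inequality gives $\dot V<0$, so trajectories cross \emph{into} the interior --- the opposite of what you wrote. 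The conclusion you need survives, but not by your argument.

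The second problem is more serious: you define $\cH_x$ as the $P$-orthogonal complement of $\cV_x$ and assert it is invariant ``by invariance of $\cV_x$ and symmetry of $P$''. This fails. Invariance of $\cH_x$ would require the bilinear form $v^\top P w$ to vanish along solutions of \eqref{eqn:variational_system} whenever it vanishes initially; but writing $\tfrac{\partial f}{\partial x}(x)^\top P+P\tfrac{\partial f}{\partial x}(x)=-2\mu P-Q(x)$ with $Q(x)\succ0$ gives $\tfrac{d}{dt}\left(v^\top Pw\right)=-2\mu\, v^\top Pw-v^\top Q(x)w$, which is generically nonzero when $v^\top Pw=0$. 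Only the quadratic form obeys a differential inequality; the bilinear form is not conserved, so the $P$-orthogonal complement of an invariant bundle need not be invariant. Without invariance of $\cH_x$, your derivation of \eqref{eqn:Horizontal_contraction_bound} collapses: the coercivity bound $c_1|v(t)|^2\le V(v(t))$ is only available while $v(t)$ remains in a region where $V$ is uniformly positive definite, which is exactly what invariance was supposed to guarantee. In the dominated-splitting construction of \cite{Forni2019} the horizontal bundle is obtained as an invariant section in its own right (e.g.\ via a graph transform in the complementary cone $\{V\ge0\}$ run backward in time), not as a pointwise $P$-orthogonal complement. Together with the fact that the Grassmannian-contraction step --- which you correctly identify as the technical heart --- is left entirely to the cited reference, the proposal is a plausible outline of the known proof but not a proof.
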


With this in mind, it is clear that if $\mu_1$ is strictly negative, the matrix inequality \eqref{eqn:Pk_LMI} imposes a form of horizontal contraction on the system \cite[Section VII]{Forni2014}. Nonetheless, horizontal contraction is not a sufficient condition for $k$-contraction \cite{Wu2022b}. 
This motivates \eqref{eqn:P0_LMI}. We clarify the effects of \eqref{eqn:P0_LMI} via the following Lemma.

\begin{lemma}\label{lemma:expansion}
Consider system \eqref{eqn:original_system} and assume there exist a forward invariant compact set $\cA\subsetneq \RR^n$, a symmetric positive definite matrix $P_{0}\in\RR^{n\times n}$  and a scalar $\mu_0$ satisfying \eqref{eqn:P0_LMI} for all $x\in\cA$. Then there exists a constant $c_v>0$ 
such that
\begin{equation}\label{eqn:Vertical_contraction_bound}
\left|\difflow^t(x)\begin{bmatrix}
    0 & \transfv(x)
\end{bmatrix}\diffvar\right| \leq c_v e^{\mu_0 t}\left|\begin{bmatrix}
    0 & \transfv(x)
\end{bmatrix}\diffvar\right|
    \end{equation}
    for all $ (t,x, \diffvar ) \in\RR_{\geq0}\times\cA\times T_x \RR^n$, 
     with $\transfv$ as in \eqref{eqn:splitting}. 
\end{lemma}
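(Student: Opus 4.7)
The plan is to prove the bound via a standard quadratic Lyapunov argument applied to the variational system \eqref{eqn:variational_system}. The key observation is that the inequality \eqref{eqn:P0_LMI} is precisely the Demidovich-type condition associated with the quadratic form $V(\diffvar)\defeq \diffvar^\top P_0 \diffvar$, and the set $\cA$ being forward invariant guarantees that $\psi^t(x)\in\cA$ for all $t\geq 0$ whenever $x\in\cA$, so that \eqref{eqn:P0_LMI} can be evaluated along the entire trajectory.

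Concretely, fix $x\in\cA$ and any $\diffvarzero\in\RR^n$, and let $\diffvar(t)\defeq \difflow^t(x)\diffvarzero$ denote the corresponding trajectory of \eqref{eqn:variational_system}. Differentiating $V(\diffvar(t))$ along this trajectory and using \eqref{eqn:P0_LMI} evaluated at $\psi^t(x)\in\cA$ yields
\begin{equation*}
\frac{{\rm d}}{{\rm d}t} V(\diffvar(t)) = \diffvar(t)^\top\!\left(\tfrac{\partial f}{\partial x}(\psi^t(x))^{\!\top}P_0 + P_0\tfrac{\partial f}{\partial x}(\psi^t(x))\right)\!\diffvar(t) \leq 2\mu_0\, V(\diffvar(t)).
\end{equation*}
Grönwall's inequality then gives $V(\diffvar(t))\leq e^{2\mu_0 t}\,V(\diffvarzero)$. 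Since $P_0\succ 0$, the standard two-sided bound $\lambda_{\min}(P_0)|\diffvar|^2\leq V(\diffvar)\leq \lambda_{\max}(P_0)|\diffvar|^2$ converts this to
\begin{equation*}
|\diffvar(t)| \leq c_v\, e^{\mu_0 t}\, |\diffvarzero|, \qquad c_v\defeq \sqrt{\lambda_{\max}(P_0)/\lambda_{\min}(P_0)}.
\end{equation*}

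The conclusion of the lemma then follows immediately by specializing $\diffvarzero$ to $\begin{bmatrix} 0 & \transfv(x)\end{bmatrix}\diffvar$. There is no real obstacle in this argument: the bound actually holds for every $\diffvarzero\in\RR^n$, and the projection onto the vertical component $\cV_x$ is not used in the derivation; it simply foreshadows how this estimate will later be combined with the horizontal contraction bound \eqref{eqn:Horizontal_contraction_bound} of Theorem~\ref{thm:Magic_splitting} in the proof of Theorem~\ref{thm:k_contraction_LMI}, where the expansion rate $\mu_0$ on $\cV_x$ must be compensated by the contraction rate $\mu_1$ on $\cH_x$ through the condition $\mu_1+(k-1)\mu_0<0$.
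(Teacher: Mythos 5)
Your proof is correct and follows exactly the paper's argument: the quadratic Lyapunov function $\diffvar^\top P_0\diffvar$, differentiation along the variational flow using \eqref{eqn:P0_LMI}, Gr\"onwall's inequality, the two-sided eigenvalue bound giving $c_v=\sqrt{\overline{\lambda}(P_0)/\underline{\lambda}(P_0)}$, and the final specialization to $\begin{bmatrix}0 & \transfv(x)\end{bmatrix}\diffvar\in T_x\RR^n$. Your closing remark that the bound holds for all of $T_x\RR^n$ and that the vertical restriction is only used later matches the paper's own observation that the result ``trivially follows'' from the general estimate.
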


\begin{proof}
    Consider the function,
   $W \defeq \diffvar^\top P_0\diffvar$.
    It satisfies
\begin{equation}\label{eqn:WQ_bounds}
        \underline{\lambda}(P_0)|\diffvar|^2\leq W(v) \leq \overline{\lambda}(P_0)|\diffvar|^2,
    \end{equation}
    where $ \underline{\lambda}(\cdot)$ and $\overline{\lambda}(\cdot)$ represent the minimum and maximum eigenvalue of their argument, respectively.
    By \eqref{eqn:variational_system}, its time-derivative  satisfies
    \begin{align*}
        \dot{W} &= \diffvar^\top\left(P_0\dfrac{\partial f}{\partial x}(x) + \dfrac{\partial f}{\partial x}(x)^\top P_0 \right)\diffvar \\
        &< 2\mu_0\diffvar^\top P_0 \diffvar = 2\mu_0W.
    \end{align*}
    Then, by Gr\"onwall–Bellman inequality, we obtain 
    \begin{equation*}
        W(t) \leq 
        e^{2\mu_0t}W(0), 
        \quad \forall t\in \RR_{\geq0}.
    \end{equation*}
    Invoking \eqref{eqn:WQ_bounds}, we obtain 
    for all $ (t,x, \diffvar) \in\RR_{\geq0}\times\cA\times T_x \RR^n$ 
    \begin{equation*}
        \big|\difflow^t(x)\diffvar\big| \leq\sqrt{\dfrac{\overline{\lambda}(P_0)}{\underline{\lambda}(P_0)}}e^{\mu_0t}|\diffvar|.
    \end{equation*}
    As $\begin{bmatrix}
    0 & \transfv(x)
\end{bmatrix}\diffvar\in T_x\RR^n$, the result trivially follows.
\end{proof}

Given the above results, condition \eqref{eqn:mu_sum_condition_NL} can be seen as imposing a bound on the maximum expansion rate of the vertical subspace with respect to the contraction rate of the horizontal one. In particular, \eqref{eqn:mu_sum_condition_NL} holds if the first is smaller than the latter. We now relate this property to infinitesimal $k$-contraction. As a first step, we present a technical lemma related to matrix compounds.
\begin{lemma}\label{lem:technical_lemma_compounds}
    Consider a time-varying matrix $M(t)\in\RR^{n\times n}$ 
    $$
    M(t) = \begin{bmatrix}
    H(t) & V(t)
    \end{bmatrix},
    $$
    with $H(t)\in \RR^{n \times n-p}$, $V(t) \in \RR^{n \times p}$ and $p\in[0,n)$. Assume there exist real numbers $c_h, c_v, \alpha, \beta>0$ such that
\begin{equation}\label{eqn:submatrix_bound}
    |H(t)| \leq c_h e^{-\alpha t}, \quad |V(t)| \leq c_v e^{\beta t},
    \qquad\forall t\in \RR_{+}.
    \end{equation}
    If $\alpha > (k-1)\beta$  for some integer $k\in [p+1, n]$, there exist some real numbers $c,\varepsilon>0$ such that 
    \begin{equation}\label{eqn:M_bound}
    |M(t)^{(k)}| \leq c e^{-\varepsilon t},
    \qquad \forall t\in \RR_{+}.
    \end{equation}
\end{lemma}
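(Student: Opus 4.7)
The plan is to bound each entry of $M(t)^{(k)}$ separately and then invoke equivalence of matrix norms. By Definition~\ref{def:compound_definition}, every entry of the $k$-th multiplicative compound of $M(t)$ is the determinant of some $k\times k$ submatrix of $M(t)=[H(t),\,V(t)]$, obtained by choosing $k$ rows and $k$ columns. The crucial structural observation is that $V(t)$ has only $p$ columns while the hypothesis $k\ge p+1$ forces $k>p$, so at least $k-p\ge 1$ of the chosen columns must come from $H(t)$. Concretely, if $j$ columns are selected from $H(t)$ and $k-j$ from $V(t)$, then $k-p\le j\le \min(k,n-p)$.

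My first step is to apply Hadamard's inequality to each such $k\times k$ minor, bounding its absolute value by the product of the Euclidean norms of its columns. Since every column of the $k\times k$ submatrix is a subvector of a column of either $H(t)$ or $V(t)$, its Euclidean norm is dominated by $|H(t)|$ or $|V(t)|$, respectively. Combining this with \eqref{eqn:submatrix_bound}, any such minor is bounded by
\begin{equation*}
    \left(c_h e^{-\alpha t}\right)^{j}\left(c_v e^{\beta t}\right)^{k-j} \;=\; c_h^{j} c_v^{k-j}\, e^{(-j\alpha+(k-j)\beta)t}.
\end{equation*}

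Next, I would show that the slowest decay rate among all admissible $j$, achieved at $j=k-p$, is still strictly negative. This reduces to checking $(k-p)\alpha>p\beta$, which follows from the hypothesis $\alpha>(k-1)\beta$ together with the elementary inequality $(k-p)(k-1)\ge p$ valid for $p\in[0,k-1]$ (and in particular for all $p$ admissible here). Setting then $\varepsilon\defeq(k-p)\alpha-p\beta>0$, every entry of $M(t)^{(k)}$ decays at least like $e^{-\varepsilon t}$ up to a uniform constant depending only on $c_h$, $c_v$, $n$ and $k$. Summing over the $\binom{n}{k}^2$ entries (or equivalently passing to any matrix norm equivalent to the Frobenius norm) produces the required constant $c$ and yields \eqref{eqn:M_bound}. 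The only genuinely delicate point is the combinatorial bookkeeping that identifies the worst exponent across all admissible column selections as $-(k-p)\alpha+p\beta$; once this is settled, exponential decay of $|M(t)^{(k)}|$ is immediate.
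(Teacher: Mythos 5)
Your proof is correct, and it follows the same overall skeleton as the paper's (bound each $k\times k$ minor, identify the worst admissible column selection, conclude entrywise exponential decay, pass to the matrix norm), but the two key technical steps are carried out differently. Where you invoke Hadamard's inequality to bound each minor by the product of the Euclidean norms of its columns, the paper instead expands the determinant multilinearly as a wedge product $h(t)\wedge v_1(t)\wedge\dots\wedge v_{k-1}(t)$ over the canonical basis and bounds the resulting sum term by term; both routes give a bound of the form $\mathrm{const}\cdot c_h^{j}c_v^{k-j}e^{(-j\alpha+(k-j)\beta)t}$, and Hadamard is arguably the cleaner of the two. The second difference is in the worst-case bookkeeping: the paper simply uses that every admissible submatrix has at least one column from $H(t)$ (i.e.\ $j\ge 1$) and bounds the exponent by $-\alpha+(k-1)\beta$, which is negative directly by hypothesis; you instead identify the true minimal value $j=k-p$ and obtain the sharper rate $\varepsilon=(k-p)\alpha-p\beta$, at the cost of the extra combinatorial inequality $(k-p)(k-1)\ge p$. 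That inequality does hold on the admissible range, since $(k-p)(k-1)-p=k(k-1-p)\ge 0$ precisely when $p\le k-1$, which is guaranteed by $k\ge p+1$; together with the strict hypothesis $\alpha>(k-1)\beta$ and $k-p\ge 1$ this gives $\varepsilon>0$, and indeed $(k-p)\alpha-p\beta-\bigl(\alpha-(k-1)\beta\bigr)=(k-1-p)(\alpha+\beta)\ge 0$, so your decay rate is never worse than the paper's. In short: same architecture, two legitimately different implementations of the minor bound and of the worst-case selection, with yours buying a slightly sharper $\varepsilon$ in exchange for one additional elementary verification.
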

\begin{proof}
    Consider the elements of the compound matrix $M(t)^{(k)}$. Each one is a $k^{th}$-order minor of the original matrix $M(t)$, i.e., it is the determinant of a $k\times k$ submatrix of $M(t)$, see Definition~\ref{def:compound_definition}. Since $k\geq p+1$, each $k\times k$ submatrix contains at least one column composed of elements of $H(t)$. That is, in the minimum case
    \begin{equation}\label{eqn:worst_case_submat}
    M_k(t) = \begin{bmatrix}
            h(t) & v_1(t) & \dots & v_{k-1}(t)
            \end{bmatrix},
    \end{equation}
    where $M_k(t)\in\RR^{k\times k}$ is a submatrix of $M(t)$, $h(t)\in\RR^{k}$ is a vector with components of $H(t)$ and $v_i(t)\in\RR^{k}$ for $i=1,\dots,k-1$ is a vector with components of $V(t)$. 
    In what follows, we show the elements of $M(t)^{(k)}$ are bounded. Hence, we focus on submatrices of the form \eqref{eqn:worst_case_submat}, since their determinant represents the worst-case scenario in a stability sense. By definition of the wedge product,
    $$
    \det(M_k(t)) = h(t)\wedge v_1(t) \wedge \dots \wedge v_{k-1}(t).
    $$ The wedge product can be represented using a basis $e^i$, where $e^i$ depicts the $i$th canonical vector of $\RR^n$. More specifically, by bilinearity of the wedge product, we have
    $$
    \begin{aligned}
        \det(M_k(t))
        &= \sum_{i=1}^{n}h^i(t)(e^i\wedge v_1(t)\wedge \dots \wedge v_{k-1}(t)),
    \end{aligned}
    $$
    where $h^i(t)$ is the $i$th element of $h(t)$. By performing similar operations on the remaining vectors we deduce
    \begin{equation}\label{eqn:wedge_opening}
    \begin{aligned}
    \det(M_k(t))
         &=\sum_{i_1=1}^{k} 
         \dots\sum_{i_k=1}^{k}h^{i_1}(t)v_2^{i_2}(t) \dots v_{k-1}^{i_k}(t) E_k,
    \end{aligned}
    \end{equation}
    where $E_k\defeq(e^{i_1} \wedge e^{i_2} \wedge \dots \wedge e^{i_k})$.
    By \eqref{eqn:submatrix_bound}, we have
    $$
    |h^i(t)| \leq c_h e^{-\alpha t}, \quad |v^i(t)| \leq c_v e^{\beta t}.
    $$
    Moreover, the factor $E_k$ will be either zero or an element of the canonical basis in $\RR^n$ multiplied by plus or minus one. Thus, using the triangle inequality, one obtains
    $$
    |\det(M_k(t))| \leq \kappa c_h c_v^{k-1} e^{(-\alpha +(k-1)\beta) t}
    $$
    where $\kappa>0$ is a positive constant related to the number of non-zero instances of $E_k$. Now, since $\alpha-(k-1)\beta>0$ by assumption, by continuity there always exists $\varepsilon>0$ such that $\alpha-(k-1)\beta-\varepsilon>0$. 
    Then,
    $$
    |M(t)^{(k)}| = |e^{-\varepsilon t} e^{\varepsilon t}M(t)^{(k)}| \leq e^{-\varepsilon t} |e^{\varepsilon t}M(t)^{(k)}|.
    $$
    By  considering the worst-case \eqref{eqn:worst_case_submat}, we have 
    $$
    e^{\varepsilon t}|\det(M_k(t))| \leq \bar c e^{(-\alpha +(k-1)\beta + \varepsilon) t},
    $$
    for some $\bar c>0$.
    Hence, since $\alpha-(k-1)\beta-\varepsilon>0$, each element of $e^{\varepsilon t}M(t)^{(k)}$ is exponentially decreasing and the norm
    $|e^{\varepsilon t}M(t)^{(k)}|$ is uniformly bounded for all $t\in\RR_{\geq0}$, thus concluding the proof.
\end{proof}

Leveraging on the previous lemmas, we provide a bound on the  compound of the variational system \eqref{eqn:variational_system} state transition matrix.

\begin{lemma}\label{lemma:transition_comp_contr}
    Consider system \eqref{eqn:original_system} and assume there exist two symmetric matrices $P_0, P_1\in \RR^{n\times n}$ 
 of respective inertia  $(0,0,n)$ and $(k-1,0,n-k+1)$,
  and  $\mu_0,\mu_1\in \RR$ such that 
    \eqref{eqn:2_contraction_LMI} is
    satisfied. Then, there exist  $\varepsilon,c>0$ such that
\begin{equation}\label{eqn:bound_transition_matrix}
    \left|\difflow^t(x)^{(k)}\right| \leq ce^{-\varepsilon t},
    \quad
    \forall (t,x)\in \RR_{\geq0}\times \cA.
    \end{equation}
\end{lemma}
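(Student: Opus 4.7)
The plan is to combine Theorem~\ref{thm:Magic_splitting} with Lemma~\ref{lemma:expansion} to obtain separate decay and growth estimates on the two blocks of the invariant splitting induced by $P_1$, and then glue these estimates together using Lemma~\ref{lem:technical_lemma_compounds} and the multiplicativity of the $k$-th compound.

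First, I would read \eqref{eqn:Pk_LMI} together with $\cine(P_1)=(k-1,0,n-k+1)$ as saying that system~\eqref{eqn:original_system} is strictly $(k-1)$-dominant on $\cA$ in the sense of Definition~\ref{def:p-dominance} with rate $-\mu_1>0$. Applying Theorem~\ref{thm:Magic_splitting} with $p=k-1$, I obtain on $\cA$ an invertible continuous splitting matrix $\transf(x)=[\transfh(x)\ \transfv(x)]$ together with the horizontal contraction bound
\[
\left|\difflow^t(x)[\transfh(x)\ 0]v\right|\le c_h e^{\mu_1 t}\left|[\transfh(x)\ 0]v\right|.
\]
Symmetrically, \eqref{eqn:P0_LMI} combined with Lemma~\ref{lemma:expansion} yields the vertical growth bound
\[
\left|\difflow^t(x)[0\ \transfv(x)]v\right|\le c_v e^{\mu_0 t}\left|[0\ \transfv(x)]v\right|,
\]
and continuity of $\transf$ together with compactness of $\cA$ make all constants uniform in $x$.

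Next, I would introduce $M_x(t)\defeq\difflow^t(x)\transf(x)=[H_x(t)\ V_x(t)]$ with $H_x(t)\in\RR^{n\times(n-k+1)}$ and $V_x(t)\in\RR^{n\times(k-1)}$, and deduce from the two estimates above that $|H_x(t)|\le \bar c_h e^{\mu_1 t}$ and $|V_x(t)|\le \bar c_v e^{\mu_0 t}$ uniformly on $\cA$. Feeding $M_x(t)$ into Lemma~\ref{lem:technical_lemma_compounds} with $p=k-1$, $\alpha=-\mu_1$ and $\beta=\mu_0$, the hypothesis $\alpha>(k-1)\beta$ coincides exactly with the standing assumption $\mu_1+(k-1)\mu_0<0$, and therefore yields $|M_x(t)^{(k)}|\le c\, e^{-\varepsilon t}$ for some $c,\varepsilon>0$ uniform on $\cA$. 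The conclusion then follows by multiplicativity of the compound, $M_x(t)^{(k)}=\difflow^t(x)^{(k)}\transf(x)^{(k)}$: invertibility and continuity of $\transf$ on the compact set $\cA$ guarantee a uniform upper bound on the operator norm of $(\transf(x)^{(k)})^{-1}$, which transfers the bound on $|M_x(t)^{(k)}|$ to the required bound \eqref{eqn:bound_transition_matrix} on $|\difflow^t(x)^{(k)}|$.

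The principal obstacle I anticipate is the bookkeeping of uniformity of all the constants $c_h,c_v,\bar c_h,\bar c_v,c,\varepsilon$ across $x\in\cA$; this is exactly why compactness and forward invariance of $\cA$ are assumed. A minor subtlety is the sign of $\mu_1$: Theorem~\ref{thm:Magic_splitting} requires a strictly positive dominance rate, so the argument above is tailored to the case $\mu_1<0$. If instead $\mu_1\ge 0$, the standing inequality $\mu_1+(k-1)\mu_0<0$ forces $\mu_0<0$, so \eqref{eqn:P0_LMI} together with $P_0\succ 0$ already delivers a uniform exponential contraction of $\difflow^t(x)$ itself, whence the desired bound on $|\difflow^t(x)^{(k)}|$ is immediate via a singular-value estimate for the $k$-th compound.
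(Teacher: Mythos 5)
Your proposal is correct and follows essentially the same route as the paper: interpret \eqref{eqn:Pk_LMI} as strict $(k-1)$-dominance to invoke Theorem~\ref{thm:Magic_splitting}, combine the resulting horizontal decay with the vertical growth bound of Lemma~\ref{lemma:expansion}, feed the block matrix $\difflow^t(x)\transf(x)$ into Lemma~\ref{lem:technical_lemma_compounds} (where $\alpha>(k-1)\beta$ is exactly \eqref{eqn:mu_sum_condition_NL}), and transfer the estimate back through $(\transf(x)^{(k)})^{-1}$ using compactness of $\cA$ for uniformity. Your closing remark on the sign of $\mu_1$ is a harmless extra precaution (the hypotheses in fact force $\mu_1<0$ via Lemma~\ref{lem:inertia_shifting} and \eqref{eqn:mu_sum_condition_NL}), but it does not change the argument.
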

\begin{proof}
   Consider \eqref{eqn:matrix_splitting} in Theorem~\ref{thm:Magic_splitting}. Invertibility of  $\transf(x)$ yields
   $$
   \begin{aligned}
    \difflow^t(x) &= \difflow^t(x)\transf(x)\transf(x)^{-1} = \pmb\psi^t(x)\transf(x)^{-1},
    \end{aligned}
   $$
   with  
   $
   \pmb\psi^t(x) \defeq \begin{bmatrix}
        \difflow^t(x)\transfh(x) & \difflow^t(x)\transfv(x)
    \end{bmatrix}.
   $
  Given any $\diffvar \in T_x\RR^n$, consider the decomposition  
   $\diffvar = (\diffvar^h, \diffvar^v)$, 
   where $\diffvar^h\in\RR^{n-p}$ and $\diffvar^v\in\RR^{p}$. Then, for an arbitrary $\diffvar^h$, inequality \eqref{eqn:Horizontal_contraction_bound} of Theorem~\ref{thm:Magic_splitting} implies
   $$
    |\difflow^t(x)\transfh(x)\diffvar^h| \leq c_h e^{\mu_1}|\transfh(x)\diffvar^h|\, .
   $$
   Recall the definition of matrix norm, 
   \begin{align*} \left|\difflow^t(x)\transfh(x)\right|\defeq\max_{|u|=1} \left|\difflow^t(x)\transfh(x)u\right|\,.
   \end{align*}
   By selecting vector $u^\star$ such that $|u^\star|=1$, the previous exponential relation and the triangular inequality yield
   \begin{align*}
\left|\difflow^t(x)\transfh(x)\right|&=
\left|\difflow^t(x)\transfh(x) u^\star\right|\\
       &\le c_h e^{\mu_1}|\transfh(x)u^\star|
       \le c_h e^{\mu_1}|\transfh(x)|.
   \end{align*}
   Since $\cA$ is compact and $\transf$ is continuous, $|\transfh(x)|$ is bounded for all $x\in\cA$.
   Then, by \eqref{eqn:Horizontal_contraction_bound}, 
   and by \eqref{eqn:Vertical_contraction_bound} 
   we obtain
\begin{align*}
    \left|\difflow^t(x)\transfh(x)\right| &\leq c_he^{\mu_1}|\transfh(x)| \leq \bar c_h e^{-\mu_1}
    \\
    \left|\difflow^t(x)\transfv(x)\right| &\leq c_ve^{\mu_{0}}|\transfv(x)| \leq \bar c_v e^{\mu_{0}}\,
\end{align*}
   for all $x\in\cA$.
   Finally, by boundedness of $\transf(x)$  and Lemma~\ref{lem:technical_lemma_compounds}, 
   $$
    \left|\difflow^t(x)^{(k)}\right| \le 
    |\pmb\psi^t(x)^{(k)}||{\transf(x)^{-1}}^{(k)}|\le c e^{-\varepsilon t}, \quad \forall x\in \cA.
   $$
   concluding the proof.
\end{proof}

\subsection{Proof Theorem~\ref{thm:k_contraction_LMI}}\label{subsec:proof_theorem_k_contraction}
Consider the $k$-th multiplicative compound of matrix $\vertmatnonlin{t}{x_0}$ defined as in Section~\ref{subsec:inifnit_k_contr}. From  the Cauchy-Binet formula \cite[Chapter 1]{fallat2022totally} we get:
\begin{equation*}
\begin{aligned}
    \vertmatnonlin{t}{x_0}^{(k)} &= \begin{bmatrix}\difflow^t (x_0)\diffvarzero^1 & \dots & \difflow^t(x_0)\diffvarzero^k \end{bmatrix}^{(k)} \\
    &= \difflow^t (x_0)^{(k)}\vertmatnonlin{0}{x_0}^{(k)}.
\end{aligned}
\end{equation*}
From
    \eqref{eqn:2_contraction_LMI} and Lemma~\ref{lemma:transition_comp_contr} we obtain  for all $(x_0,t)$ in $\cS\times\RR_{\geq0}$
$$
|(\vertmatnonlin{t}{x_0})^{(k)}| \leq ce^{-\varepsilon t}|(\vertmatnonlin{0}{x_0})^{(k)}|.
$$
Hence, the system is infinitesimally $k$-contractive on $\cS$ and the $k$-contractive property follows from Theorem~\ref{thm:inf_k_contr}.

\subsection{Proof of Lemma~\ref{lem:planar_case}}\label{sec:proof_planar}
Let us decompose the Jacobian of the vector field $f$ as follows
$$
\dfrac{\partial f}{\partial x}(x) = \begin{bmatrix}
    F_s(x) & G_{12}(x)\\G_{21}(x) & F_u(x)
\end{bmatrix}.
$$
Then, according to Theorem~\ref{thm:demidovich}, a sufficient condition for $2$-contraction in a set $\cA$ is:
\begin{equation}\label{eqn:suff_cond_2contraction}
    \dfrac{\partial f}{\partial x}(x)^{[2]} =  F_s(x)+ F_u(x) <0, \quad \forall x\in\cS.
\end{equation}
By subtracting $2\mu_0 P_0$ in both sides of \eqref{eqn:P0_LMI}, the following inequality is obtained
$$
    \begin{psmallmatrix}
    F_s(x)-\mu_0 & G_{12}(x)\\G_{21}(x) & F_u(x)-\mu_0
\end{psmallmatrix}P_0
+ P_0
\begin{psmallmatrix}
    F_s(x)-\mu_0 & G_{21}(x)\\G_{12}(x) & F_u(x)-\mu_0
\end{psmallmatrix}
\prec 0.
$$
Since $P_0$ is positive definite, the previous inequality necessarily implies that $\tfrac{\partial f}{\partial x}(x)-\mu_0I$ is Hurwitz for all $x\in\cA$, and, consequently, its determinant is positive. That is,
\begin{equation}\label{eqn:det_mu0}
0<F_s(x) F_u(x)-\mu_0(F_s(x) + F_u(x))+\mu_0^2-G_{12}(x)G_{21}(x).
\end{equation}
Similarly, by subtracting $2\mu_1P_1$ in both sides of \eqref{eqn:Pk_LMI}, the following inequality is obtained
$$
    \begin{psmallmatrix}
    F_s(x)-\mu_1 & G_{12}(x)\\G_{21}(x) & F_u(x)-\mu_1
\end{psmallmatrix}P_1
 + P_1
\begin{psmallmatrix}
    F_s(x)-\mu_1 & G_{21}(x)\\G_{12}(x) & F_u(x)-\mu_1
\end{psmallmatrix}
\prec 0.
$$
Since $\cine(P_1)= (1,0,1)$, by Lemma~\ref{lem:inertia_shifting}, this inequality necessarily implies that $\cine\left(\tfrac{\partial f}{\partial x}(x)-\mu_1I\right)=(1,0,1)$ for all $x\in\cA$, and, consequently, its determinant is negative. That is,
\begin{equation*}
F_s(x) F_u(x)-\mu_1(F_s(x) + F_u(x))+\mu_1^2-G_{12}(x)G_{21}(x)<0,
\end{equation*}
which can be rearranged as
\begin{equation}\label{eqn:det_mu1}
    F_s(x) F_u(x)-G_{12}(x)G_{21}(x)< \mu_1(F_s(x) + F_u(x))-\mu_1^2.
\end{equation}
Now, combining \eqref{eqn:det_mu0} and \eqref{eqn:det_mu1} we get
\begin{equation}\label{eqn:det_combined}
0<(\mu_1-\mu_0)(F_s(x) + F_u(x))+\mu_0^2-\mu_1^2.
\end{equation}
Then, since $\mu_0>\mu_1$  we have that $\mu_1-\mu_0<0$, which, by combining \eqref{eqn:mu_sum_condition_NL} and the fact that $\mu_0,\mu_1$ are real, implies $\mu_0^2-\mu_1^2<0$. Additionally, the conditions \eqref{eqn:P0_LMI}-\eqref{eqn:mu_sum_condition_NL} imply that $F_s(x)+F_u(x)\neq0$ for all $x\in\RR^n$ by Lemma~\ref{lem:inertia_shifting}. Therefore, by combining \eqref{eqn:det_combined} and \eqref{eqn:mu_sum_condition_NL} we get  \eqref{eqn:suff_cond_2contraction}, which ends the proof.
\subsection{Proof of Proposition~\ref{thm:2_contraction_feedback_design}}\label{sec:2_contractive_design_NL}
Consider inequality \eqref{eqn:P0_LMI},  pre-multiply and post-multiply both sides of the inequality by $P_0^{-1}$ and fix $W_0=P_0^{-1}$. Similarly, pre-multiply and post-multiply both sides of the inequality \eqref{eqn:Pk_LMI} by $P_1^{-1}$ and fix $W_1=P_1^{-1}$. Then, according to Theorem~\ref{thm:k_contraction_LMI}, the closed-loop system will be $k$-contractive if there exist two symmetric matrices $W_0, W_1\in \RR^{n\times n}$ 
 of inertia  $ \cine(W_0)=(0,0,n)$, 
 $\cine(W_1)=(1,0,n-1)$
  and  $\bar \mu_0,\bar \mu_1\in \RR$ such that, 
\begin{subequations}\label{eqn:2_contraction_LMI_CL}
\begin{align}\label{eqn:P0_LMI_CL}
         W_0 \left(\tfrac{\partial f}{\partial x}(x)-BK\right)^\top \!\! + \left(\tfrac{\partial f}{\partial x}(x)-BK\right)  W_0    & \prec 2\bar \mu_0 W_0,
\\ \label{eqn:P1_LMI_CL}
         W_1\left(\tfrac{\partial f}{\partial x}(x)-BK\right)^\top \!\!+ \left(\tfrac{\partial f}{\partial x}(x)-BK\right)  W_1  &\prec 2\bar \mu_1  W_1,
 \\ \label{eqn:mu_sum_condition_NL_CL}
 \bar \mu_{1} + (k-1)\bar \mu_0 & <0,
\end{align}
\end{subequations}
for all $x\in\cS$, where $\cS$ is assumed to be compact and forward invariant.
In this proof, we show that if the inequalities in \eqref{eqn:k_contraction_LMI_NLdesign}-\eqref{eqn:mu_sum_condition_NL_stab} are satisfied and the gain matrix $K$ is designed as in \eqref{eqn:feedback_design_NL}, then, the inequalities in \eqref{eqn:P0_LMI_CL}-\eqref{eqn:mu_sum_condition_NL_CL} are also satisfied. Thus, the closed-loop system is $k$-contractive according to Theorem~\ref{thm:k_contraction_LMI}. To this end, note that the left-hand side of \eqref{eqn:P0_LMI_CL} 
with $K$ fixed as in \eqref{eqn:feedback_design_NL}
can be rewritten as:
\begin{multline}
    \label{eq:ineq_nonlinear_stab_proof}
W_0\tfrac{\partial f}{\partial x}(x)^{\!\top}  +  \tfrac{\partial f}{\partial x}(x)W_0  -BB^\top 
\\ -\tfrac{1}{2}{BB^\top W_1^{-1}W_0} 
 - \tfrac{1}{2}W_0 W_1^{-1} BB^\top 
 \\
 \prec \mu_0 W_0 -\tfrac{1}{2}{BB^\top W_1^{-1}W_0} 
- \tfrac{1}{2}W_0 W_1^{-1} BB^\top ,
\end{multline}
where the right hand side
is obtained employing 
\eqref{eqn:k_contraction_LMI_NL_stab}.
Now, let 
$\bar \omega>0$ such that 
$$
 \left[I-\tfrac{1}{2}BB^\top W_1^{-1}\right]W_0\left[I-\tfrac{1}{2}BB^\top W_1^{-1}\right]^\top \preceq (1+\bar \omega)W_0.
$$
Furthermore, note that we have the identity
\begin{multline*}
     -\tfrac{1}{2}{BB^\top W_1^{-1}W_0} 
- \tfrac{1}{2}W_0 W_1^{-1} BB^\top  = 
\\
 \left[I-\tfrac{1}{2}BB^\top W_1^{-1}\right]W_0\left[I-\tfrac{1}{2}BB^\top W_1^{-1}\right]^\top  - W_0\\
-\tfrac{1}{4}BB^\top W_1^{-1}W_0W_1^{-1}BB^\top .
\end{multline*}
As a consequence, 
by adding and substracting 
$\bar \omega W_0$ from the right hand side of 
\eqref{eq:ineq_nonlinear_stab_proof}
and using the two previous equations, 
we obtain
\begin{multline*}
W_0\dfrac{\partial f}{\partial x}(x)^\top  +  \dfrac{\partial f}{\partial x}(x)W_0  -BB^\top
\\-\tfrac{1}{2}{BB^\top W_1^{-1}W_0}
  - \tfrac{1}{2}W_0 W_{1}^{-1} BB^\top 
 \prec  (\mu_0+\bar \omega)W_0
\end{multline*}
Thus, selecting
$\bar \mu_0 = \mu_0 + \bar \omega$, 
shows 
\eqref{eqn:P0_LMI_CL}.
Then, inequality \eqref{eqn:k_contraction_LMI_NL_stab3} can be re-organized to obtain the inequality \eqref{eqn:P1_LMI_CL} with $\bar\mu_1 = \mu_1$.
Finally, by fixing $\omega=(k-1)\bar \omega$,  we have that \eqref{eqn:mu_sum_condition_NL_stab} implies \eqref{eqn:mu_sum_condition_NL_CL}.

\subsection{Proof of Lemma~\ref{lem:3-contraction}}\label{sec:proof_3_contraction}
Due to Lemma~\ref{lem:inertia_shifting}, a necessary condition for the feasibility of \eqref{eqn:P0_LMI}-\eqref{eqn:mu_sum_condition_NL} for $k=3$ is $\mu_1<\mu_0$. Consequently,  \eqref{eqn:mu_sum_condition_NL} implies $\mu_1<0$. Therefore, the inequalities \eqref{eqn:P0_LMI}-\eqref{eqn:mu_sum_condition_NL} for $k= 3$ in a forward invariant set $\cS$ imply $2$-dominance in $\cS$ as defined in Definition~\ref{def:p-dominance}. Finally, the result follows from \cite[Corollary 1]{Forni2019} and that $\cS\subsetneq \RR^n$.

\subsection{Proof of Lemma~\ref{lem:n-contraction}}\label{sec:proof_n_contraction}
Let assumptions of Theorem~\ref{thm:demidovich} hold for $k=n$ in a set $\cS$. We have that $\pderiv{f}{x}(x)^{[n]}=\trace\left(\pderiv{f}{x}(x)\right)< 0$ for all $x\in\cS$. Therefore, at least one eigenvalue of $\pderiv{f}{x}(x)$ is negative for any $x\in \RR^n$
    and in particular for all $x^\circ$ such that 
    $f(x^\circ)=0$,
    thus showing the result.
\color{black}
\section{Conclusions}\label{sec:conclusions}
We presented new  conditions for 
$k$-contraction 
based on the use of 
generalized Lyapunov matrix inequalities.
The proposed conditions can be checked without using matrix compounds. 
In the linear case, 
they reduce 
the $k$-contraction
analysis
to solving a set of matrix inequalities. 
In the nonlinear context, they extend
the well-known 
Demidovich conditions
based on the Jacobian of the vector field along the flow.
 Finally,  we showed that the proposed
conditions can be used to develop new
tools for $k$-contractive feedback design,  so that to extend existing conditions for 
standard $1$-contraction. 

Future works will focus on
further studying the equivalence between $k$-contraction and infinitesimally $k$-contraction, and 
extending the proposed conditions 
to the context of time-varying 
systems and Riemannian metrics, similar to the context of $1$-contraction, see, e.g. 
\cite{LOHMILLER1998683,andrieu2020characterizations}.
Another topic
is the design of $k$-contractive observers.

\section*{Acknowledgements}
We thank the anonymous reviewers for their insightful comments, 
for pointing out the asymptotic behavior of $n$-contractive systems (detailed in Section~\ref{sec:limitations}) and the differences between dominance analysis and $k$-contraction (Remark~\ref{rem:reviewer_com}). We also thank Pietro Lorenzetti for suggesting us with the synchronverter example.

\appendix
\subsection{Proof of Lemma~\ref{lem:coordinate_invariance}}\label{sec:proof_coordinate_inv}

 The uniformity condition in \eqref{eqn:unif_diff} and the fact that $P$ is positive definite and symmetric imply the existence of  constants $\bar \sigma,\underline \sigma>0$ such that, for all $(r,x)$ in $[0,1]^k\times\cS$ 
\begin{align*}
\vol(\varphi&\circ\Phi)
 =\! \bigintsss_{[0,1]^k} \sqrt{\det\left\{\tfrac{\partial \Phi}{\partial r}(r)^\top \tfrac{\partial \varphi}{\partial x}(x)^\top 
 P\tfrac{\partial \varphi}{\partial x}(x)\tfrac{\partial \Phi}{\partial r}(r)\right\}}\, \dr\\
&\leq \sqrt{\bar \sigma}\bigintsss_{[0,1]^k} \sqrt{\det\left\{\tfrac{\partial \Phi}{\partial r}(r)^\top P \tfrac{\partial \Phi}{\partial r}(r)\right\}}\, \dr\, = \sqrt{\bar \sigma} \ \vol(\Phi)
\end{align*}
and
$\vol(\Phi)\leq \vol(\varphi\circ \Phi)/\sqrt{\underline{\sigma}}$. 
Consequently, 
if the system is $k$-contractive in $\cS$ in the original coordinates, we have 
\begin{align*}
\vol(\varphi\circ \psi^t\circ \Phi)
    &\leq \sqrt{\bar \sigma}\vol(\psi^t\circ \Phi) \leq \dfrac{\sqrt{\bar \sigma}}{\sqrt{\underline \sigma }}b e^{-a t}\, \vol(\varphi\circ\Phi),
\end{align*}
 for all $(r,x)$ in $[0,1]^k\times\cS$, thus showing $k$-contraction.

\subsection{Proof of Theorem~\ref{thm:inf_k_contr}}\label{subsec:proof_infin_contr}
Consider $\Phi\in\cI_k$, where $\cI_k$ is defined in \eqref{eqn:parametriz_set}, satisfying $\Imag(\Phi) \subseteq \cS$.
To simplify notation, let us denote for all $(r,t)$ in $[0,1]^k\times\RR_{\geq0}$
\begin{equation}\label{eqn:gamma_func}
\Gamma(r,t) := \psi^t\circ \Phi(r)\ , \ \Gamma_r(r,t) := \frac{\partial \Gamma}{\partial r}(r,t).
\end{equation}
In words, the factor $\Gamma(r,t)$ depicts the solution of \eqref{eqn:original_system} at time $t$ taking as a initial condition a point in $\Phi$ parametrized by $r$. Then, $\Gamma_r(r,t)$ represents
the effect of infinitesimal variations in $r$ on the solution at time $t$.
For all $(r,t)$ in $[0,1]^k\times\RR_{\geq0}$, we have
$$
\dddt \Gamma(r,t) = f(\Gamma(r,t) ).
$$
Moreover, since $\cS$ is forward invariant and $\Imag(\Phi) \subseteq \cS$, we have $\Gamma(r,t)\in\cS$ for all $(r,t)$ in $[0,1]^k\times\RR_{\geq0}$.
Additionally, by the chain rule, it follows that
the point $\Gamma_r(r,t)$ evolves according to
\begin{align*}
\dddt  \Gamma_r(r,t)&=\frac{\partial^2 \Gamma}{\partial r\partial t}(r,t)= 
\dfrac{\partial f}{\partial x}(\Gamma(r,t) )\Gamma_r(r,t)
\end{align*}
for all $(r,t)$ in $[0,1]^k\times\RR_{\geq0}$.
Since these dynamics are linear, following similar steps as in \cite[Section 2.5]{Wu2022}, we obtain 
\begin{equation} \label{eqn:Gamma_compund_dyn}
\dddt \Gamma_r(r,t)^{(k)} = \dfrac{\partial f}{\partial x}(\Gamma(r,t))^{[k]}\Gamma_r(r,t)^{(k)}.
\end{equation}
Similarly, following \cite[Section 2.5]{Wu2022}, it can be shown that the compound matrix of $\vertmatnonlin{t}{x_0}$ evolves according to the linear dynamics
\begin{equation}\label{eqn:compund_dyn}
\dfrac{d}{dt}\left(\vertmatnonlin{t}{x_0}\right)^{(k)} = \dfrac{\partial f}{\partial x}(\psi^t(x_0))^{[k]}\vertmatnonlin{t}{x_0}^{(k)}.
\end{equation}
By \eqref{eqn:compound_stability}, dynamics \eqref{eqn:compund_dyn} are globally exponentially stable.
Thus, considering \eqref{eqn:Gamma_compund_dyn}, exponential stability and uniformity of \eqref{eqn:compund_dyn} imply 
  \begin{equation*}
        \left|\Gamma_r(r,t)^{(k)}\right| \leq b e^{-a t}\left|\Gamma_r(r,0)^{(k)}\right|,
    \end{equation*}
    for some positive constants $a,b>0$. Now, from the Cauchy-Binet formula \cite[Chapter 1]{fallat2022totally} , the following equality holds
\begin{equation}\label{eqn:det_comp_relation}
\begin{aligned} 
   &\det\Big(\Gamma_r(r,t)^\top P\, \Gamma_r(r,t)\Big)  \\
    & \qquad = \!\!\left(\Gamma_r(r,t)^{(k)}\right)^\top \!\!\!\!P^{(k)}\, \Gamma_r(r,t)^{(k)} \!\! := \!\!v(r,t).
\end{aligned}
\end{equation}
Then,
the volume  $\vol(\cdot)$ of $\psi^t\circ\Phi$ 
computed according to 
\eqref{eqn:k_length}
is
$$
\vol(\psi^t\circ\Phi) = \int_{[0,1]^k}\sqrt{v(r,t)}\dr\,.
$$
Finally, by selecting $P$ in \eqref{eqn:det_comp_relation} as the identity matrix,  we obtain 
\begin{align*}
    \vol(\psi^t\circ\Phi) &= \int_{[0,1]^k} \!\!\left|\Gamma_r(r,t)^{(k)}\right|\, dr
    \le \int_{[0,1]^k} \!\!\!\! b e^{-a t}\left|\Gamma_r(r,0)^{(k)}\right|\, dr\\
    & \le b e^{-a t} \int_{[0,1]^k} \left|\Gamma_r(r,0)^{(k)}\right|\, dr
     \le b e^{-a t} \vol(\Phi) \,.
\end{align*}

\subsection{Proof of Theorem~\ref{thm:demidovich}}\label{subsec:proof_demid}
Consider the state transition matrix $\difflow^t (x_0)$ of the variational system \eqref{eqn:variational_system}. Following \cite[Section 2.5]{Wu2022}, it can be shown that the $k$-th order multiplicative compound matrix of $\difflow^t (x_0)$ evolves according to the following linear dynamics for all $(x_0,t)$ in $\cS\times\RR_{\geq0}$
\begin{equation*}
\dfrac{d}{dt}\left(\difflow^t (x_0)\right)^{(k)} = \dfrac{\partial f}{\partial x}(\psi^t(x_0))^{[k]}\difflow^t (x_0)^{(k)}, \quad \difflow^0 (x_0)=I.
\end{equation*}
Therefore, from \cite[Proposition 5]{Wu2022}, it can be shown that \eqref{eqn:compound_LMI} implies the existence of some positive constants $c,\varepsilon>0$ such that
$$
|\difflow^t (x_0)|\leq ce^{-\varepsilon t}, \quad \forall (x_0,t)\in\cS\in\RR_{\geq0}
$$
The proof concludes by repeating the arguments as in Section~\ref{subsec:proof_theorem_k_contraction}.

\color{black}


\bibliographystyle{IEEEtran}
\bibliography{biblio}

\begin{IEEEbiography}
[{\includegraphics[width=1in,height=1.25in,clip,keepaspectratio]{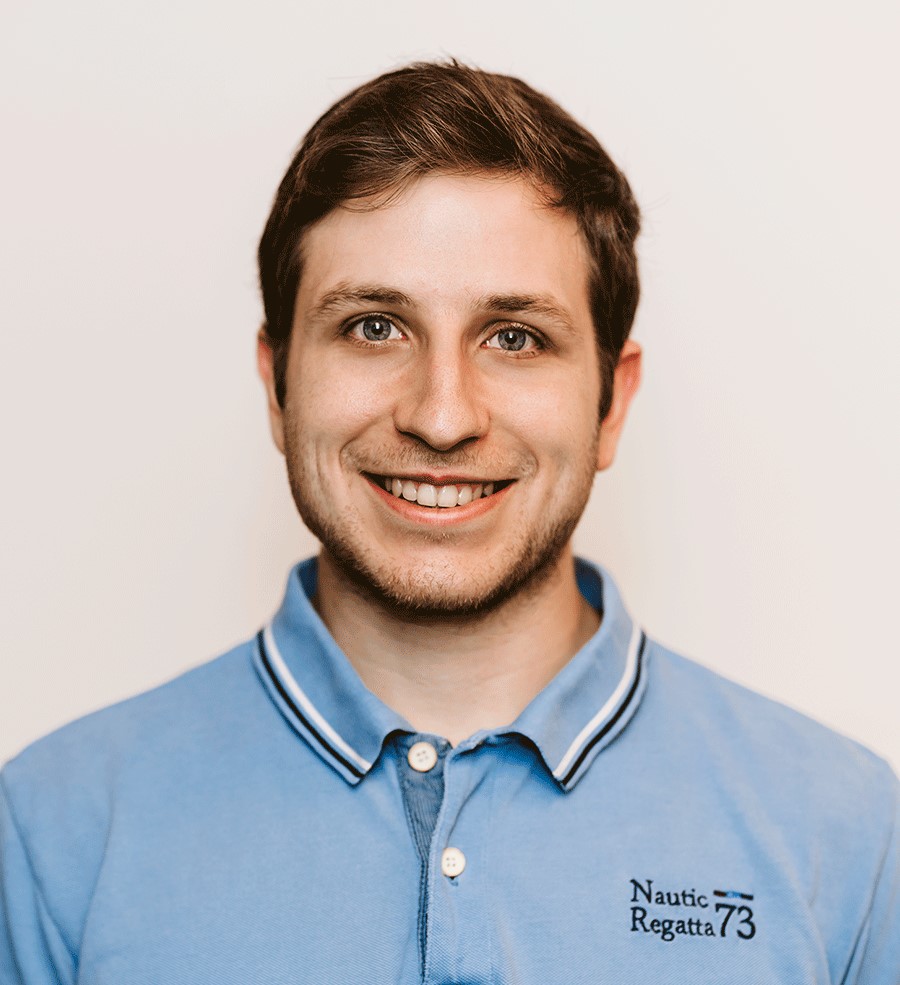}}]
{Andreu Cecilia} received the B.Eng. degree in industrial engineering, the double M.Sc. degree in
automatic control/industrial engineering and the PhD in automatic control from the
UPC, Barcelona, Spain, in 2017, 2020 and 2022, respectively. In 2022-2023,
he worked as a post-doctoral researcher at LAGEPP, Lyon, France. He is currently a post-doctoral researcher at UPC, Barcelona. His research interests include
observers, nonlinear systems and its application to energy systems and cyber-security.
\end{IEEEbiography}

\begin{IEEEbiography}
[{\includegraphics[width=1in,height=1.25in,clip,keepaspectratio]{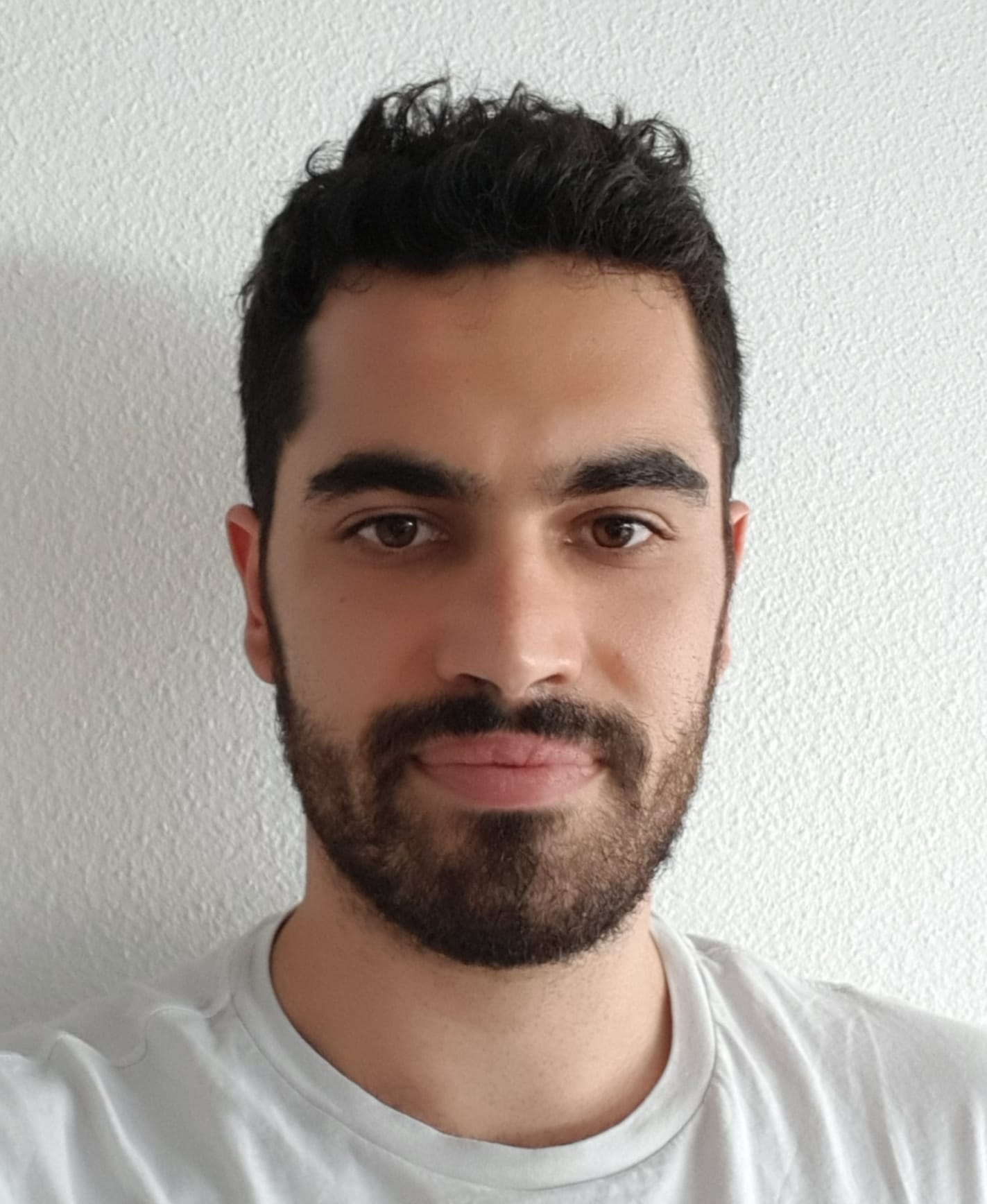}}]
{Samuele Zoboli} received the B.Sc. degree in electronics engineering from the University of Modena and Reggio Emilia, Italy, in 2016 and the M.Sc. degree in automation engineering from University of Bologna, Italy, in 2019. He received his
 Ph.D.  in Automatic Control in 2023 from University of Lyon 1, France, and he is now a post-doctoral researcher for University of Toulouse 3 at LAAS-CNRS, Toulouse, France. His research interests include stabilization of discrete-time nonlinear systems, multi-agent systems, reinforcement learning and control-applied artificial intelligence.
\end{IEEEbiography}

\begin{IEEEbiography}
[{\includegraphics[width=1in,height=1.25in,clip,keepaspectratio]{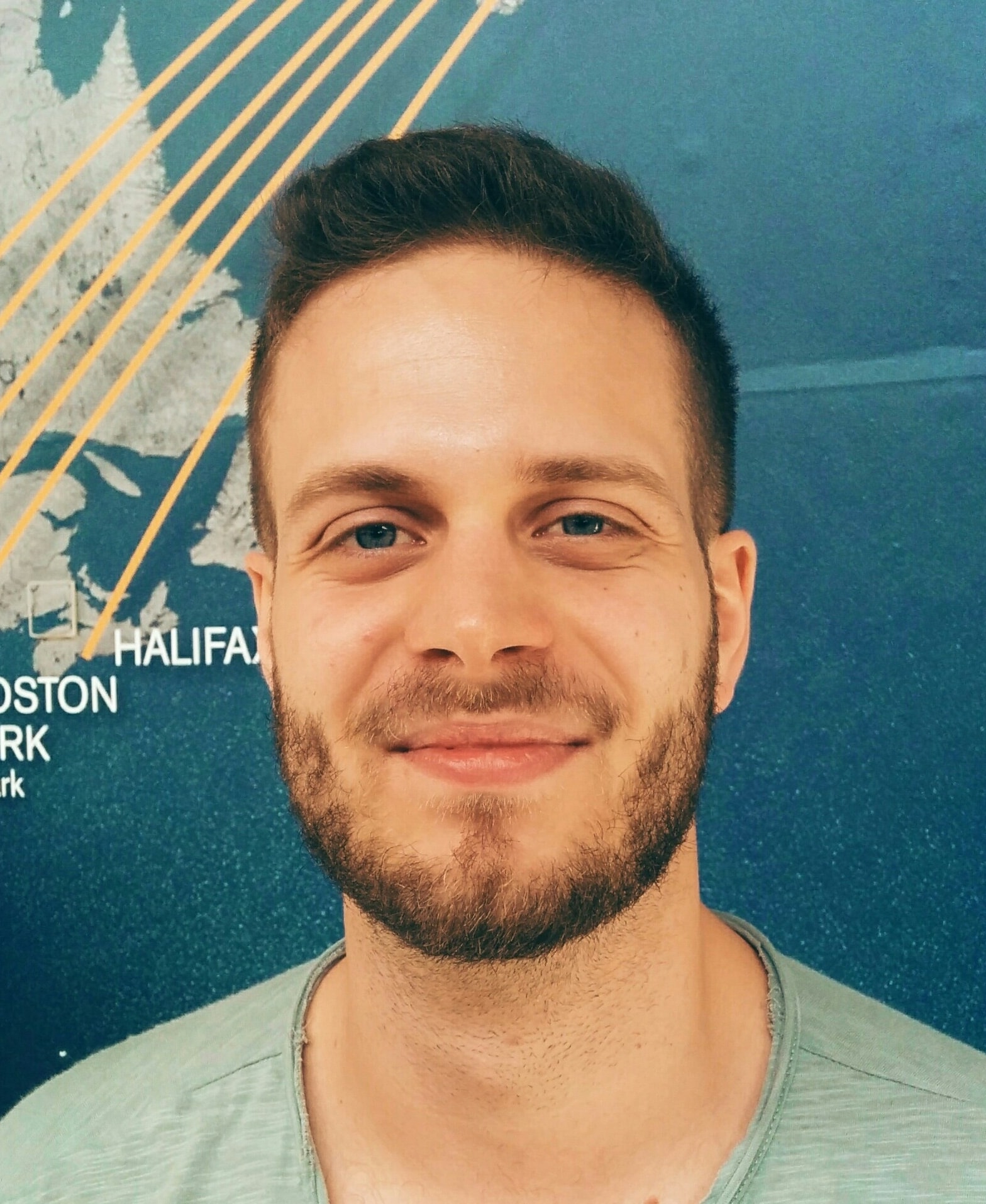}}]{Daniele Astolfi}
received the B.S. and M.S. degrees in
automation engineering from the University of Bologna,
Italy, in 2009 and 2012, respectively. He obtained a joint
Ph.D. degree in Control Theory from the University of
Bologna, Italy, and from Mines ParisTech, France, in 2016.
In 2016 and 2017, he has been a Research Assistant 
at the University of Lorraine (CRAN), Nancy, France.
Since 2018, he is a CNRS Researcher at 
LAGEPP, Lyon, France. 
His research interests include observer design, feedback
stabilization and output regulation  for nonlinear systems,
networked control systems, hybrid systems, and multi-agent systems.
 He serves as an associate 
editor of the IFAC journal Automatica.
He was a recipient of the 2016 Best Italian Ph.D. Thesis Award in Automatica given by Società Italiana Docenti e Ricercatori in Automatica (SIDRA, Italian Society of Professors and Researchers in Automation Engineering).
\end{IEEEbiography}

\begin{IEEEbiography}
[{\includegraphics[width=1in,height=1.25in,clip,keepaspectratio]{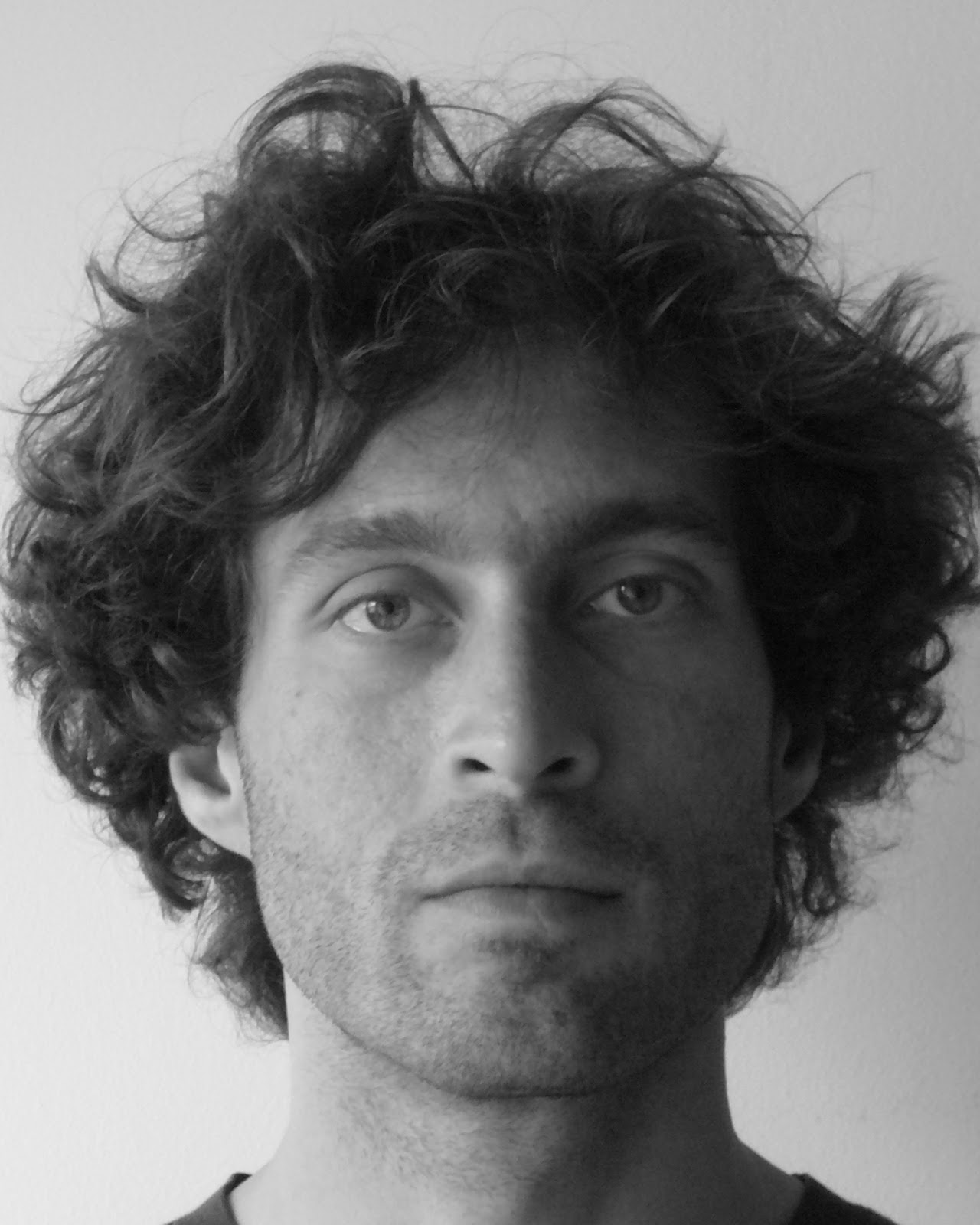}}]{Ulysse Serres}
received the Ph.D. degree in
mathematics from the Universite de Bourgogne,
Dijon, France, in 2006. Since 2009, he has been
with the Department of Electrical and Chemical
Engineering, Universite Claude Bernard Lyon 1,
Lyon, France, and with the LAGEPP Laboratory,
Lyon, France, where he is an Assistant Professor. His current research interests include
geometric optimal control, switching systems,
output feedback stabilization problems, and
(more recently) sub-Riemannian geometry.
\end{IEEEbiography}

\begin{IEEEbiography}
[{\includegraphics[width=1in,height=1.25in,clip,keepaspectratio]{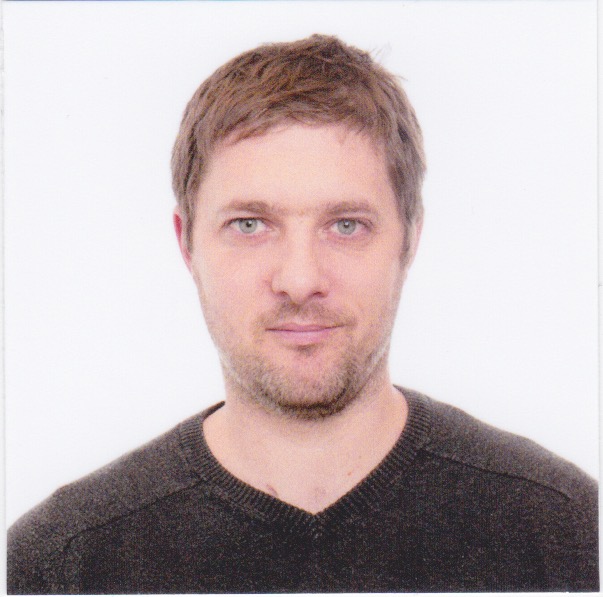}}]{Vincent Andrieu} is a Senior Reseacher at CNRS (Directeur de recherche). He graduated in applied mathematics from INSA de Rouen, France, in 2001. After working in ONERA (French aerospace research company), he obtained a PhD degree in control theory from Ecole
des Mines de Paris in 2005. In 2006, he had a research appointment at the Control and Power Group, Dept. EEE, Imperial College London. 
In 2008, he joined the CNRS-LAAS lab in Toulouse, France,
as a CNRS-charge de recherche. Since 2010,
 he has been working in LAGEPP-CNRS, Universite de
Lyon 1, France. In 2014, he joined the functional
analysis group from Bergische Universitat Wuppertal in Germany, for two sabbatical years. 
His main research interests are in the feedback stabilization of controlled dynamical nonlinear systems and state estimation problems. He is also interested in practical application of these theoretical problems, and especially in the field of aeronautics and chemical engineering. Since 2018 he is an associate editor of the IEEE Transactions on Automatic Control, 
and senior editor for System
\& Control Letters.
\end{IEEEbiography}

\end{document}